%% file: main_arxiv.tex
\documentclass{article}

 \usepackage[preprint]{neurips_2026}

\usepackage[utf8]{inputenc} 
\usepackage[T1]{fontenc}    
\usepackage{hyperref}       
\usepackage{url}            
\usepackage{booktabs}       
\usepackage{amsfonts}       
\usepackage{nicefrac}       
\usepackage{microtype}      
\usepackage{xcolor}         
\usepackage{subcaption}

\usepackage{amsmath}
\usepackage{amsthm}
\usepackage{amssymb}
\usepackage{thmtools}
\usepackage{cleveref}
\usepackage{algorithm}
\usepackage{comment}
\usepackage[noend]{algpseudocode} 
\usepackage{tikz}
\usepackage{pgfplots}
\usepgfplotslibrary{groupplots}
\pgfplotsset{compat=1.18}

\title{Optimal Learning-Augmented Algorithm\\for Online Bidding}

\author{%
  Changyeol Lee\footnotemark[1]\\
  Dept.\ of Computer Science and Engineering\\
  Yonsei University\\
  Seoul, South Korea \\
  \texttt{777john@yonsei.ac.kr} \\
  \And
  Dahoon Lee\footnotemark[1]\\
  Dept.\ of Mathematical Sciences \\
  Seoul National University \\
  Seoul, South Korea \\
  \texttt{dahoon46@snu.ac.kr} \\
  \AND
  Jongseo Lee\footnotemark[1]\\
  School of Computing \\
  KAIST \\
  Daejeon, South Korea\\
  \texttt{leejseo@kaist.ac.kr} \\
  \And
  Yongho Shin\footnotemark[1]\\
  Institute of Computer Science \\
  University of Wroc{\l}aw\\
  Wroc{\l}aw, Poland \\
  \texttt{yongho@cs.uni.wroc.pl} \\
  \And
  Changki Yun\thanks{Equal contribution}\\
  Dept.\ of Mathematical Sciences \\
  Seoul National University \\
  Seoul, South Korea \\
  \texttt{tamref.yun@snu.ac.kr} \\
}

\theoremstyle{plain}
\newtheorem{theorem}{Theorem}

\newtheorem{lemma}[theorem]{Lemma}
\newtheorem{corollary}[theorem]{Corollary}

\newtheorem{fact}[theorem]{Fact}

\theoremstyle{definition}
\newtheorem{definition}[theorem]{Definition}

\theoremstyle{remark}
\newtheorem{remark}[theorem]{Remark}

\input{macro}

\begin{document}
\maketitle
\begin{abstract}
    Recent advances in machine learning have spurred significant interest in learning-augmented algorithms, particularly for online optimization.
    A growing body of work has studied \emph{online bidding} in this framework, aiming to characterize the trade-off between robustness and consistency.
    While this trade-off is fully understood for deterministic algorithms, a gap between upper and lower bounds remains in the randomized setting.
    In this paper, we close this gap by presenting a Pareto-optimal randomized learning-augmented algorithm for this problem.
    Our approach introduces the notion of a \emph{bidding profile}, a novel framework for representing the distribution over bids generated by an algorithm.
    We show that any bidding algorithm can be reduced, without loss of generality, to one driven by a bidding profile, and we characterize the optimal profile via a system of delayed differential equations.
    Finally, we demonstrate the broader applicability of our approach by extending it to the \emph{linear search} problem, yielding a significant improvement over prior learning-augmented algorithms for linear search.
\end{abstract}

\input{arxiv_include/1_introduction}
\input{arxiv_include/2_preliminaries}
\input{arxiv_include/3_optimal_bidding}
\input{arxiv_include/4_linear_search}
\input{arxiv_include/5_conclusion}

\begin{ack}
CL: This work was partly supported by an IITP grant funded by the Korean Government (MSIT) (No. RS-2020-II201361, Artificial Intelligence Graduate School Program (Yonsei University)). 
DL and CY: The work of Dahoon Lee and Changki Yun was supported in part by the National Research Foundation of Korea (NRF) under Grant No. RS-2026-25484051 and by the New Faculty Startup Fund of Seoul National University.
YS: This work was supported by the NCN grant no.\ 2020/39/B/ST6/01641.
The authors acknowledge the use of LLMs for assisting with some proofs, editing the manuscript, and generating figures.  
All LLM-generated content has been carefully verified by the authors.
\end{ack}

\bibliographystyle{alpha}
\bibliography{refs}

\appendix
\input{arxiv_include/A_relwork}
\input{arxiv_include/B_moski_ob}
\input{arxiv_include/C_bid}
\input{arxiv_include/D_lin}
\input{arxiv_include/E_str2alg}

\end{document}

%% file: macro.tex
\newcommand{\cost}{\mathsf{cost}}

\newcommand{\roblin}{\rho_{\mathrm{LS}}}
\newcommand{\conslin}{\chi_{\mathrm{LS}}}

\newcommand{\scriptA}{\mathcal{A}}
\newcommand{\bbE}[1]{\mathbb{E}\left[#1 \right]}
\newcommand{\bbR}{\mathbb{R}}
\newcommand{\bbZ}{\mathbb{Z}}
\newcommand{\dd}{\mathrm{d}}

\newcommand{\cA}{\mathcal{A}}
\newcommand{\cB}{\mathcal{B}}
\newcommand{\cF}{\mathcal{F}}

\newcommand{\Unif}{\mathrm{Unif}}

\newcommand{\dir}{\mathrm{dir}}

\DeclareMathOperator{\sgn}{sgn}

%% file: arxiv_include/1_introduction.tex
\section{Introduction}
Recent advances in machine learning (ML) have spurred active research on \emph{learning-augmented algorithms}~\cite{mitzenmacher2022algorithms}.  
These algorithms take predictions as additional inputs and leverage them to solve computational problems.  
A key advantage of this paradigm is that it exploits the empirically high quality of ML predictions to achieve strong performance, while still providing robust guarantees regardless of prediction accuracy.  
Another distinguishing feature is that the prediction model is treated as a complete black box, making the framework independent of any specific prediction mechanism.

Research on learning-augmented algorithms has been particularly active in the context of \emph{online optimization}, where predictions can naturally serve as surrogates for unknown future inputs.  
Since the seminal work of Lykouris and Vassilvitskii~\cite{lykouris2021competitive}, a substantial body of work, including~\cite{purohit2018improving,bamas2020primal,azar2022online,elenter2024overcoming,shin25,choo2025learning}, has studied a variety of online optimization problems in the learning-augmented setting, demonstrating the effectiveness of this paradigm.

The \emph{online bidding} problem is a canonical online optimization problem in which an algorithm sequentially submits bids until reaching an unknown target value $T > 0$, with the objective of minimizing the total sum of submitted bids.  
This problem succinctly captures the core challenge of searching for an unknown target in online optimization.  
For example, it models the design of \emph{anytime algorithms} or \emph{interruptible algorithms}---which arise frequently in intelligent systems and real-time applications~\cite{zilberstein1996using,karaman2011anytime}---through the scheduling of \emph{contract algorithms}~\cite{russell1991composing,lopez2014optimal}.\footnote{%
A contract algorithm receives its execution time as part of the input and provides a performance guarantee only if allowed to run for the prescribed duration, whereas an anytime (or interruptible) algorithm can be interrupted at any time and return its best current solution.}  
In this setting, one schedules a sequence of contract algorithm executions with prescribed running times so as to achieve the best possible performance guarantee under an unknown interruption time $T$.  
The resulting trade-off between solution quality and execution time is precisely captured by online bidding.

The standard performance measure for online bidding algorithms is the \emph{competitive ratio}, defined as the worst-case ratio between the (expected) total sum of submitted bids and the optimal bid with hindsight (i.e., the target value $T$).  
A classical folklore algorithm for this problem is the \emph{doubling method}, which doubles the bid in each round until termination.  
Chrobak, Kenyon, Noga, and Young~\cite{chrobak2008incremental} showed that this algorithm is 4-competitive and that no deterministic algorithm can achieve a better competitive ratio. For randomized algorithms, one can randomly shift the initial bid to have $e^U$ with $U \sim \Unif(0,1]$, and then increase the bid by a factor of $e$ in every round.
This yields an $e$-competitive algorithm, which was also shown to be optimal among all randomized algorithms~\cite{chrobak2008incremental}.

Following the growing interest in learning-augmented algorithms, online bidding has been extensively studied in this framework, particularly in the setting where a single value prediction $P$ is provided as an estimate of the unknown target value $T$~\cite{anand2021regression,im2023online,angelopoulos2024online,shin25,angelopoulos2025learning}.  
A central focus of this line of research is the trade-off between \emph{robustness} and \emph{consistency}~\cite{lykouris2021competitive,purohit2018improving}.  
Informally, robustness measures the competitive ratio guaranteed regardless of the prediction quality, whereas consistency captures the performance when the prediction is accurate, i.e., when $P = T$.

One might expect that a Pareto-optimal learning-augmented algorithm can be obtained by adapting an optimal competitive algorithm without learning augmentation.  
This is indeed the case for deterministic online bidding: Angelopoulos, D{\"u}rr, Jin, Kamali, and Renault~\cite{angelopoulos2024online} and Shin, Lee, Lee, and An~\cite{shin25}\footnote{Shin et al.~\cite{shin25} studied multi-option ski rental, but their techniques extend naturally to online bidding. See Appendix~\ref{app:moskiob} for the equivalence between the two problems.} independently showed that a simple modification of the doubling method yields a deterministic Pareto-optimal learning-augmented algorithm.  
In particular, the algorithm achieves strong consistency by deterministically including the prediction $P$ into its sequence of geometrically increasing bids.

In contrast, a straightforward extension is less apparent for the randomized $e$-competitive algorithm, since the random shift in the bid sequence makes it difficult to deterministically incorporate the prediction $P$.  
To address this issue, Shin et al.~\cite{shin25} introduced a ``prediction-following'' phase, in which the algorithm explicitly submits the bid $P$ during the execution of the randomized algorithm, yielding the currently best known trade-off.  
Alternatively, Angelopoulos and Simon~\cite{angelopoulos2025learning} carefully interpolated between the optimal randomized competitive algorithm and the optimal deterministic learning-augmented algorithm through controlled randomization.  
Despite these advances, a gap remains between the best-known upper and lower bounds on the robustness-consistency trade-off~\cite{shin25}, leaving the characterization of the Pareto-optimal trade-off as an open question.  
See Figure~\ref{subfig:consistency-robustness-tradeoff-bidding} for a comparison of the known trade-offs.

\input{arxiv_include/1a_introduction_contribution}

%% file: arxiv_include/1a_introduction_contribution.tex
\input{arxiv_include/1b_introduction_tikz}
\subsection{Our contributions}
In this paper, we resolve this open question by presenting a randomized learning-augmented algorithm that achieves the Pareto-optimal robustness-consistency trade-off (see \Cref{subfig:consistency-robustness-tradeoff-bidding}).

\begin{theorem} \label{thm:main-bidding}
    For any trade-off parameter $s \in (0, 1]$, there exists a randomized $\rho(s)$-robust $\chi(s)$-consistent online bidding algorithm with
    \[
    \rho(s):=\frac{e^s}{s}
    \quad\text{ and }\quad
    \chi(s):=\begin{cases}
    (e^s-1)/s, & 0< s < \ln 2, \\
    \xi(s)/s, & \ln 2 \leq s \leq 1,
    \end{cases}
    \]
    where $\xi(s)$ is the unique solution $ \xi\in[1,e] $ of $\xi(2-\ln \xi)=e^s$ for $s \in [\ln 2, 1]$.
    Moreover, the robustness-consistency trade-off of this algorithm is Pareto-optimal.
\end{theorem}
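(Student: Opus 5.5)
By scale invariance I would normalize $P=1$ and work in logarithmic coordinates. Any randomized algorithm then induces, for each $x\in\bbR$, a distribution over the bids submitted up to the time target $e^x$ is first reached. The plan is to represent an algorithm by a \emph{bidding profile}: a nondecreasing function describing the density of bids on the log scale. We first show that, without loss of generality, an algorithm is determined by its marginal bid distribution. The algorithm then draws a single uniform random shift $u$ and bids at the points where a cumulative function $F$ (increasing by $1$ per bid) crosses $u+\bbZ$. This is the same mechanism as in the $e$-competitive algorithm, where $F(x)=x$.

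\textbf{Cost in terms of the profile.} For target $T=e^t$, the expected cost is an integral of $e^x$ against the bid density up to the first bid exceeding $t$. This gives an explicit functional. Robustness asks that this functional be at most $\rho e^t$ for all $t$; consistency asks that it be at most $\chi$ at $t=0$.

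\textbf{Upper bound.} For $s\in(0,\ln 2)$ I would take the profile with slope $s$ on the log scale, i.e., bid ratio $e^{1/s}$. Its random shift is placed so that the bid $P$ itself is always submitted, at a ``phase'' chosen randomly. Checking the robustness constraint $e^s/s$ should reduce to the classical calculation for geometric bids with a uniform shift. Consistency should come out as $\int_0^s e^{x}\,dx/s=(e^s-1)/s$, since the bids just below $P$ accumulate geometrically. For $s\ge\ln2$ this naive profile violates robustness near $P$. There I would modify the profile on an interval around $0$ of length determined by $\xi$ so that the robustness constraint is tight everywhere. Tightness for all $t$ gives a delayed differential equation: the cost up to $t$ depends on the profile on $[t-1,t]$ in rescaled units. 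I would solve it piecewise and obtain the matching condition $\xi(2-\ln\xi)=e^s$ by continuity at the breakpoint. The resulting consistency is $\xi/s$.

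\textbf{Lower bound (main obstacle).} Using the reduction to profiles, any $\rho(s)$-robust algorithm satisfies the infinite family of linear constraints (cost at $t$ at most $\rho e^t$) together with normalization. I would take a Lagrangian / LP-dual approach, combining the robustness constraints for $t\in[-1,0]$ (and possibly beyond) with weights $w(t)$ so that their sum lower-bounds the cost at $t=0$. The weights are chosen as the adjoint solution of the same delayed ODE, and complementary slackness with the constructed profile certifies optimality. The hard part is making the reduction to profiles rigorous for arbitrary randomized algorithms, including those that adapt across rounds. The intended route is to show that expected cost depends only on the marginal bid measure and that any such measure is realizable by a uniform-shift construction. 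The hard part of the lower bound itself is verifying that the dual weights are nonnegative in the regime $s\ge\ln2$, where the two-phase structure appears.
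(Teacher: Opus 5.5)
Your overall architecture (uniform-shift profiles, a reduction from arbitrary strategies, a tight delay equation, an adjoint weight for the lower bound) is the right one. However, the reduction step as you state it is false: the expected cost is \emph{not} determined by the aggregate bid measure $\mu=\sum_n\mathrm{law}(Q_n)$. Only the pre-target part $\mathbb{E}[\sum_n Q_n\mathbf{1}_{Q_n<T}]=\int_{(0,T)}t\,\mathrm{d}\mu(t)$ is. The final bid $R$ (the first bid $\ge T$) depends on how the bids are coupled. For example, mix $\{4^n\}_{n\in\mathbb{Z}}$ and $\{2\cdot 4^n\}_{n\in\mathbb{Z}}$ with probability $\tfrac12$ each, or mix $\{2^k:k\equiv 0,3 \pmod 4\}$ and $\{2^k:k\equiv 1,2 \pmod 4\}$. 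Both give the same $\mu$ (mass $\tfrac12$ on every power of $2$) and the same pre-target cost, yet at $T=1^+$ the expected final bids are $3$ and $5$. The missing idea is an \emph{inequality}. For every $T$ and every interval $I\subseteq[T,\infty)$ one has $\Pr[R\in I]\le\mu(I)$, and $R$ carries unit mass. So $\mathbb{E}[R]$ is at least the mean of the lowest unit-mass packing of $\mu$ above $T$, namely $\int_{\tau(T)}^{\tau(T)+1}G$ with $\tau(T)=\sup\{t:G(t)<T\}$, and the uniform shift attains exactly this. Hence the profile-driven strategy is no worse than the original for all $T$ simultaneously. For the inverse profile $G$ to exist you also need $\mu$ locally finite with infinite mass on both $(0,1)$ and $[1,\infty)$; the paper extracts this from robustness as $\tfrac12\le\mu([T,(2\rho-1)T))\le\rho(2\rho-1)$.

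The upper-bound construction would not give $(\rho(s),\chi(s))$ either. Robustness $e^s/s$ needs bid ratio $e^{s}$ (a uniformly shifted geometric sequence with ratio $r$ has robustness $r/\ln r$), not $e^{1/s}$. More importantly, below $P$ the profile cannot be geometric. It is the solution on $(-\infty,0]$ of the tight equation $\rho\,G(x)=\int_{-\infty}^{x+1}G(t)\,\mathrm{d}t$; for $s<\ln 2$ it is linear on $[-1,0)$, namely $G(x)=1-e^{-s}+se^{-s}x$. By contrast, a continuation $G(x)=ae^{sx}$ on $x\le0$ with $G\equiv1$ on $(0,1]$ either violates robustness just right of $x=-1$ (if $a<1$) or has consistency $a/s+1\ge 1+1/s>(e^s-1)/s$.

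The consistency, and the equation $\xi(2-\ln\xi)=e^s$, come from multiplying $\rho A'(x)=A(x+1)$, where $A(x):=\int_{-\infty}^{x}G$, by $e^{-sx}$ and integrating over $(-\infty,0]$. This gives $\chi=e^{s}L+e^{s}\int_0^1 e^{-sx}\phi(x)\,\mathrm{d}x$ with $\phi:=G|_{(0,1]}$ and $L:=\lim_{x\to-\infty}e^{-sx}A(x)$. You must select the solution with $L=0$ and prove it is nonnegative and nondecreasing, which the method of steps alone does not provide. The paper does this by monotone iteration of $H\mapsto\frac{1}{\rho}\int_{-\infty}^{x+1}H$ from below, dominated by the supersolution $Ke^{cx}$ with $c\in(s,1)$. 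Substituting $\phi=\max(1,s\chi e^{s(x-1)})$ gives $\chi=(e^s-1)/s$ when $s\chi\le1$. Otherwise it gives the fixed-point equation $s\chi(2-\ln(s\chi))=e^s$; this is not a continuity condition at a breakpoint. Also, robustness is necessarily slack on the flat part of $\phi$. A profile tight on all of $\mathbb{R}$ is continuous (since $G(x)=A(x+1)/\rho$), which forces $G(0)=1$ and $\chi=\rho$.

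In the lower bound, your adjoint weight is $e^{-sx}$, positive in both regimes, so the sign of the weights is not the obstacle. The weight must, however, act on all of $(-\infty,0]$; constraints on $[-1,1]$ alone certify a strictly weaker bound. What is actually needed is the growth estimate $A'\ge sA$, bootstrapped from robustness via $c_0=1/(\rho-1)$, $c_n=e^{c_{n-1}}/\rho\uparrow s$; this makes the boundary term at $-\infty$ well defined and nonnegative. You also need the $\chi$-dependent bound $\phi(x)\ge s\chi e^{s(x-1)}$, obtained from robustness on $(0,1]$ and the same growth estimate. That bound is exactly where the regime $s\ge\ln 2$ enters.
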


To achieve this goal, we introduce a novel framework, called a \emph{bidding profile}, for representing the probability distribution over the bids generated by an algorithm.  
Intuitively, a bidding profile is specified by a non-decreasing function $G : \bbR \to (0, \infty)$, from which the bid sequence is generated through a random shift.  To gain intuition for the role of a bidding profile, consider the classical randomized $e$-competitive algorithm. Recall that the algorithm chooses the initial bid as $e^U$ with $U \sim \Unif(0,1]$, and subsequently multiplies the bid by a factor of $e$ in every round until termination. This execution can be naturally associated with the exponential function $G(t) := e^t$, in the sense that the bids generated by the algorithm form a subsequence of the infinite random sequence $\{G(n+U)\}_{n \in \bbZ}$, where $U \sim \Unif(0,1]$.

Motivated by this perspective, we study the family of online bidding algorithms induced by bidding profiles $G$ and identify the profile whose corresponding algorithm achieves the optimal robustness-consistency trade-off within this family via a tailored system of delayed differential equations.  
We then show that any online bidding algorithm can be transformed into one induced by a bidding profile without loss in performance guarantees, thereby establishing the Pareto-optimality of our bidding-profile-based algorithm.

We further demonstrate the versatility of our profile-based approach by extending it to the \emph{linear search} problem, yielding a significant improvement in the robustness--consistency trade-off for randomized learning-augmented linear search algorithms over the previous state of the art~\cite{angelopoulos2025learning}.  
However, our lower-bound argument for online bidding does not extend naturally to the linear search setting, and thus establishing the Pareto-optimality of the resulting algorithm remains an open problem.  
Instead, we derive a nearly tight lower bound using Yao's principle, building on ideas from prior work~\cite{angelopoulos2025learning,shin25}.  
See~\Cref{subfig:linear-search-consistency-robustness-search} for a comparison of our upper and lower bounds.

During the preparation of this paper, we became aware of concurrent and independent work by Degryse, Zaakour, D\"urr, and Angelopoulos~\cite{degryse2026} on learning-augmented online bidding.  
They also establish the Pareto-optimal robustness-consistency trade-off for robustness ratios at least $2.885$, covering almost the entire range of robustness.
In addition, they extend their results to the incremental median problem.

Further related work is discussed in Appendix~\ref{app:relwork}.

%% file: arxiv_include/1b_introduction_tikz.tex
\pgfplotsset{set layers}

\begin{figure}[t]
\centering
\begin{subfigure}{0.49\textwidth}
\centering
\begin{tikzpicture}
\begin{axis}[
    width=\textwidth,
    xlabel={Consistency $\chi$},
    ylabel={Robustness $\rho$},
    xmin=1, xmax=2.8,
    xmode=log,
    log basis x={2.718281828},
    xtick={1, 1.5, 2, exp(1)},
    xticklabels={$1$, $1.5$, $2$, $e$},
    ymin=2.4, ymax=30,
    ymode=log,
    log basis y={2.718281828},
    ytick={exp(1), 5, 10, 20},
    yticklabels={$e$, $5$, $10$, $20$},
    grid=both,
    legend pos=north east,
    samples=300,
    domain=0.035:0.69314718056,
    font=\footnotesize,
    set layers
]

\addplot[
    on layer=axis foreground,
    thick,
    blue,
    variable=\s,
]
(
    {(exp(\s)-1)/\s},
    {exp(\s)/\s}
);
\addlegendentry{Our alg \& LB}

\addplot[
    on layer=axis foreground,
    thick,
    blue,
    variable=\y,
    forget plot,
    domain=1:2.71828182846,
]
(
    {\y / ln(\y*(2-ln(\y)))},
    {\y*(2-ln(\y)) / ln(\y*(2-ln(\y)))}
);

\addplot[
thick,dashed,orange,
variable=\t,
domain=0.01:1,
]
(
{1+\t},
{max(2.71828182846,(1+\t)^2/(2*\t))}
);
\addlegendentry{LB~\cite{shin25}}


\addplot[
    thick,dashdotted,red,
    variable=\t,
    domain=0.01:0.08612677965,
]
(
    {1+\t},
    {2.71828182846*(1+\t)^2/(4*\t)}
);
\addlegendentry{Alg~\cite{shin25}}
\addplot[
    thick,dashdotted,red,
    smooth,
    forget plot,
]
coordinates {
    (1.020000,35.351255)
    (1.025000,28.558948)
    (1.030000,24.031877)
    (1.035000,20.799225)
    (1.040000,18.375585)
    (1.050000,14.984529)
    (1.060000,12.726089)
    (1.070000,11.114860)
    (1.080000,9.908137)
    (1.086127,9.308020)
    (1.113329,7.470894)
    (1.140532,6.389804)
    (1.167735,5.680504)
    (1.194937,5.180589)
    (1.222140,4.809850)
    (1.249342,4.524206)
    (1.276545,4.297470)
    (1.303747,4.113135)
    (1.330950,3.960298)
    (1.358153,3.831483)
    (1.385355,3.721398)
    (1.412558,3.626203)
    (1.439760,3.543042)
    (1.466963,3.469755)
    (1.494166,3.404675)
    (1.521368,3.346498)
    (1.548571,3.294191)
    (1.575773,3.246920)
    (1.602976,3.204010)
    (1.630178,3.164906)
    (1.657381,3.129147)
    (1.684584,3.096348)
    (1.711786,3.066185)
    (1.738989,3.038381)
    (1.766191,3.012700)
    (1.793394,2.988938)
    (1.820597,2.966920)
    (1.847799,2.946492)
    (1.875002,2.927519)
    (1.902204,2.909883)
    (1.929407,2.893479)
    (1.956609,2.878215)
    (1.983812,2.864008)
    (2.011015,2.850784)
    (2.038217,2.838477)
    (2.065420,2.827026)
    (2.092622,2.816377)
    (2.119825,2.806482)
    (2.147028,2.797297)
    (2.174230,2.788779)
    (2.201433,2.780893)
    (2.228635,2.773605)
    (2.255838,2.766883)
    (2.283040,2.760699)
    (2.310243,2.755027)
    (2.337446,2.749842)
    (2.364648,2.745121)
    (2.391851,2.740845)
    (2.419053,2.736993)
    (2.446256,2.733548)
    (2.473459,2.730492)
    (2.500661,2.727812)
    (2.527864,2.725491)
    (2.555066,2.723516)
    (2.582269,2.721875)
    (2.609471,2.720555)
    (2.636674,2.719547)
    (2.663877,2.718838)
    (2.691079,2.718419)
    (2.718281,2.718282)
};

\addplot[
    thick,
    densely dotted,
    brown,
    smooth,
]
coordinates {
    (2.718282,2.718282)
    (2.609601,2.800000)
    (2.380722,3.000000)
    (2.193328,3.200000)
    (2.037254,3.400000)
    (1.905354,3.600000)
    (1.792474,3.800000)
    (1.694808,4.000000)
    (1.609490,4.200000)
    (1.534325,4.400000)
    (1.469338,4.600000)
    (1.420204,4.800000)
    (1.381966,5.000000)
    (1.313859,5.500000)
    (1.267949,6.000000)
    (1.208712,7.000000)
    (1.171573,8.000000)
    (1.127017,10.000000)
    (1.101021,12.000000)
    (1.077384,15.000000)
    (1.055728,20.000000)
    (1.043561,25.000000)
    (1.035760,30.000000)
    (1.026334,40.000000)
};
\addlegendentry{Alg~\cite{angelopoulos2025learning}}

\addplot[
    only marks,
    mark=*,
    mark size=1.8pt,
    violet,
]
coordinates {
    ({exp(1)}, {exp(1)})
};
\addlegendentry{Comp alg}
\end{axis}
\end{tikzpicture}
\caption{Online bidding}
\label{subfig:consistency-robustness-tradeoff-bidding}
\end{subfigure}
\hfill
\begin{subfigure}{0.49\textwidth}
\centering
\begin{tikzpicture}
\pgfmathsetmacro{\Btwo}{4.59112147667} 

\begin{axis}[
    width=\textwidth,
    xlabel={Consistency $\chi$},
    ylabel={Robustness $\rho$},
    xmin=1, xmax=4.8,
    xmode=log,
    log basis x={2.718281828},
    xtick={1, 2, 3, 4, 4.59112147667},
    xticklabels={$1$, $2$, $3$, $4$, $\rho_{\mathrm{LS}}^*$},
    ymin=4.3, ymax=50,
    ymode=log,
    log basis y={2.718281828},
    ytick={4.59112147667, 10, 20, 40},
    yticklabels={$\rho_{\mathrm{LS}}^*$, $10$, $20$, $40$},
    grid=both,
    legend pos=north east,
    samples=300,
    font=\footnotesize,
    set layers
]


\addplot[
    on layer=axis foreground,
    thick,
    solid,
    blue,
    variable=\z,
    domain=0.0415:0.58774315916,
]
(
    {1+(exp(2*\z)-1-(2*\z))/(\z*(1+exp(\z)))},
    {1+(1+exp(\z))/(\z)}
);


\addplot[
    on layer=axis foreground,
    thick,
    solid,
    blue,
    smooth,
    forget plot,
]
coordinates {
    (1.646700,5.763852)
    (1.660611,5.707114)
    (1.674589,5.652909)
    (1.688649,5.601101)
    (1.702804,5.551566)
    (1.717069,5.504185)
    (1.731457,5.458849)
    (1.745985,5.415458)
    (1.760664,5.373917)
    (1.775511,5.334137)
    (1.790540,5.296037)
    (1.805764,5.259540)
    (1.821201,5.224573)
    (1.836864,5.191069)
    (1.852771,5.158965)
    (1.868938,5.128202)
    (1.885381,5.098724)
    (1.902120,5.070478)
    (1.919172,5.043415)
    (1.936557,5.017489)
    (1.954297,4.992655)
    (1.972412,4.968872)
    (1.990926,4.946102)
    (2.009862,4.924308)
    (2.029248,4.903453)
    (2.049111,4.883507)
    (2.069480,4.864436)
    (2.090387,4.846213)
    (2.111867,4.828808)
    (2.133957,4.812195)
    (2.156696,4.796349)
    (2.180129,4.781246)
    (2.204302,4.766863)
    (2.229268,4.753179)
    (2.255084,4.740173)
    (2.281813,4.727825)
    (2.309524,4.716117)
    (2.338295,4.705031)
    (2.368212,4.694550)
    (2.399373,4.684659)
    (2.431886,4.675341)
    (2.465875,4.666582)
    (2.501483,4.658369)
    (2.538871,4.650687)
    (2.578229,4.643525)
    (2.619777,4.636870)
    (2.663774,4.630710)
    (2.710531,4.625035)
    (2.760421,4.619834)
    (2.813902,4.615097)
    (2.871545,4.610813)
    (2.934072,4.606975)
    (3.002425,4.603573)
    (3.077863,4.600599)
    (3.162133,4.598045)
    (3.257779,4.595902)
    (3.368750,4.594164)
    (3.501770,4.592823)
    (3.670121,4.591874)
    (3.908562,4.591309)
    (4.591121,4.591121)
};
\addlegendentry{Our alg}


\addplot[
    thick,
    dashed,
    teal,
    variable=\t,
    domain=0.005:1.1,
]
(
    {1 + 2*\t*(\t+2)^2/(4+3*\t-\t^3)},
    {max(\Btwo, 1 + 4*(\t^2+\t+1)/(\t*(4+3*\t-\t^3)))}
);
\addlegendentry{Our LB}


\addplot[
    thick,
    densely dotted,
    brown,
    smooth,
]
coordinates {
    (4.591121,4.591121)
    (4.359176,4.700000)
    (4.168129,4.800000)
    (3.838294,5.000000)
    (3.232367,5.500000)
    (2.824295,6.000000)
    (2.533882,6.500000)
    (2.318252,7.000000)
    (2.022091,8.000000)
    (1.830167,9.000000)
    (1.696714,10.000000)
    (1.524655,12.000000)
    (1.380779,15.000000)
    (1.260025,20.000000)
    (1.196958,25.000000)
    (1.158361,30.000000)
    (1.132351,35.000000)
    (1.113651,40.000000)
    (1.088580,50.000000)
    (1.053263,80.000000)
    (1.042070,100.000000)
    (1.027576,150.000000)
    (1.020509,200.000000)
};
\addlegendentry{Alg~\cite{angelopoulos2025learning}}


\addplot[
    only marks,
    mark=*,
    mark size=1.8pt,
    violet,
]
coordinates {
    (\Btwo,\Btwo)
};
\addlegendentry{Comp alg}

\end{axis}
\end{tikzpicture}
\caption{Linear search}
\label{subfig:linear-search-consistency-robustness-search}
\end{subfigure}
\caption{Robustness-consistency trade-offs of \textbf{(a)} online bidding and \textbf{(b)} linear search. All axes use logarithmic scaling.
\textbf{(a)}~The solid blue curve depicts our Pareto-optimal trade-off (\Cref{thm:main-bidding}), while the dash-dotted red and dashed orange curves represent the upper and lower bounds from \cite{shin25}, respectively. The dotted brown curve shows the trade-off achieved by \cite{angelopoulos2025learning}. The violet point corresponds to the classical $e$-competitive algorithm without learning augmentation.
\textbf{(b)}~The solid blue curve depicts the trade-off achieved by our algorithm \Cref{thm:lin:Kbranch_construction}), while the dashed teal curve represents our lower bound(\Cref{thm:lin:lb}). The dotted brown curve shows the upper bound from \cite{angelopoulos2025learning}. The violet point corresponds to the classical linear search algorithm with competitive ratio $\rho_{\mathrm{LS}}^* := 1 + 1/W_0(e^{-1})\approx4.59$.
}
\label{fig:consistency-robustness-tradeoff}
\end{figure}

%% file: arxiv_include/2_preliminaries.tex
\section{Preliminaries}\label{sec:prelim}
This section formally defines the problems considered in this paper, along with the performance measures and the notation used throughout this paper.

\paragraph{Notation} For a real function $f$ and a real number $x$, we denote $f(x+) = \lim_{y \downarrow x} f(y)$, and $f(x-) = \lim_{y \uparrow x} f(y)$. For an event $\mathcal{E}$, we denote the indicator function with $\mathbf{1}_{\mathcal{E}}$. For a function $G$ and a subset of its domain $I$, we denote the restriction of $G$ over $I$ as $\left. G \right\rvert_{I}$.
We denote by $W_0$ the principal branch of the lambert $W$ function.

\paragraph{Problem definitions}
In the \emph{online bidding} problem, there is an unknown target value $T>0$.
A bidder repeatedly submits (strictly increasing) bids $b\in \bbR_{>0}$ until a bid $b \ge T$ is submitted.
Prior to termination, the bidder receives no information about $T$ other than the fact that all previous bids were smaller than $T$.
The cost is defined as the sum of all submitted bids up to and including the first bid of value at least
$T$.
Naturally, the objective is to minimize the cost.
In the \emph{linear search} problem,
there is an unknown target point $T\in\bbR\setminus\{0\}$, 
and the searcher starts at the origin.
One can view the search as a sequence of signed \emph{excursion endpoints}. In a single excursion to a signed endpoint $z \in \bbR \setminus \{0\}$, the searcher starts at the origin and continuously moves toward $z$. If the searcher reaches the target on the way, then the search immediately stops there; on the other hand, the searcher returns to the origin, ending this excursion. The cost is defined as the total distance traveled until termination. It is also without loss of generality to assume
that consecutive excursions have opposite directions (i.e., signs) and that the magnitudes of excursion endpoints are strictly increasing in each direction.

In the \emph{learning-augmented} setting, the instance additionally contains a prediction for the unknown target.
For online bidding, the prediction is a value $P>0$ on the target value $T$; for linear search, the prediction is a point $P \in \bbR \setminus \{0\}$ on the target point $T$.
Note that the prediction can be inaccurate.

From this point, we focus on online bidding due to space constraints.  
The analogous discussion for linear search is deferred to Appendix~\ref{app:lin:prelim}.

\paragraph{Algorithms and strategies}
Note that an algorithm for online bidding can be represented by an infinite sequence of increasing real numbers.
However, since the algorithm should be competitive against every possible $T > 0$ which can be any positive real number, it is difficult to define the \emph{first} bid submitted by the algorithm.
To address this issue, one standard assumption is bounding $T$ from below by some constant (say, $1$)~\cite{angelopoulos2024online,angelopoulos2025learning,shin25}.

In this paper, however, we introduce a surrogate model for an algorithm, called a \emph{strategy}, defined as a bi-infinite sequence $ b = \{b_n\}_{n \in \bbZ}$ of strictly increasing bids $0 < \cdots < b_{-1} < b_0 < b_1 < \cdots$.
The cost incurred by a deterministic strategy $b$ given a target value $T > 0$, denoted by $\cost_b(T)$, is then defined as the prefix sum of bids up to and including the first bid at least $T$, i.e., 
\[
    \textstyle 
    \cost_b(T) := \sum_{n \leq n_\ast} b_n, \text{ where $n_\ast := \min \{ n \in \bbZ : b_n \geq T \}$}.
\]
A randomized strategy $B$ is naturally defined as a probability distribution over deterministic strategies.
The expected cost incurred by $B$ given $T$ is denoted by $\cost_B(T)$.

In what follows, we study the performance guarantees of strategies with the cost function defined as above.
The main reason is that it provides substantial simplicity in our analysis.
In fact, even though a strategy cannot serve as an algorithm by itself due to its bi-infinite nature, one can easily generate from a strategy an algorithm which incurs no more than the strategy (in expectation) in the setting where the target value is bounded from below by some constant.
Moreover, we can also argue that any lower bound on the trade-off for strategies also serves as a valid lower bound for algorithms since a strategy can be viewed as an algorithm given a sufficiently large prediction. A detailed discussion on this justification is provided in Appendix~\ref{app:str2alg}.

\paragraph{Normalization and performance measures}
As the universe of $T$ is $(0, \infty)$, the problem is homogeneous under scaling.
We therefore assume from now that the prediction is given as $1$ (by dividing the whole universe by $P$).
Moreover, as a strategy is bi-infinite, we can also freely shift the indices of the strategy.
Hence, we assign index 0 for the first bid of value at least $1 = P$.

To measure the performance of learning-augmented algorithms, we adopt the \emph{robustness-consistency analysis} \cite{lykouris2021competitive,mitzenmacher2022algorithms}.
We say a (randomized) strategy $B$ is $\chi$-\emph{consistent} if $\cost_B(1) \leq \chi$ and $\rho$-\emph{robust} if $\cost_B(T) \leq \rho \, T$ for every $T > 0$.
A pair $(\rho, \chi)$ is \emph{Pareto-optimal} if there exists a $\rho$-robust $\chi$-consistent strategy, but no $\rho'$-robust $\chi$-consistent strategy for any $\rho' < \rho$.

%% file: arxiv_include/3_optimal_bidding.tex
\section{Pareto-optimal trade-off for online bidding} \label{sec:bidding_optimal}
In this section, we present a Pareto-optimal learning-augmented strategy
for online bidding.
To this end, we introduce in \Cref{subsec:profile_function_framework} a novel framework called a \emph{bidding profile} for representing the distribution over bids made by a strategy.
In \Cref{subsec:bid:tight}, we construct a bidding profile such that the strategy driven by the profile attains the best-possible trade-off among all profile-driven strategies.
Finally in \Cref{subsec:bid:alg2profile}, we prove that any 
bidding strategy
can be reduced to a bidding profile, implying the Pareto-optimality of the strategy driven by our bidding profile.

\subsection{Bidding profiles} \label{subsec:profile_function_framework}

To motivate our framework, we revisit the well-known $e$-competitive algorithm for online bidding~\cite{chrobak2008incremental}. Under the assumption that $T \geq 1$, it samples $U \sim \Unif(0, 1]$ at the beginning, submits the first bid $e^U$, and repeatedly increases the bid by a factor $e$ until termination.
Note that we can represent the random execution of the strategy corresponding to this algorithm by a single non-decreasing function $G(x) := e^x$ as follows:
For a shared random source $U \sim \Unif(0, 1]$, the strategy is defined as $\{G(n + U)\}_{n \in \bbZ}$.

Let us now analyze the competitive ratio of this strategy, with the assumption lifted.
For the target $T > 0$, we define $x := G^{-1}(T) = \ln T$ and write $x = q + r$ for some (uniquely determined) $q \in \bbZ$ and $r \in (0, 1]$.
Note that the last bid submitted by the strategy is $G(q+U)$ if $U \geq r$, and $G(q + 1 + U)$ if $U < r$.
Hence, the expected sum of total bids submitted by the strategy is computed as follows:
\begin{align*}
    \int_0^r \sum_{n \leq q+1} G(n + u) \, \dd u
    + \int_{r}^1 \sum_{n \leq q} G(n + u) \, \dd u
    &
    = \int_0^1 \sum_{n \leq q} G(n + u) \, \dd u 
    + \int_0^r G(q + 1 + u) \, \dd u
    \\&
    = \int_{-\infty}^{q+1+r} G(t) \, \dd t
    = \int_{-\infty}^{x + 1} G(t) \, \dd t.
\end{align*}
Observe that, for any $T > 0$ with $x = G^{-1}(T)$, we have
\begin{equation} \label{eq:bid:compexample}
    \int_{-\infty}^{x+1} G(t) \, \dd t = e \, G(x) = e \, T
\end{equation}
by the choice of function $G$, showing the $e$-competitiveness of the strategy.

In what follows, we generalize $G$ in the above strategy to an arbitrary non-decreasing positive function, and focus on the family of bidding strategies driven by such functions $G$.
More precisely, given a non-decreasing function $G$, 
the strategy is defined as $\{ G(n + U) \}_{n \in \bbZ}$ for a shared random source $U \sim \Unif(0, 1]$.
We highlight that this strategic framework already captures many existing algorithms with or without learning augmentation~\cite{chrobak2008incremental,anand2021regression,angelopoulos2024online,shin25,angelopoulos2025learning}.
For a simple example, notice that the strategy corresponding to the classical deterministic $4$-competitive algorithm can be represented by $G(x) := 2^{\lceil x \rceil - 1}$.

In the learning-augmented setting, we are given a prediction $P$ on $T$. Recall we assume $P = 1$.
Moreover, we horizontally shift $G$ so that $\left. G\right\rvert_{(0, 1]}$ represents the strategy's bid that becomes at least the prediction for the first time.
The robustness and consistency of the strategy driven by $G$ can be expressed analogously to \eqref{eq:bid:compexample}, leading us to the following definition of a \emph{bidding profile}:

\begin{definition}[Bidding profile]
    Given $\rho$ and $\chi$ such that $1 < \chi \le \rho$, a non-decreasing, left-continuous function
    $G : \bbR \to (0, \infty)$ is called a $(\rho, \chi)$-\emph{bidding profile} if
    \begin{itemize}
        \item (Offset condition) $G(x) < 1$ for every $x < 0$, and $G(x) \ge 1$ for every $x > 0$;

        \item (Robustness condition)
        \(
            \int_{-\infty}^{x+1}G(t) \dd t \le \rho \, G(x) \; \text{for every $x \in \bbR$};
        \)
        \item (Consistency condition)
        \(
            \int_{-\infty}^{1}G(t) \dd t \le \chi.
        \)
    \end{itemize}
\end{definition}

\begin{restatable}{remark}{rmkprofile}
    Note that $G(0) \le 1$, from its left-continuity and the offset condition. Also, the strict positivity of $G$ can be deduced from the robustness condition. \label{rmk:bid:positivity}
\end{restatable}

Given a bidding profile $G$, the strategy driven by $G$ is expressed by $\{ G(n + U) \}_{n \in \bbZ}$,
where $U \sim \Unif(0, 1]$ is a random source shared across $n \in \bbZ$.

As $G$ is non-decreasing, this bidding sequence is indeed well defined.
Denoting by $\cost_G(T)$ the expected cost incurred by the strategy driven by $G$ given $T$,
the next lemma provides a formula for $\cost_G(T)$,
implying the robustness and consistency of this strategy.
\begin{restatable}{lemma}{lembidcost}
\label{lem:bid_cost}
Let $G$ denote a $(\rho,\chi)$-bidding profile.
Then, for any $T > 0$,
\[
    \textstyle
    \tau(T):=\sup\{t\in\mathbb{R}:G(t)<T\}<\infty,
    \;\;\text{and}\;\;
    \cost_G(T)=\int_{-\infty}^{\tau(T)+1}G(t) \dd t
\]
\end{restatable}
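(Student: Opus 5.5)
First I show $\tau(T)<\infty$. If $\tau(T)=\infty$, then $G(t)<T$ for all $t$, so $G$ is bounded. By the offset condition, $G(x)\ge 1$ for every $x>0$. Hence for large $x$,
\[
\int_{-\infty}^{x+1}G(t)\,\dd t \ge \int_0^{x+1}G(t)\,\dd t \ge x+1 .
\]
The robustness condition gives this integral at most $\rho\,G(x)<\rho T$. Letting $x\to\infty$ gives a contradiction. So $\tau:=\tau(T)<\infty$. Also $\tau>-\infty$: otherwise $G(t)\ge T$ for all $t$, and the left side of the robustness condition would be infinite. Monotonicity then gives $G(t)<T$ for $t<\tau$ and $G(t)\ge T$ for $t>\tau$. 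Left-continuity gives $G(\tau)=\lim_{t\uparrow\tau}G(t)\le T$, but $G(\tau)$ may equal $T$ or fall below it. This boundary case is the one delicate point, and it only affects a measure-zero set of $u$.

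Next I compute the cost as in the $e$-competitive example. Write $\tau=q+r$ with $q\in\bbZ$ and $r\in(0,1]$. For a realization $U=u\in(0,1]$, the bids are $G(n+u)$, and I must find $n_\ast=\min\{n: G(n+u)\ge T\}$.

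\textbf{Case $u>r$.} Then $q+u>\tau$, so $G(q+u)\ge T$. Also $q-1+u\le\tau-1<\tau$, so $G(q-1+u)<T$. Hence $n_\ast=q$.

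\textbf{Case $u<r$.} Then $q+u<\tau$, so $G(q+u)<T$, while $q+1+u>\tau$. Hence $n_\ast=q+1$.

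The single point $u=r$ is irrelevant under the uniform distribution.

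Then I sum. The cost equals
\[
\int_0^1\sum_{n\le q}G(n+u)\,\dd u+\int_0^r G(q+1+u)\,\dd u .
\]
By Tonelli (all terms are positive), the first term is $\int_{-\infty}^{q+1}G(t)\,\dd t$. The second term is $\int_{q+1}^{\tau+1}G(t)\,\dd t$. Their sum is $\int_{-\infty}^{\tau+1}G(t)\,\dd t$.

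The interchange needs no finiteness assumption. Still, the robustness condition at $x=\tau$ shows the integral is finite, since it is at most $\rho\,G(\tau)$. Finally, I note that the bid sequence is strictly increasing only when $G$ is strictly increasing. Repeated equal bids would be counted multiple times here, which matches the formula. I would remark that the formula is read with this convention, or that the definition of cost as a prefix sum applies directly.
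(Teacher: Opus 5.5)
Your proof is correct and follows essentially the paper's route: finiteness of $\tau(T)$ comes from robustness plus the offset condition (the paper gets the explicit bound $\tau(T)\le\rho T-1$ by evaluating robustness at $x=\rho T-1$), and the cost formula comes from Tonelli, where the paper replaces your case split on $u$ versus $r$ by the indicator $\mathbf{1}_{G(n-1+u)<T}$ integrated over $\{t: G(t-1)<T\}$. One slip to fix: in the case $u>r$, the inequality $q-1+u\le\tau-1$ is false (it is equivalent to $u\le r$), but the needed $q-1+u<\tau$ still holds because $q-1+u\le q<q+r=\tau$.
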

Following is a direct corollary from the above lemma and the definition of a bidding profile.
\begin{corollary}\label{cor:G_bidding}
    The strategy driven by a $(\rho, \chi)$-bidding profile is $\rho$-robust and $\chi$-consistent.
\end{corollary}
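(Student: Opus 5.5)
The corollary follows by combining \Cref{lem:bid_cost} with the three conditions in the definition of a bidding profile. The plan is to prove robustness and consistency separately, each time starting from the cost formula $\cost_G(T)=\int_{-\infty}^{\tau(T)+1}G(t)\,\dd t$ with $\tau(T)=\sup\{t: G(t)<T\}$.

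\textbf{Consistency.} We take $T=1$. By the offset condition, $G(t)<1$ for all $t<0$ and $G(t)\ge 1$ for all $t>0$. Hence $\tau(1)=0$. \Cref{lem:bid_cost} then gives $\cost_G(1)=\int_{-\infty}^{1}G(t)\,\dd t$, and this is at most $\chi$ by the consistency condition.

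\textbf{Robustness.} We fix $T>0$ and write $x:=\tau(T)$, which is finite by \Cref{lem:bid_cost}.
\begin{itemize}
\item The robustness condition at $x$ gives $\cost_G(T)=\int_{-\infty}^{x+1}G\le \rho\,G(x)$, so it remains to show $G(x)\le T$.
\item Since $x$ is the supremum of $\{t: G(t)<T\}$, there are points $t_k\uparrow x$ with $G(t_k)<T$. When the supremum is not attained, this uses that $G$ is non-decreasing, so the set is an interval unbounded below.
\item Left-continuity of $G$ then gives $G(x)=\lim_k G(t_k)\le T$.
\end{itemize}
Therefore $\cost_G(T)\le\rho\,T$.

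The only step needing care is $G(\tau(T))\le T$. This is exactly where left-continuity enters; it handles both the case where the set is closed at $x$ and the case where $G$ jumps at $x$. Everything else is direct substitution into the definition.
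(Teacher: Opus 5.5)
Your proof is correct and is essentially the argument the paper intends when it calls this a direct corollary of \Cref{lem:bid_cost}. Consistency follows from $\tau(1)=0$ via the offset condition, and robustness follows from applying the robustness condition at $x=\tau(T)$ together with $G(\tau(T))\le T$, which is exactly where left-continuity is needed. The one point you leave implicit, as the paper also does, is that $\tau(T)>-\infty$; this holds because the robustness condition makes the positive non-decreasing $G$ integrable on every half-line $(-\infty,a]$, which forces $G(t)\to 0$ as $t\to-\infty$.
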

\subsection{Pareto-optimal bidding profile}
\label{subsec:bid:tight}

We now construct a bidding profile which gives the Pareto-optimal trade-off among profile-driven strategies. We present the following theorem, implying the first half of \Cref{thm:main-bidding} together with \Cref{cor:G_bidding}.
\begin{restatable}{theorem}{thmbiddingub}
    For any $s \in (0, 1]$, there exists a $\left(\rho(s), \chi(s)\right)$-bidding profile, where
    \[
    \rho(s):=\frac{e^s}{s}
    \quad \text{ and } \quad
    \chi(s):=
    \begin{cases}
    (e^s-1)/s & (0<s\le\ln2),\\[1mm]
    \xi(s)/s & (\ln2\le s\le1),
    \end{cases}
    \]
    where $\xi(s)$ is the unique solution $\xi \in [1, e]$ satisfying $\xi(2-\ln\xi)=e^s$ for any $s \in [\ln 2, 1]$.
    \label{thm:bidding_ub}
\end{restatable}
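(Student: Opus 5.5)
The plan is to construct the profile $G$ explicitly, piece by piece, by forcing the robustness condition to be tight wherever possible. For $x$ at which $\int_{-\infty}^{x+1}G(t)\,\dd t=\rho\,G(x)$ holds with equality, differentiating gives the delayed differential equation
\[
G(x+1)=\rho\, G'(x).
\]
Its exponential solutions $G(x)=c\,e^{sx}$ satisfy robustness with equality exactly when $\rho=e^s/s$. This is what ties $\rho(s)$ to the parameter $s$: far from the prediction, $G$ should be a pure exponential $e^{s(x-a)}$. Near the prediction, the offset and consistency conditions force a deviation, and the construction has to absorb it.

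\textbf{Case $0<s\le\ln 2$.} I will take $G\equiv 1$ on $(0,1]$, so the strategy bids exactly the prediction. I will take $G(x)=e^{s(x-1)}$ for $x>1$, and determine $G$ on $(-\infty,0]$ as follows.
\begin{itemize}
\item On $(-1,0]$, the delay equation with $G(x+1)=1$ forces $G$ to be affine with slope $1/\rho$.
\item On $(-\infty,-1]$, $G$ is a pure exponential $c\,e^{sx}$, glued continuously to the affine piece. The constant in the affine piece is fixed by robustness holding with equality at $x=0$.
\item The remaining free constant is tuned so that $\int_{-\infty}^{1}G=(e^s-1)/s$.
\end{itemize}
A heuristic check that ignores the affine piece already shows where the threshold comes from. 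The mass of $G$ on $(-\infty,1]$ must not exceed that of the reference exponential $e^{s(t-1)}$, namely $1/s$, for robustness at large $x$ to survive. This yields a constraint of the form $e^s\le 2$, i.e.\ $s\le\ln 2$, matching the case split in the statement.

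After fixing the constants, I will verify robustness region by region: $x\le -1$, $x\in(-1,0]$, $x\in(0,1]$, and $x>1$. In the middle regions, the integral picks up contributions from two or three pieces, and I expect convexity or monotonicity in $x$ to reduce each check to its endpoints. I will also check the offset condition, $G<1$ on $x<0$, using $G(0)\le 1$ and monotonicity.

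\textbf{Case $\ln 2\le s\le 1$.} The plateau of height $1$ on $(0,1]$ is now too expensive: the mass budget $1/s$ is violated. So I will let $G$ on $(0,1]$ itself follow the delay equation. Concretely, $G$ rises from $1$ at $0^+$ so that robustness is tight on part of $(0,1]$, and consistency is saturated with value $\xi/s$. Solving the piecewise system (exponential, then affine or log-type pieces driven by the known values one unit to the right) should produce the transcendental equation $\xi(2-\ln\xi)=e^s$. Here $\ln\xi$ will appear as the position at which $G$ rejoins the exponential $e^{s(x-a)}$. Existence and uniqueness of $\xi\in[1,e]$ follow because $\xi\mapsto\xi(2-\ln\xi)$ increases from $2$ to $e$ on $[1,e]$, and $e^s\in[2,e]$ for $s\in[\ln 2,1]$.

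The main obstacle is exactly this gluing. One must pick the right pieces and breakpoints so that the delay equation is consistent across unit shifts, and then confirm that robustness holds with $\le$ everywhere the equation was not imposed. My first attempt will be to posit $G=e^{s(x-a)}$ outside $[-1,1]$, solve the delay equation backward on $(-1,0]$ and $(0,1]$, and impose continuity plus the consistency equality. If the resulting inequalities fail on some subinterval, I will insert a flat segment, as in the first case, and re-solve.
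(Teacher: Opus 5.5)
Your right half-line is essentially the paper's: $G\equiv 1$ on $(0,1]$ and $e^{s(x-1)}$ beyond, justified by $\chi\le 1/s$, and for $s>\ln 2$ a plateau followed by the tight exponential from $x_0=1-\ln\xi/s$. The construction breaks on the left half-line, and the break cannot be repaired by adjusting constants. In Case~1, once $ce^{sx}$ on $(-\infty,-1]$ is glued continuously to $a+x/\rho$ on $(-1,0]$, the only free constant left is $a=G(0)$, and
\[
\int_{-\infty}^{1}G(t)\,\dd t=\frac{a-1/\rho}{s}+a-\frac{1}{2\rho}+1 .
\]
That one unknown must meet two conditions, $\int_{-\infty}^1G=\rho a$ and $\int_{-\infty}^1G=(e^s-1)/s$. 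Together they force $a=1-e^{-s}$, and at that value the mass is too large: at $s=1/2$ it is about $1.422$ against $\chi(1/2)\approx 1.297$, and for small $s$ the excess is $s/2-s^2/2+O(s^3)$. Moreover $\rho a-\int_{-\infty}^1 G$ increases in $a$ with slope $\rho-1/s-1>0$. So no choice of $a$ gives both robustness at $x=0$ and consistency $\chi(s)$. Note also that $ce^{sx}$ does not even solve the delay equation on $(-2,-1]$, where $G(x+1)$ is affine.

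The failure is structural. For a tight profile, Lemma~\ref{lem:bidding_bpb} gives $\chi=e^sL+e^s\int_0^1e^{-sx}\phi(x)\,\dd x$, where $L=\lim_{x\to-\infty}e^{-sx}\int_{-\infty}^xG$. Reaching $\chi(s)$ therefore requires $L=0$: the optimal profile has no $e^{sx}$ mode at $-\infty$ and decays strictly faster. The plotted profiles decay like $e^{\lambda x}$, where $\lambda>1$ is the other real root of $e^\lambda=\rho\lambda$. Combined with Lemma~\ref{lem:bid:tightening}, this forces any profile attaining $\chi(s)$ to coincide, up to null sets on $(-\infty,1]$, with the tight backward solution, which is already quadratic on $(-2,-1]$. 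So ``far from the prediction, $G$ is a pure exponential'' holds only to the right. Your Case~2 first attempt has the same defect, and inserting flat segments does not address it.

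What is missing is a way to construct the left half-line at all. The paper defines $G$ on $[k-1,k)$ by the tight recursion $G(x)=G(k)-\frac{1}{\rho}\int_{x+1}^{k+1}G(t)\,\dd t$, starting from $G(0)=\chi/\rho$. The entire difficulty is showing that this method-of-steps solution stays non-negative, and hence non-decreasing, as $k\to-\infty$. This is a knife-edge condition on $G(0)$: any smaller value eventually drives $G$ negative, while consistency equals $\rho G(0)$.

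The paper proves non-negativity without constructing $G$ explicitly:
\begin{itemize}
\item It iterates the order-preserving operator $\cF$ starting from $0$ on $(-\infty,0]$.
\item The iterates are dominated by the supersolution $Ke^{cx}$ with $c\in(s,1)$, which works because $e^c/c<\rho$.
\item The resulting fixed point therefore has $L=0$, so its mass is exactly $e^s\int_0^1e^{-sx}\phi(x)\,\dd x$ with $\phi=\max(1,s\chi e^{s(x-1)})$.
\end{itemize}
Evaluating that integral yields $(e^s-1)/s$ in Case~1 and the equation $\xi(2-\ln\xi)=e^s$ in Case~2. This identifies the fixed point with the recursion started at $\chi/\rho$, and so proves its non-negativity. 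These values do not come out of local gluing on $(-1,1]$, so the step you flag as the main obstacle is not where the difficulty lies.
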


In the interest of space, we sketch the proof here, while deferring the full proof to Appendix~\ref{app:bid:tight}.
\begin{proof}[Proof sketch]
For $s = 1$, it is trivial to see that the bidding profile $G(x) = e^x$ induces the classical $e$-competitive algorithm.
We thus fix any $s \in (0, 1)$,
and write $\rho := \rho(s)$ and $\chi := \chi(s)$ for simplicity.
We first define a bidding profile $G$ for $x > 0$ as follows:
\begin{equation} \label{eq:optbid:Gpositive}
    G(x):=\begin{cases}
        \max(1,s\chi \cdot e^{s(x-1)}), & \text{ for } 0<x\le 1,\\
        \max(1,s\chi)\cdot e^{s(x-1)},& \text{ for } x>1.
    \end{cases}
\end{equation}
It remains to define $G$ on $x\le0$, which is defined inductively as follows. 
Let $G(0):=\frac{\chi}{\rho} \leq 1$.
For any non-positive integer $k \in \bbZ_{\le 0}$, given $G$ defined on $[k, k+1)$, we define $G$ on $[k - 1, k)$ as
\begin{equation} \label{eq:optbid:Ginduct}
    G(x):=G(k)-\frac{1}{\rho}\int_{x+1}^{k+1}G(t)\dd t 
    \;\;\text{for any $x \in [k-1, k)$.}
\end{equation}
Intuitively, the above expression computes, given the distribution of bid $G(k+U)$, the optimal distribution of bid $G(k-1+U)$ satisfying the desired robustness and consistency.
See \Cref{fig:bidding-profile-regimes} for illustrations of the constructed profile $G$ for several representative values of the parameter $s$.

The key technical part is to show that $G$ is non-negative.
For the moment, let us assume that this statement is true. We can then show that $G$ is indeed a $(\rho, \chi)$-bidding profile.
Observe that $G$ is non-decreasing due to \eqref{eq:optbid:Gpositive} and \eqref{eq:optbid:Ginduct} along with the non-negativity assumption and $G(0) \leq 1$, where the offset condition also follows.
The left-continuity of $G$ is guaranteed by the fact that $G$ is continuous on $x > 0$ and $x \leq 0$ again due to \eqref{eq:optbid:Gpositive} and \eqref{eq:optbid:Ginduct}, respectively.
For the remaining two conditions, let us first argue the robustness condition for $x < 0$.
For any non-positive integer $k \in \bbZ_{\le 0}$ and $x \in [k-1, k)$, we derive from \eqref{eq:optbid:Ginduct}
\[
    \textstyle 
    G(x) = G(x) - G(k) + \sum_{n \in \bbZ_{\le k}}(G(n)-G(n-1)) = \frac{1}{\rho}\int_{-\infty}^{x+1}G(t)\dd t,
\]
which shows the robustness condition for $x < 0$. 
Due to the left-continuity, the equation also holds for $x = 0$, implying the consistency condition together with the choice of $G(0) = \frac{\chi}{\rho}$.
The robustness condition for $x > 0$ can be shown analogously to the standard analysis of the classical $e$-competitive algorithm.

To show the non-negativity of $G$, we define the following functional operator.
\begin{restatable}{definition}{defncF}\label{defn:cF}
    Given $\rho > 0$, let $\cF$ be a functional operator on an integrable, non-decreasing, non-negative function $H : (-\infty, 1] \to [0, \infty)$, giving a function $\cF H : (-\infty, 1] \to [0, \infty)$ such that
    \begin{align*}
    \cF H (x) := \begin{cases}
        H(x), & 0<x\le 1,\\
        \frac{1}{\rho}\int_{-\infty}^{x+1} H(t)\dd t, &x \le 0. 
    \end{cases}
    \end{align*}
\end{restatable}
We claim that our bidding function $G$ satisfies $G|_{(-\infty, 0]} \equiv G^\ast|_{(-\infty, 0]}$, where $G^\ast : (-\infty, 1] \to [0, \infty) $ is a function satisfying (i) $G^\ast = \cF G^\ast$, (ii) $\int_{-\infty}^1 G^\ast(t) \dd t = \chi$, and (iii) $G^\ast |_{(0, 1]} \equiv \phi|_{(0, 1]}$ where $\phi(x) := \max(1, s\chi \cdot e^{s(x-1)})$.
First, we can derive $G(0) = G^\ast(0) = \frac{\chi}{\rho}$ from (i) and (ii).
Next, for any non-positive integer $k \in \bbZ_{\le 0}$ and $x \in [k-1, k)$, 
\begin{align*}
    \textstyle
    G^\ast(x) 
    &
    \textstyle
    = \frac{1}{\rho} \int_{-\infty}^{x+1} G^\ast(t) \dd t
    = \frac{1}{\rho} \left( \int_{-\infty}^{k+1} G^\ast(t) \dd t - \int_{x+1}^{k+1} G^\ast(t) \dd t \right)
    \\&
    \textstyle
    = G^\ast(k) - \frac{1}{\rho} \int_{x+1}^{k+1} G^\ast(t) \dd t = G(x),
\end{align*}
where the first and third equalities are from (i), and the last equality is by the inductive hypothesis.
This completes the proof of the claim if $G^\ast$ exists.

It thus remains to show the existence of $G^\ast$.
We exploit the \emph{order-preserving} property of $\cF$: if $A \le B$ for two functions $A,B$ (i.e., $A(x)\le B(x)$ for all $x \in (-\infty, 1]$), then $\cF A \le \cF B$.
Let $G^{(0)}$ be a function on $(-\infty, 1]$ such that $G^{(0)}(x):=0$ for $x\le 0$ and $G^{(0)}(x):=\phi(x)$ for $0<x\le 1$, and for $n \in \bbZ_{\ge 0}$, let $G^{(n+1)}(x) := \cF G^{(n)}(x)$.

Since $G^{(0)} \le G^{(1)}$, the order-preserving property implies $G^{(n)}\le G^{(n+1)}$ for all $n \in \bbZ_{\ge 0}$.
We then show that, for all $n \in \bbZ_{\ge 0}$, $G^{(n)}$ is bounded from above by $H$ defined as $H(x) := Ke^{cx}$ for $x \leq 0$ and $H(x) := \phi(x)$ for $0 < x \leq 1$, 
for some $c \in (s,1)$ and sufficiently large $K>0$.
Indeed, $\cF H \le H$, which implies
\(
G^{(n)} \le \cF^{(n)} G^{(0)} \le \cF^{(n)} H \le H,
\)
where the second inequality follows from the definition of $G^{(0)}$ and $H$.
By the monotone convergence theorem, $\{G^{(n)}\}_{n \in \bbZ_{\ge 0}}$ converges.
Simple calculation shows that the converged function satisfies the remaining conditions for $G^\ast$, completing the proof.
\end{proof}

We now argue that the trade-off of \Cref{thm:bidding_ub} is indeed the best possible among profile-driven strategies.
Let us say a bidding profile $G$ is \emph{induced by $\phi: (0, 1] \to [1, \infty)$} if $\left. G \right\rvert_{(0, 1]} \equiv \phi$.
We further define a \emph{tight} bidding profile.
\begin{definition}[Tight bidding profile]
    We say a $(\rho,\chi)$-bidding profile $G$ is \emph{tight} if it further satisfies
    \begin{align*}
    \textstyle
    \int_{-\infty}^{x+1}G(t) \dd t = \rho \, G(x) \;\;\text{for all $x \le 0$},
    \quad \text{and} \quad 
    \int_{-\infty}^{1} G(t) \dd t = \chi.
    \end{align*}
\end{definition}

We then show that it suffices to consider tight bidding profiles.
\begin{restatable}{lemma}{lembidtightening}
    Suppose there exists a $(\rho, \chi)$-bidding profile $G$ for fixed positive numbers $\rho > \chi > 1$. Let $\phi: (0,1]\to [1,\infty)$ denote the function that induces $G$.
    Then, there exists a tight $(\rho, \chi')$-bidding profile $G'$ induced by $\phi$ such that $\chi' \le \chi$.
    \label{lem:bid:tightening}
\end{restatable}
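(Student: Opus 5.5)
The plan is to reuse the monotone-iteration machinery from the proof sketch of \Cref{thm:bidding_ub}, with the operator $\cF$ of \Cref{defn:cF} for the given $\rho$. This time the iteration is started \emph{from above} at the given profile $G$, instead of from below at zero.

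\textbf{Step 1: $G$ is a supersolution.} Restrict $G$ to $(-\infty,1]$ and call this $H_0$. The robustness condition gives $\int_{-\infty}^{x+1}G\le\rho G(x)$ for $x\le 0$. Hence $\cF H_0(x)\le H_0(x)$ for $x\le 0$, and equality holds on $(0,1]$ by definition. So $\cF H_0\le H_0$. Note that $H_0$ is non-negative and non-decreasing. It is also integrable on $(-\infty,1]$, since $\int_{-\infty}^1G\le\chi$. Thus $\cF$ applies to it.

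\textbf{Step 2: the decreasing iteration.} Set $H_{n+1}:=\cF H_n$. Because $\cF$ is order-preserving, $\cF H_0\le H_0$ gives $H_{n+1}\le H_n$ for all $n$. Each $H_n$ is non-negative, and non-decreasing on $(-\infty,0]$ because the integral upper limit $x+1$ grows with $x$. Each $H_n$ also equals $\phi$ on $(0,1]$. We must also check monotonicity across $0$, i.e.\ that $H_n(0)\le 1\le\phi$. This holds because $H_n(0)\le H_0(0)=G(0)\le1$ by \Cref{rmk:bid:positivity}.

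Let $H_\ast:=\lim_n H_n$ pointwise. The sequence is decreasing and bounded below by $0$, and every term is dominated by $H_0$, which is integrable. So dominated convergence gives
\[
H_\ast(x)=\tfrac1\rho\int_{-\infty}^{x+1}H_\ast(t)\,\dd t \quad\text{for } x\le0,
\]
which says $H_\ast=\cF H_\ast$. Put $\chi':=\int_{-\infty}^1H_\ast\le\int_{-\infty}^1 G\le\chi$.

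\textbf{Step 3: extension and verification.} Define $G'$ as $H_\ast$ on $(-\infty,1]$ and as $G$ on $(1,\infty)$. The tightness equation on $x\le0$ is exactly $H_\ast=\cF H_\ast$. The consistency equality $\int_{-\infty}^1G'=\chi'$ holds by definition. Monotonicity and the offset condition follow from Step 2. Left-continuity on $x\le0$ holds because $H_\ast$ is given there by an integral, so it is continuous; on $x>0$ it is inherited from $G$.

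The robustness condition for $x\le0$ holds with equality. For $x>0$ we have $G'=G$ at the point $x$ itself, and $G'\le G$ on $(-\infty,x+1]$. Hence $\int_{-\infty}^{x+1}G'\le\int_{-\infty}^{x+1}G\le\rho G(x)=\rho G'(x)$.

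\textbf{Main obstacle: strict positivity.} A bidding profile must map into $(0,\infty)$, and the decreasing limit could a priori vanish on $(-\infty,0]$. Positivity does follow from the robustness condition, as in \Cref{rmk:bid:positivity}, but only once we know $H_\ast$ is not identically zero near $0$.

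To rule this out, I would use $H_\ast(0)=\frac1\rho\int_{-\infty}^1H_\ast\ge\frac1\rho\int_0^1\phi\ge\frac1\rho>0$. Then I would propagate positivity leftward via the fixed-point equation, for instance in the form $H_\ast(x)\ge H_\ast(0)-\frac1\rho\int_{x+1}^{1}H_\ast$, combined with a lower comparison function. One candidate is $H_\ast\ge\cF^{n}G^{(0)}$, where $G^{(0)}$ is the zero-started iterate, which is positive by induction. If that fails, I would try an exponential subsolution $\epsilon e^{cx}$ with $\int_{-\infty}^{x+1}\epsilon e^{ct}\,\dd t=\frac{e^c}{c}\,\epsilon e^{cx}$, choosing $c$ so that $e^c/c\ge\rho$. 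Positivity then holds at every $x$, not only near $0$.

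A secondary subtlety is that $\chi'>1$ must hold for the profile definition to apply. This follows from $\int_0^1\phi\ge1$ together with positivity on the negative axis.
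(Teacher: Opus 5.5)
Your proposal is correct and follows essentially the same route as the paper: start the $\cF$-iteration at $G|_{(-\infty,1]}$, use robustness to get $\cF G\le G$ and order-preservation to get a decreasing sequence, pass to the limit, and glue with $G$ on $(1,\infty)$; your explicit check of robustness for $x>0$ via $G'\le G$ spells out what the paper leaves as ``follow the proof of \Cref{thm:bidding_ub}''. The positivity issue you flag is milder than you suggest: since $G'\ge 1$ on $(0,\infty)$, the zero set of $G'$ is bounded above, so the supremum argument of \Cref{rmk:bid:positivity} applies verbatim (your comparison with the zero-started iterates also works), and no exponential subsolution is needed.
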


Finally, the following theorem implies the optimality of our bidding profile.
\begin{restatable}{theorem}{thmbiddinglb}
    For any $0<s \le1$, if there exists a tight $\left(\rho(s), \chi\right)$-bidding profile, then $\chi \ge \chi(s)$.
    \label{thm:bidding_lb}
\end{restatable}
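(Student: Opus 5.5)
The plan is to turn tightness into an exact formula for $\chi$ in terms of $\phi := G|_{(0,1]}$ alone. Robustness for $x>0$ will then give a pointwise lower bound on $\phi$, and substituting it back yields a scalar inequality whose minimal solution is $\chi(s)$.

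\emph{Step 1 (eliminating the negative half-line).} Write $\rho=\rho(s)=e^s/s$, and let $G$ be a tight $(\rho,\chi)$-bidding profile induced by $\phi$. Tightness gives $\rho G(x)=\int_{-\infty}^{x+1}G(t)\,\dd t$ for all $x\le 0$. I will multiply this identity by $e^{-sx}$, integrate over $x\in(-\infty,0]$, and apply Fubini. For $t\le 1$ the inner integral is $\int_{\max(t-1,\cdot)}^{0}e^{-sx}\,\dd x=(e^{s(1-t)}-1)/s$ when $t-1\le 0$. The choice $\rho=e^s/s$ makes the $\int_{-\infty}^0 e^{-st}G(t)\,\dd t$ terms cancel exactly, since $s$ is a root of the characteristic equation $\rho\lambda=e^\lambda$. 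What remains is
\[
  \int_{-\infty}^{0} G(t)\,\dd t=\int_0^1 \phi(t)\bigl(e^{s(1-t)}-1\bigr)\dd t,
  \qquad\text{hence}\qquad
  \chi=\int_{-\infty}^1 G(t)\,\dd t=\int_0^1 \phi(t)\,e^{s(1-t)}\,\dd t .
\]
For the cancellation to be legitimate I need $\int_{-\infty}^0 e^{-sx}G(x)\,\dd x<\infty$. To get this, I will integrate over $[-M,0]$ and control the boundary term as $M\to\infty$. The tightness equation implies $\rho G'(x)=G(x+1)\ge G(x)$ on $x<0$, so $G$ decays at least like $e^{x/\rho}$. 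That rate is not faster than $e^{sx}$ in general, so I will instead bound the boundary contribution via monotonicity, $\int_{-\infty}^{-M+1}G\le \rho G(-M)$. If this truncation only yields an inequality, I will check that the inequality points in the direction $\chi\ge\int_0^1\phi\,e^{s(1-t)}\,\dd t$, which is the direction I need.

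\emph{Step 2 (lower bound on $\phi$ from robustness at $x>0$).} This is the step I expect to be the main obstacle. I want to show
\[
  \phi(x)\ge s\chi\, e^{s(x-1)} \quad\text{for } x\in(0,1],
\]
which is precisely the profile \eqref{eq:optbid:Gpositive} shows is attainable. Set $I(x):=\int_{-\infty}^{x+1}G$. Robustness for $x>0$ reads $I(x)\le \rho I'(x-1)$, and $I(0)=\chi$. The same Laplace trick with weight $e^{-sx}$ on $[y,\infty)$ should turn this into a lower bound on $G$ near $y$ in terms of $I$. The difficulty is that $G$ on $(1,\infty)$ is not constrained by tightness, so I must argue that any deficit is amplified. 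If $\phi(x_0)<s\chi e^{s(x_0-1)}$ for some $x_0$, then the delay inequality $\rho I'(x-1)\ge I(x)$ forces $I$ to grow at least like the solution with rate given by the larger real root $s'>1$ of $e^\lambda=\rho\lambda$. That growth contradicts robustness, by comparison with the extremal exponential $e^{sx}$, after finitely many unit steps. My first attempt will be to iterate this comparison one unit interval at a time, in the spirit of the order-preserving operator $\cF$.

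\emph{Step 3 (solving the scalar inequality).} Combining Step 2 with $\phi\ge 1$ and Step 1 gives
\[
  \chi\ \ge\ \Psi(\chi):=\int_0^1 \max\bigl(1,s\chi e^{s(x-1)}\bigr)e^{s(1-x)}\,\dd x .
\]
I will compute $\Psi$ explicitly. If $s\chi\le 1$, then $\Psi(\chi)=(e^s-1)/s$. If $s\chi>1$, the max switches at $x^\ast=1-\ln(s\chi)/s$, which gives a closed form; using $\xi=s\chi$, the equality case becomes $\xi(2-\ln\xi)=e^s$. The function $\chi-\Psi(\chi)$ is increasing wherever it matters, because $\Psi$ has slope less than $1$. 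Therefore $\chi\ge\Psi(\chi)$ forces $\chi\ge\chi(s)$. For $s<\ln 2$ the threshold $s\chi\le1$ is consistent with $\chi=(e^s-1)/s$, and for $s\ge\ln 2$ the $\xi$-branch applies, matching the two cases of $\chi(s)$.
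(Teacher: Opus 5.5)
\textbf{Gap: your Steps~1 and~2 both need a growth lemma that the proposal never proves.} Your three-step architecture is the paper's: Step~1 gives $\chi\ge e^s\int_0^1 e^{-sx}\phi(x)\,\dd x$, Step~2 gives $\phi\ge\max(1,s\chi e^{s(x-1)})$, and Step~3 solves the scalar inequality. Step~3 is fine: for $s\chi>1$ one has $\Psi'(\chi)=\ln(s\chi)<1$ on the relevant range $s\chi\le\xi(s)\le e$. The missing lemma is the following. With $A(x):=\int_{-\infty}^x G(t)\,\dd t$, robustness alone, which holds on all of $\mathbb{R}$ including $x>0$ where $G$ is otherwise free, forces
\[
A'(x)\ge s\,A(x)\ \text{ a.e., i.e. } x\mapsto e^{-sx}A(x) \text{ is non-decreasing.}
\]
The paper proves this by bootstrapping on $A$, not on $G$ (whose log-derivative can vanish on $(0,1]$):
\begin{itemize}
\item $\rho A'(x)\ge A(x+1)\ge A(x)+A'(x)$ gives $A'\ge c_0A$ with $c_0=1/(\rho-1)$.
\item If $A'\ge cA$, then $A(x+1)\ge e^cA(x)$, hence $A'(x)\ge (e^c/\rho)A(x)$.
\item The iterates $c_{n+1}=e^{c_n}/\rho$ increase to the smaller root $s$ of $e^c=\rho c$.
\end{itemize}
Your decay rate $e^{x/\rho}$ is essentially the first step of this iteration. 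What is missing is feeding the growth bound back through the delay term.

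\textbf{Effect on Step~1.} Without the lemma, Step~1 does not close. Your truncated Fubini computation gives exactly $\chi=e^s\int_0^1e^{-st}\phi(t)\,\dd t+D$. Here, with $h(x):=e^{-sx}A(x)$, $D=e^s h(-M)-s e^s\int_{-M}^{-M+1}h(x)\,\dd x$, and this quantity is independent of $M$. Monotonicity of $G$ together with $A(-M+1)=\rho G(-M)$ only gives bounds on $D$ of indefinite sign. Monotonicity of $h$ is what supplies $L:=\lim_{x\to-\infty}h(x)\in[0,\infty)$, and then $D=e^s(1-s)L\ge 0$. So only the inequality holds in general; $\int_{-\infty}^0 e^{-sx}G$ diverges whenever $L>0$.

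\textbf{Effect on Step~2.} The mechanism you sketch, a deficit forcing growth at the larger root $s'$, points the wrong way. Suppose $\phi(x_0)<s\chi e^{s(x_0-1)}$. Robustness at $x_0$ then gives $A(x_0+1)\le\rho\phi(x_0)<\chi e^{sx_0}=A(1)e^{sx_0}$. So $A$ grows \emph{slower} than rate $s$ on $[1,1+x_0]$, and that is what contradicts the lemma; $s'$ plays no role. With the lemma in hand, Step~2 is immediate: for $x\in(0,1]$, $A(x+1)\ge A(1)e^{sx}=\chi e^{sx}$ and $\rho\phi(x)\ge A(x+1)$.
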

Again, all proofs are deferred to Appendix~\ref{app:bid:tight} due to the space constraint.

\input{arxiv_include/3a_tikz_vis}

\subsection{Generality of bidding profiles}
\label{subsec:bid:alg2profile}

To complete the proof of \Cref{thm:main-bidding}, we show the following lemma.

\begin{restatable}{lemma}{lembiddinggconvert}
    For any fixed positive real numbers $\rho$ and $\chi$ such that $\rho > \chi > 1$, there exists a $\rho$-robust $\chi$-consistent online bidding strategy if and only if there exists a $(\rho,\chi)$-bidding profile $G$.
    \label{lem:bidding_g_convert}
\end{restatable}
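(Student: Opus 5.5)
The "if" direction is immediate from \Cref{cor:G_bidding}: the strategy $\{G(n+U)\}_{n\in\bbZ}$ driven by a $(\rho,\chi)$-bidding profile is $\rho$-robust and $\chi$-consistent. The substance is the converse. Starting from a $\rho$-robust $\chi$-consistent randomized strategy $B$, I will build a profile $G$ from the marginal laws of the bids, indexed relative to the prediction. Recall the normalization: in every realization, $b_0$ is the first bid with $b_0\ge 1$, so $b_{-1}<1\le b_0$.

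\textbf{Construction.} For each $n\in\bbZ$ let $F_n$ be the distribution function of $b_n$, and let $G_n(u):=\inf\{y: F_n(y)\ge u\}$ for $u\in(0,1]$ be its left-continuous quantile function. Define $G(n+u):=G_n(u)$ for $u\in(0,1]$. On each unit interval, $G$ is non-decreasing and left-continuous. Across the boundary, $G_n(1)\le G_{n+1}(0+)$ is not automatic, since only $b_n<b_{n+1}$ pointwise is known. To repair this, I will first replace $B$ by a ``stochastically sorted'' strategy: take the quantile coupling $b'_n:=G_n(U)$ with a common $U$. Then $b'_n\le b'_{n+1}$ holds because $b_n<b_{n+1}$ gives stochastic dominance of the quantiles. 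Monotonicity of $G$ across integers is still not guaranteed, though. So I will rather use the following idea. Since $T$ is hit by the first bid $\ge T$, the expected cost depends only on the joint law of the bids, and one can reorder probability mass by a ``merging'' argument. Concretely, I define $G$ through the point-process intensity: let $N(y):=\bbE\bigl[\#\{n: b_n\le y\}\bigr]$ and define $G$ as the generalized inverse of $N$, suitably shifted. The intended property is that $G(n+U)$ has the same expected counting measure as $B$. The offset condition then comes from the fact that exactly one bid lies in $[1,\infty)$ below $b_0$'s level, more precisely that $N(1-)-N(0^{+}\text{-limit})$ normalizes so that the index-$0$ bid is the first one $\ge 1$. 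This gives $N(y)<\#$ for $y<1$ and pins the shift.

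\textbf{Cost comparison.} By \Cref{lem:bid_cost}, $\cost_G(T)=\int_{-\infty}^{\tau(T)+1}G$. I will show that for the original strategy,
\[
\cost_B(T)\;\ge\;\bbE\Bigl[\sum_n b_n\mathbf 1_{b_n<T}\Bigr]+\bbE\bigl[\min\{b_n:b_n\ge T\}\bigr].
\]
The first term is $\int_{\{G<T\}}G=\int_{-\infty}^{\tau(T)}G$, by the intensity identity (a change of variables through the quantile function). For the second term, the first bid $\ge T$ is at least the bid with rank ``$N(T-)+1$'' in expectation. I will bound it by $\int_{\tau(T)}^{\tau(T)+1}G$, which is exactly where the quantile rearrangement is used. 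Here I will apply a Hardy--Littlewood-type inequality: among all couplings with given marginal counting intensity, the integral of the next-bid quantity is minimized by the comonotone (uniform shift) coupling. Robustness and consistency of $G$ then follow from those of $B$, giving $\int_{-\infty}^{x+1}G\le\rho G(x)$ after taking $T=G(x)$ (or $T\downarrow G(x)$ and left-continuity), and $\int_{-\infty}^1G\le\chi$ at $T=1$.

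\textbf{Main obstacle.} The hard step is the lower bound on the expected first bid $\ge T$ by $\int_{\tau}^{\tau+1}G$, that is, making the rearrangement rigorous for arbitrary randomized strategies with atoms. I would first prove it for strategies taking finitely many values on a grid, via an exchange argument on the distribution of the index $n_*(T)$. I would then pass to the limit by monotone convergence, with positivity of $G$ supplied by Remark~\ref{rmk:bid:positivity}.
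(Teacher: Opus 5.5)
Your overall architecture (aggregate intensity of the bids, its generalized inverse $G$, and splitting $\cost_B(T)$ into the bids below $T$ plus the stopping bid) is the same as the paper's. However, the proof has a genuine gap at exactly the step you call the main obstacle, and there is also a definitional problem.

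\textbf{Definition of $G$.} As written, $N(y)=\mathbb{E}[\#\{n: b_n\le y\}]$ equals $+\infty$ for every $y>0$. Infinitely many bids lie below any positive level, because the bids of a robust strategy must accumulate at $0$. So ``the generalized inverse of $N$, suitably shifted'' is undefined, and your normalization $N(1-)-N(0^{+})$ is $\infty-\infty$. You need the aggregate measure $\mu:=\sum_n\mu_n$, where $\mu_n$ is the law of $b_n$, anchored at the \emph{value} $1$:
\begin{itemize}
\item $G(x)=\inf\{t\ge1:\mu([1,t])\ge x\}$ for $x\ge0$;
\item $G(x)=\sup\{t<1:\mu([t,1))\ge -x\}$ for $x<0$.
\end{itemize}
The offset condition comes from this anchoring, not from the index of $b_0$; index information is lost in $\mu$. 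You also have to check three things:
\begin{itemize}
\item $\mu$ is locally finite. This follows from robustness: $t\,\mu([t,y])\le\cost_B(2y)\le2\rho y$.
\item $\mu((0,1))=\mu([1,\infty))=\infty$, so that $G$ is finite and defined on all of $\bbR$.
\item $\mu$ is the pushforward of Lebesgue measure under $G$. This is what your first-term identity silently uses.
\end{itemize}

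\textbf{The key inequality.} More seriously, you never prove $\mathbb{E}[R]\ge\int_{\tau(T)}^{\tau(T)+1}G$ for the stopping bid $R$. The ``rank $N(T-)+1$'' heuristic is meaningless, since ranks are infinite. The Hardy--Littlewood-type coupling statement is asserted, not shown. The plan of grid discretization, an exchange argument and monotone convergence is only sketched, and nothing indicates why the induced profiles would converge monotonically.

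The missing idea is a short stochastic-dominance argument. Put $T^*:=G(\tau(T)+1)$ and $R^G:=G(\tau(T)+U)$.
\begin{itemize}
\item For any interval $I\subseteq[T,T^*)$, a union bound gives $\Pr[R\in I]\le\Pr[\exists n: b_n\in I]\le\mu(I)$.
\item For the same $I$, $G^{-1}(I)\subseteq[\tau(T),\tau(T)+1)$, so the pushforward property gives $\mu(I)=\Pr[R^G\in I]$.
\item $T\le R^G\le T^*$ surely.
\end{itemize}
Hence $\Pr[R<y]\le\Pr[R^G<y]$ for every $y$, so $R$ stochastically dominates $R^G$, and
$\mathbb{E}[R]\ge\mathbb{E}[R^G]=\int_{\tau(T)}^{\tau(T)+1}G$. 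This handles atoms directly, with no discretization, limiting argument, or optimization over couplings. Combined with your first-term identity, it yields $\cost_B(T)\ge\cost_G(T)$ for all $T$. Taking $T=G(x)+\varepsilon$ with $\varepsilon\downarrow0$, and $T=1$, then gives robustness and consistency as you indicate.
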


Note that the second half of \Cref{thm:main-bidding} then immediately follows from \Cref{thm:bidding_lb}.
Due to the space constraint, we only provide a sketch of its proof, while deferring the full proof to Appendix~\ref{app:str2prof}.

\begin{proof}[Proof sketch]
Recall that a bidding strategy is a probability distribution over bi-infinite increasing sequences in $(0, \infty)$.
Hence, given a $\rho$-robust $\chi$-consistent bidding strategy $\cA$ with prediction $P = 1$, by simply accumulating the probability measure value-wise, we can obtain an \emph{aggregate measure}~$\mu$; informally speaking, for each value $t \in (0, \infty)$, $\mu(t)$ corresponds to the marginal probability that $\cA$ ever submits $t$ as its bid value when $\cA$ never terminates.

We then define the \emph{inverse profile} $G : \bbR \to (0, \infty)$ from $\mu$ as follows:
\[
    G(x) := \begin{cases}
        \sup\{0< t<1:\ \mu([t,1)) \ge -x\}, & \text{ if $x < 0$;} \\
        \inf\{t\ge1:\ \mu([1,t])\ge x\}, & \text{ if $x \geq 0$.}
    \end{cases}
\]
Intuitively, $G$ can be seen as a bidding profile whose driven strategy $\cB$ induces the same aggregate measure $\mu$ as $\cA$.

To show that $G$ is actually a $(\rho, \chi)$-bidding profile, we argue that the expected cost incurred by $\cB$ is always no worse than that by $\cA$ for every target value $T > 0$.
On the one hand, the expected costs incurred before submitting at least $T$ are the same between $\cA$ and $\cB$ since they have the same aggregate measure.
On the other hand, the distribution of the last bid (which is at least $T$) can be viewed as a packing of size $1$ in $\mu([T, \infty))$, and by the definition of $G$, $\cB$ packs this distribution as small as possible with respect to $\mu([T, \infty))$.
These together imply that $\cB$ incurs no worse than $\cA$ in expectation for every $T > 0$.
\end{proof}

%% file: arxiv_include/3a_tikz_vis.tex
\begin{figure}[t]
\centering
\begin{tikzpicture}
\begin{groupplot}[
    group style={group size=3 by 1, horizontal sep=0.9cm},
    width=0.37\textwidth,
    height=0.28\textwidth,
    xmin=-3, xmax=1.3,
    ymin=0, ymax=1.3,
    axis lines=left,
    xlabel={$x$},
    xtick={-3,-2,-1,0,1},
    ytick={0,0.5,1.0},
    tick label style={font=\small},
    label style={font=\small},
    title style={font=\small},
    xmajorgrids=true,
    ymajorgrids=true,
    major grid style={dashed, gray!30},
    axis on top=false,
    clip=true,
]

\nextgroupplot[
    title={$s=0.5$},
    ylabel={$G(x)$},
]

\addplot[draw=none, fill=gray!12] coordinates {(0,0) (1,0) (1,1.3) (0,1.3)} \closedcycle;

\addplot[thick, blue, smooth] coordinates {
(-3.0,0.003017) (-2.9,0.003577) (-2.8,0.004248) (-2.7,0.005058) (-2.6,0.006035)
(-2.5,0.007206) (-2.4,0.008599) (-2.3,0.010243) (-2.2,0.012165) (-2.1,0.014393)
(-2.0,0.016955) (-1.9,0.020151) (-1.8,0.024266) (-1.7,0.029301) (-1.6,0.035255)
(-1.5,0.042129) (-1.4,0.049923) (-1.3,0.058637) (-1.2,0.068270) (-1.1,0.078823)
(-1.0,0.090296) (-0.9,0.120531) (-0.8,0.150857) (-0.7,0.181184) (-0.6,0.211510)
(-0.5,0.241837) (-0.4,0.272163) (-0.3,0.302490) (-0.2,0.332816) (-0.1,0.363143)
(0.0,0.393469)
};

\addplot[thick, blue, domain=0.001:1, samples=2] {1};
\addplot[thick, blue, domain=1:1.3, samples=100] {exp(0.5*(x-1))};

\addplot[blue, only marks, mark=*, mark size=1.5pt] coordinates {(0,0.393469)};
\addplot[blue, only marks, mark=o, fill=white, mark size=1.8pt] coordinates {(0,1)};

\nextgroupplot[
    title={$s=0.8$},
]

\addplot[draw=none, fill=gray!12] coordinates {(0,0) (1,0) (1,1.3) (0,1.3)} \closedcycle;

\addplot[thick, blue, smooth] coordinates {
(-3.0,0.017085) (-2.9,0.019308) (-2.8,0.021825) (-2.7,0.024684) (-2.6,0.027930)
(-2.5,0.031611) (-2.4,0.035772) (-2.3,0.040460) (-2.2,0.045722) (-2.1,0.051605)
(-2.0,0.058162) (-1.9,0.065717) (-1.8,0.074564) (-1.7,0.084703) (-1.6,0.096134)
(-1.5,0.108858) (-1.4,0.122874) (-1.3,0.138181) (-1.2,0.154792) (-1.1,0.172786)
(-1.0,0.192279) (-0.9,0.228147) (-0.8,0.264093) (-0.7,0.300039) (-0.6,0.335986)
(-0.5,0.371932) (-0.4,0.407878) (-0.3,0.443825) (-0.2,0.480826) (-0.1,0.520873)
(0.0,0.564255)
};

\addplot[thick, blue, domain=0.001:0.715312, samples=2] {1};
\addplot[thick, blue, domain=0.715312:1.3, samples=150] {1.2557715 * exp(0.8*(x-1))};

\addplot[blue, only marks, mark=*, mark size=1.5pt] coordinates {(0,0.564255)};
\addplot[blue, only marks, mark=o, fill=white, mark size=1.8pt] coordinates {(0,1)};

\nextgroupplot[
    title={$s=1$},
]

\addplot[draw=none, fill=gray!12] coordinates {(0,0) (1,0) (1,1.3) (0,1.3)} \closedcycle;
\addplot[thick, blue, domain=-3:1.3, samples=200] {exp(x)};

\end{groupplot}
\end{tikzpicture}
\caption{Representative bidding profiles $G$ for three regimes of $s$ which determines the trade-off. The shaded strip highlights the interval $(0,1]$.}
\label{fig:bidding-profile-regimes}
\end{figure}

%% file: arxiv_include/4_linear_search.tex
\section{Extension to linear search} \label{sec:linsearch}

In this section, we present a learning-augmented strategy for the linear search problem. We introduce in \Cref{subsec:excursion_profile} the framework called an \emph{excursion profile}, 
adapted from a bidding profile
to describe a sequence of excursions. 
In \Cref{subsec:lin:tight}, we present 
a near-optimal excursion profile whose driven strategy shows the robustness-consistency trade-off presented in \Cref{subfig:linear-search-consistency-robustness-search}.

\subsection{Excursion profiles} \label{subsec:excursion_profile}
Since linear search is also scale-invariant, we assume that the prediction is located at coordinate $+1$. Instead of a single profile, we assign a pair $G_{\pm} = (G_+, G_-)$ of non-decreasing, non-negative functions as an \emph{excursion profile}. 
The strategy driven by $G_\pm$ naturally corresponds to the following
excursion sequence with a common random source $U \sim \Unif(0, 1]$:
\[
\cdots +G_{+}(n + U) \to  -G_{-}(n + U) \to +G_{+}(n+1+U) \to  -G_{-}(n+1+U), \cdots
\]
Moreover, we define $\left. G_{+}\right\rvert_{(0, 1]}$ as the first excursion 
passing through the prediction point. 
This leads us to the following definition.

\begin{definition}[Excursion profile]
    For $\rho$ and $\chi$ with $\rho \ge \chi > 0$, a pair $G_\pm$ of non-decreasing, left-continuous functions $G_{+}, G_- : \bbR \to (0, \infty)$ is called a $(\rho, \chi)$-\emph{excursion profile} if
    \begin{itemize}
        \item (Plus-offset condition) $G_{+}(x) \ge 1$ for any $x > 0$, and $G_{+}(x) < 1$ for any $x < 0$;
        \item (Robustness condition)
        $C_{+}(x) \le \rho \, G_{+}(x)$ and $C_{-}(x) \le \rho \, G_{-}(x)$ for all $x \in \bbR$;
        \item (Consistency condition) $ C_{+}(0) \leq \chi$,
    \end{itemize}
    where \(
            \textstyle
            C_{+}(x) := \int_{-\infty}^{x} G_{+}(t) \dd t + \int_{-\infty}^{x} G_{-}(t)\dd t
        \) and
        \(
            \textstyle
            C_{-}(x) := \int_{-\infty}^{x+1} G_{+}(t) \dd t + \int_{-\infty}^{x} G_{-}(t)\dd t.
        \)
\end{definition}

We can easily adapt the proof of \Cref{lem:bid_cost} to derive the following lemma whose proof is deferred to \Cref{app:subsec:excursion_profile}.
\begin{restatable}{lemma}{corlsbid}
    The strategy driven by a $(\rho, \chi)$-excursion profile $G_{\pm}$ is $(1+2\rho)$-robust and $(1+2\chi)$-consistent.
\end{restatable}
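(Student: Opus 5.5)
Fix a target $T \in \bbR\setminus\{0\}$ and compute the expected cost of the strategy driven by $G_\pm$ in the same way as in the proof of \Cref{lem:bid_cost}. Consider first $T>0$ and let $\tau_+ := \sup\{t : G_+(t) < |T|\}$. Finiteness of $\tau_+$ follows from the plus-offset and robustness conditions, exactly as for $\tau(T)$ in \Cref{lem:bid_cost}.

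For a realization $U=u$, the searcher reaches $T$ during the first positive excursion $G_+(n+u)$ with $G_+(n+u)\ge T$. Let $n_\ast(u)$ denote that index. Before this, it completes every positive excursion $G_+(m+u)$ and every negative excursion $G_-(m+u)$ with $m<n_\ast(u)$, in the order $+G_+(m+u)\to -G_-(m+u)$. Each completed excursion costs twice its magnitude, and the final partial excursion costs $T$. The cost is therefore
\[
T + 2\sum_{m<n_\ast(u)} \bigl(G_+(m+u)+G_-(m+u)\bigr).
\]
The key computation is the one from \Cref{lem:bid_cost}: averaging over $u\in(0,1]$ turns the sums over $m<n_\ast(u)$ into integrals. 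The threshold $n_\ast(u)+u$ is the first point of the lattice $\bbZ+u$ lying above $\tau_+$. Splitting $\tau_+=q+r$ with $r\in(0,1]$ and integrating over $u<r$ and $u\ge r$ separately, as in the displayed calculation before \eqref{eq:bid:compexample}, I expect
\[
\bbE{\sum_{m<n_\ast(U)} G_\pm(m+U)} \le \int_{-\infty}^{\tau_+} G_\pm(t)\dd t .
\]
This gives expected cost at most $T+2C_+(\tau_+)$. By robustness, $C_+(\tau_+)\le \rho\, G_+(\tau_+)$, and left-continuity gives $G_+(\tau_+)\le T$. Hence the cost is at most $(1+2\rho)T$.

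For $T<0$ the argument is the same, with $\tau_- := \sup\{t: G_-(t)<|T|\}$. The difference is that the positive excursion $G_+(n+u)$ precedes $G_-(n+u)$ at the same index. The completed positive excursions are therefore those with $m\le n_\ast(u)$, which shifts the upper limit of the $G_+$ integral by one. This yields expected cost at most $|T|+2C_-(\tau_-)\le(1+2\rho)|T|$.

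For consistency take $T=1$. The plus-offset condition gives $\tau_+=0$, so the expected cost is at most $1+2C_+(0)\le 1+2\chi$.

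The main obstacle is the averaging step. One must verify that the boundary index $n_\ast(u)$ behaves as in \Cref{lem:bid_cost} even when $G_+$ has jumps or flat pieces, including the case $G_+(\tau_+)=T$ where the tie is attained at $\tau_+$. I would handle this exactly as in that lemma, by working with $\tau_\pm$ as defined via suprema and using left-continuity. It also needs care that the $G_+$ and $G_-$ sums use compatible index ranges. In the $T<0$ case this is precisely what produces the extra $+1$ in $C_-$.
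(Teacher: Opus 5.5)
Your proposal is correct and follows essentially the same route as the paper. You write the realized cost as $|T|$ plus twice the completed excursions indexed by $\{G_\pm(m+U)<|T|\}$, with the $+1$ shift in the $G_+$ terms for negative targets. You then average over $U$ to obtain $|T|+2C_\pm(\tau_\pm)$, and conclude via the robustness condition and $G_\pm(\tau_\pm)\le|T|$ from left-continuity, with consistency from $\tau_+(1)=0$.

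The only cosmetic difference is in the averaging step. You use the lattice split $\tau_+=q+r$ before \eqref{eq:bid:compexample}, whereas the paper uses the indicator identity $\sum_n\int_0^1 f(n+u)\mathbf{1}_{\{G_+(n+u)<|T|\}}\dd u=\int_{\{G_+<|T|\}}f$ from the proof of \Cref{lem:bid_cost}. That identity disposes of the tie and flat-piece issues you flag, and it shows your averaged inequality is in fact an equality.
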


\subsection{Near-optimal excursion profile} \label{subsec:lin:tight}

In this section, we construct a near-optimal excursion profile, using the methods derived from \Cref{subsec:bid:tight}. 
The proofs can be found in Appendix~\ref{app:lin:ub}.

Let us say an excursion profile $G_{\pm}$ is \emph{induced by} $\psi : (0, 1] \to [1, \infty)$ if $\left. G_{+} \right\rvert_{(0, 1]} \equiv \psi$. We will consider \emph{tight} excursion profiles induced by $\psi$ without degrading performance.

\begin{definition}[Tight excursion profile]
    We say a $(\rho, \chi)$-excursion profile $G_{\pm}$ is \emph{tight} if it further satisfies 
    $C_+ (x) = \rho \, G_+ (x)$ and $C_- (x) = \rho \, G_- (x)$ for all $x \in \bbR$, and $C_+(0) = \chi$.
\end{definition}

The next lemma shows that one can construct the \emph{minimal} tight excursion profile induced by $\psi$, and its consistency is solely determined by a weighted integral of $\psi$.

\begin{restatable}{lemma}{lemlintight}
\label{lem:lin:tight_extension}
Given any bounded and integrable $\psi : (0, 1] \to [1, \infty)$, 
there exists $\chi > 0$ and a $(\rho, \chi)$-excursion profile $G_\pm$ induced by $\psi := G_+|_{(0, 1]}$ such that, for any $(\rho, \chi')$-excursion profile $G'_\pm$ induced by $\psi$, we have $\chi \leq \chi'$.
Moreover, if $\rho$ is parameterized by $s \in (0, s_\ast]$ as $\rho = \frac{1 + e^s}{2s}$, then $\chi$ is defined as follows:
\begin{equation}
    \chi
    =
    \frac{1}{1+e^s}
    \int_0^1
    \left(e^{2s(1-x)}-1\right)\psi(x)\dd x,
    \label{eq:lin:tight_extension_identity}
\end{equation}
where $s_\ast := 1 + W_0(e^{-1})$ (and hence, $\rho \ge \frac{1}{2W_0(e^{-1})}$).
\end{restatable}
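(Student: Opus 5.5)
The plan follows the bidding construction of \Cref{thm:bidding_ub} and the proof of \Cref{lem:bid:tightening}: build the \emph{least} profile compatible with $\psi$ by a monotone fixed-point iteration, then read off its consistency with an exponential-weight identity.

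\textbf{Step 1 (exponential rate).} First fix the growth rate. Substituting $G_+(x)=a e^{\lambda x}$ and $G_-(x)=b e^{\lambda x}$ into $C_\pm=\rho G_\pm$ gives $\rho\lambda a=a+b$ and $\rho\lambda b=a e^{\lambda}+b$. Eliminating $b$ yields $(\rho\lambda-1)^2=e^{\lambda}$, i.e.\ $\rho\lambda=1+e^{\lambda/2}$. With $\rho=\frac{1+e^s}{2s}$ the root is $\lambda=2s$, and then $b=a e^{s}$. The function $\lambda\mapsto(1+e^{\lambda/2})/\lambda$ attains its minimum at $\lambda=2s_\ast$. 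So for $s<s_\ast$ there is a second, larger root $\lambda'>2s$, and $s\le s_\ast$ is exactly the range in which a real exponential solution exists. This is the source of the restriction on $s$.

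\textbf{Step 2 (extending $\psi$ to the right).} Extend $G_+$ to $(1,\infty)$ by $G_+(x)=a e^{2sx}$, and choose $G_-$ on $(0,\infty)$ through $\rho G_-(x)=C_-(x)$ (for $x\ge1$ this is $b e^{2sx}$). The constant $a$ is chosen as small as possible subject to monotonicity and to robustness for $x\in(0,1]$, where only the inequality $C_+\le\rho\psi$ is required. This is the analogue of \eqref{eq:optbid:Gpositive}. Boundedness of $\psi$ ensures that $a<\infty$.

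\textbf{Step 3 (least fixed point on the left).} Define an operator $\cF_\pm$ on pairs of non-decreasing, non-negative functions. It keeps $G_+|_{(0,\infty)}$ and $G_-|_{(0,\infty)}$ fixed, and on $x\le0$ sets
\[
(\cF_\pm H)_+(x)=\tfrac1\rho C^H_+(x), \qquad (\cF_\pm H)_-(x)=\tfrac1\rho C^H_-(x),
\]
where $C^H_\pm$ are computed from $H$. The operator is order preserving. Start the iteration from the pair that vanishes on $x\le0$. Bound all iterates from above by a supersolution $(K e^{cx}, K' e^{cx})$ on $x\le0$, with $c$ strictly between $2s$ and $\lambda'$ so that $\cF_\pm H\le H$; for such $c$, $(1+e^{c/2})/c<\rho$, and $K,K'$ are taken large. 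Monotone convergence then gives a fixed point $G^\ast_\pm$. Arguing as in the proof of \Cref{thm:bidding_ub}, $G^\ast_\pm$ is a tight-on-the-left excursion profile with $\chi:=C_+(0)$.

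\textbf{Step 4 (minimality).} Take any $(\rho,\chi')$-excursion profile $G'_\pm$ induced by $\psi$. On $x\le0$ it satisfies $G'_\pm\ge\cF_\pm G'_\pm$, which is exactly the robustness condition. Also $G'_\pm\ge G^{(0)}_\pm$, after the analogous right-extension argument showing that $G'$ is at least as large on $(0,\infty)$ in the relevant integrals. Induction then gives $G'_\pm\ge G^{(n)}_\pm$ for every $n$, hence $G'_\pm\ge G^\ast_\pm$, and so $\chi'\ge C'_+(0)\ge C^\ast_+(0)=\chi$.

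\textbf{Step 5 (the formula \eqref{eq:lin:tight_extension_identity}).} On the tight region $x\le0$, differentiate $\rho G_\pm=C_\pm$. This gives the delay system $\rho G_+'=G_++G_-$ and $\rho G_-'(x)=G_+(x+1)+G_-(x)$. Multiply by the weight $e^{-2sx}$ coming from the adjoint exponential solution and integrate over $(-\infty,0]$. Because $c>2s$, the boundary terms at $-\infty$ vanish. What remains expresses $C_+(0)$ as the weighted integral $\int_0^1 w(x)\psi(x)\,\dd x$, where $w$ collects the contributions of $G_+|_{(0,1]}$ entering through the delay term. A direct computation should give $w(x)=\frac{e^{2s(1-x)}-1}{1+e^s}$; the relation $\rho\cdot 2s=1+e^s$ is what makes the coefficients collapse.

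\textbf{Main obstacle.} I expect Step 5 to be the hardest: choosing the right invariant, handling the delayed coupling between $G_+$ and $G_-$, and justifying the vanishing of the boundary terms. Step 3 is the second difficulty, since it needs the existence of a valid supersolution exponent $c$ precisely when $s\le s_\ast$.
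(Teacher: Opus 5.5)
Your Steps 3--5 follow the paper's proof. You use the same order-preserving operator on $(-\infty,0]$, iterated from the zero pair, and an exponential supersolution with rate strictly above $2s$. The paper takes $W_+=Le^{2cx}$, $W_-=L\eta e^{2cx}$ with $c\in(s,s_\ast)$. In your normalization $K,K'$ cannot be enlarged independently: you need $K'/K\in[e^{c}/(c\rho-1),\,c\rho-1]$, which is nonempty exactly when $(1+e^{c/2})/c\le\rho$, and then $K\ge\sup\psi$. Minimality comes from $G'\ge\cF G'$, and the formula from an $e^{-2sx}$-weighted identity, as in the paper.

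The genuine problem is Step 2. The paper does not attempt it, and it fails in general. $G_+$ on $(1,2]$ enters $C_-$, and hence $G_-$ and $C_+$ on $(0,1]$. So robustness on $(0,1]$ bounds $a$ from \emph{above}, while monotonicity bounds it from below, and the two can be incompatible. Some $\psi$ admit no completion at all. For $\psi\equiv1$, any induced profile $G'_\pm$ has $\rho G'_-\ge C'_-(0)=C'_+(0)+1$ on $(0,1]$, hence
\[
\rho=\rho\,G'_+(1)\ge C'_+(1)\ge\bigl(C'_+(0)+1\bigr)\bigl(1+\tfrac1\rho\bigr).
\]
This is violated at $s=1$: there $\rho=\frac{1+e}{2}\approx1.86$, and minimality plus the formula give $C'_+(0)\ge\frac{e^2-3}{2(1+e)}\approx0.59$.

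So the existence claim can only refer to the minimal tight extension on $(-\infty,0]$, which is all the paper's proof constructs. A completion to $x>0$ is built only for the specific $\psi$ of \Cref{thm:lin:Kbranch_construction}. Drop Step 2 and let $\cF$ act on pairs on $(-\infty,0]$ with $\psi$ as its only external input. This also removes the claim in Step 4 that $G'$ dominates your right extension on $(0,\infty)$, which you have no reason to expect and do not need. For $x\le0$, $C_+(x)$ involves only values on $(-\infty,0]$, and $C_-(x)$ additionally only $\psi$ on $(0,x+1]$. So the induction $G'\ge G^{(n)}$ runs on $(-\infty,0]$ alone.

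Step 5 is left at ``should give''; what is missing is how to combine the two equations. The weight $e^{-2sx}$ must be paired with the left null vector $(e^s,1)$ of the characteristic matrix at $\lambda=2s$, dual to your $(1,e^s)$ from Step 1. That is, take $e^s$ times the first equation plus the second. Your $G$-form is legitimate, since $G_\pm=C_\pm/\rho$ is locally absolutely continuous on $(-\infty,0]$. Integrate over $[-T,0]$ and use $2s\rho-1=e^s$:
\begin{itemize}
\item the $\int e^{-2sx}G_-$ terms cancel;
\item the $G_+$ terms combine to $e^{2s}\int_{-T}^{1-T}e^{-2sy}G_+\,\dd y-e^{2s}\int_0^1e^{-2sy}\psi\,\dd y$;
\item everything near $-T$ vanishes because $c>2s$.
\end{itemize}
This gives
\[
\rho\bigl(e^sG_+(0)+G_-(0)\bigr)=e^{2s}\int_0^1e^{-2sx}\psi(x)\,\dd x .
\]
Since $\rho G_+(0)=C_+(0)=\chi$ and $\rho G_-(0)=C_-(0)=\chi+\int_0^1\psi$, this is exactly \eqref{eq:lin:tight_extension_identity}, so your predicted weight $w$ is correct. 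The paper does the same computation on the primitives $A_\pm$, which costs one extra integration by parts on $[0,1]$.

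Finally, at $s=s_\ast$ the two real roots coincide, so no $c>2s$ satisfies $(1+e^{c/2})/c\le\rho$. Your argument, like the paper's choice $c\in(s,s_\ast)$, therefore does not cover that endpoint.
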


Building upon the above relation between $\rho$ and $\chi$ with respect to $\psi$, we derive our main result on learning-augmented linear search.

\begin{restatable}{theorem}{thmlinsearchub}
\label{thm:lin:Kbranch_construction}
Let $s_K \approx 0.5878$ denote the unique positive solution of $e^s(e^{2s} - 1 + 2s e^s) = (1 + e^s)^2$, and define $K : (0, s_\ast] \to (0, e^{2s_\ast}]$ as
\[
    K(s) := \begin{cases}
        \frac{e^s(e^{2s}-1+2se^s)}{(1+e^s)^2}, & \text{ for $s \in (0, s_K]$;} \\
        \xi(s), & \text{ for $s \in (s_K, s_\ast]$,}
    \end{cases}
\]
where $\xi(s)$ is the unique solution $\xi \in [1, e^{2s_\ast}]$ of $(e^s-\xi)\ln \xi + \xi (3+e^{-s})=e^s(e^s+2s-1)$.
Then, for every trade-off parameter $s \in (0, s_\ast]$, there exists a \((\rho(s),\chi(s))\)-excursion profile with
\[
\rho(s) = \frac{1+e^s}{2s},\quad 
\chi(s)=
\begin{cases}
\frac{e^{2s}-1-2s}{2s(1+e^s)} = \rho K(s)e^{-s} - 1, & \text{ for $s \in (0,  s_K]$},\\
\frac{e^{2s}+1+(1+K(s))\log K(s)-2K(s)-2s}
{2s(1+e^s)}, &  \text{ for $s \in (s_K,  s_\ast]$}.
\end{cases}
\]
\end{restatable}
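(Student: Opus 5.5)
By \Cref{lem:lin:tight_extension}, it suffices to exhibit, for each $s\in(0,s_\ast]$, a bounded integrable $\psi:(0,1]\to[1,\infty)$ whose weighted integral \eqref{eq:lin:tight_extension_identity} equals the claimed $\chi(s)$. The minimal tight excursion profile $G_\pm$ induced by $\psi$ is then automatically a $(\rho(s),\chi)$-excursion profile with $\rho(s)=\frac{1+e^s}{2s}$. One caveat: we must also check that this profile satisfies the robustness condition for $x>0$. I will therefore choose $\psi$ in the same shape as in \eqref{eq:optbid:Gpositive}, namely
\[
\psi(x)=\max\bigl(1,\,K e^{2s(x-1)}\bigr) \quad\text{for } 0<x\le 1,
\]
with $G_+$ and $G_-$ continuing exponentially with rate $2s$ (per unit of $x$, one period containing a plus and a minus excursion) for $x>1$. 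For this continuation the robustness identity reduces to the stationary equation behind $\rho=\frac{1+e^s}{2s}$. The free parameter $K\ge 1$ plays the role of $s\chi$ in the bidding case.

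The main computation is to plug $\psi$ into \eqref{eq:lin:tight_extension_identity} and minimize over $K$, subject to the robustness condition at $x\in(0,1]$.

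\textbf{Unconstrained regime ($s\le s_K$).} Here I expect that $\psi\equiv K e^{2s(x-1)}$ with $K\ge 1$ is forced, i.e., the flat part is absent. The integral is then elementary:
\[
\int_0^1 \bigl(e^{2s(1-x)}-1\bigr)K e^{2s(x-1)}\,\dd x = K\Bigl(1-\tfrac{1-e^{-2s}}{2s}\Bigr).
\]
The robustness constraint at $x\to 0^+$, namely $C_+(0^+)\le\rho\,G_+(0^+)$, ties $\chi$ to $K$. Together these should yield exactly the relation $\chi=\rho K(s)e^{-s}-1$ stated in the theorem, with $K(s)$ given by the first formula. The threshold $s_K$ is where this $K(s)$ hits the boundary value at which the constraint $\psi\ge 1$ begins to bind.

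\textbf{Flat regime ($s>s_K$).} Here $\psi$ equals $1$ on $(0,x_0]$, where $x_0=1-\frac{\ln K}{2s}$, and is exponential afterwards. Splitting the integral at $x_0$ produces $\ln K$ terms, which explains the $(1+K)\log K$ term in $\chi(s)$. The optimal $K$ is determined by requiring the robustness condition to be tight at the junction $x=0$: the profile $G_-$ on $(-1,0]$ obtained from the tight extension must match, giving one scalar equation. This is the analogue of $\xi(2-\ln\xi)=e^s$ and should reduce to
\[
(e^s-\xi)\ln\xi+\xi(3+e^{-s})=e^s(e^s+2s-1).
\]
Uniqueness of $\xi\in[1,e^{2s_\ast}]$ follows by monotonicity in $\xi$ together with checking the endpoint values. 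Continuity of the two branches at $s_K$ should be checked as a sanity test.

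\textbf{Main obstacle.} The hardest step is verifying the robustness condition for $x\in(0,1]$ and the matching across $x=0$ between the minimal tight extension on $x\le 0$ and the chosen $\psi$. The tight extension of \Cref{lem:lin:tight_extension} only controls $x\le 0$, so one must evaluate $C_\pm(0)$ from the lemma's construction and use it to bound $C_\pm(x)$ for $x>0$. For $C_+$ I would first try the explicit derivative argument from the bidding proof: show that $\rho G_\pm(x)-C_\pm(x)$ is nonnegative at $0^+$ and nondecreasing on the exponential part, and nonnegative on the flat part because $C_\pm$ grows only linearly there. This requires $C_+(0)=\chi$ and the closed-form value of $G_-(0)$ from the extension, which I would extract by integrating \eqref{eq:lin:tight_extension_identity}-type identities.
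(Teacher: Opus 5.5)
Your overall setup matches the paper's: the minimal tight extension on $x\le 0$ from \Cref{lem:lin:tight_extension}, $\psi=\max\{1,Ke^{2s(x-1)}\}$, then robustness on $x>0$. However, \textbf{your first branch is inverted}. For $s\le s_K$ one has $K(s)\le K(s_K)=1$, so $\psi=\max\{1,K(s)e^{2s(x-1)}\}\equiv 1$ on $(0,1]$. The profile is \emph{entirely flat}, not entirely exponential. The first-branch value is just \eqref{eq:lin:tight_extension_identity} with $\psi\equiv1$, namely $\frac{1}{1+e^s}\int_0^1(e^{2s(1-x)}-1)\,\dd x=\frac{e^{2s}-1-2s}{2s(1+e^s)}$. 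Your purely exponential $\psi=Ke^{2s(x-1)}$ cannot work there. The requirement $\psi\ge1$ forces $K\ge e^{2s}>1$. Moreover, the weight $e^{2s(1-x)}-1$ is positive on $[0,1)$, so any $\psi\ge1$ that is not a.e.\ equal to $1$ gives a strictly larger $\chi$ than the claimed one. Your own expression $\frac{K(2s-1+e^{-2s})}{2s(1+e^s)}$ indeed does not equal it. The flat part shrinks as $s$ grows and vanishes only at $s=s_\ast$, where $\xi(s_\ast)=e^{2s_\ast}$. The correct analogy with bidding is $s\le s_K\leftrightarrow s\le\ln2$ (where $\phi\equiv1$), not the reverse.

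\textbf{The mechanism you propose for tying $\chi$ to $K$ is also wrong, and the actual one is missing.} The condition $C_+(0^+)\le\rho G_+(0^+)$ only gives $\chi\le\rho$ when $\psi\equiv1$, and $\chi\le\rho Ke^{-2s}$ for your exponential $\psi$. It never gives $\chi=\rho K e^{-s}-1$. The parameter $K$ enters through $G_-$ on $(0,1]$, which you leave unspecified. The paper sets $G_-(0+)=Ke^{-s}$, taking $G_-(x)=Me^{s}e^{2s(x-1)}+(K-M)e^{-s}e^{x/\rho}$ with $M=\max\{1,K\}$; this is the solution of $\rho G_-'=G_-+G_+(\cdot+1)$ with that initial value. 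It then uses $C_-(0)=C_+(0)+\int_0^1\psi$. Requiring $C_-(0)=\rho G_-(0+)$ reads $\chi+\int_0^1\psi=\rho Ke^{-s}$.
\begin{itemize}
\item With $\psi\equiv1$, this is literally the definition of $K(s)$ in the first branch (the ``$-1$'' is $\int_0^1\psi$).
\item With the split $\psi$ and $\int_0^1\psi=1+\frac{K-1-\ln K}{2s}$, it is equivalent to the $\xi$-equation. This is the precise form of your vague ``matching'' step in the second branch.
\end{itemize}
With this choice the checks on $x>0$ go through:
\begin{itemize}
\item the minus constraint is tight by differentiation;
\item $C_+(1)=C_-(0)+\int_0^1G_-\le\rho G_+(1)$;
\item $\rho G_+'-G_+-G_-$ equals $(1-K)e^{-s}e^{x/\rho}\ge0$ for $x\ge1$ when $K<1$, and $0$ on the exponential part when $K\ge1$;
\item $C_+$ is monotone on the plateau;
\item $\chi\le\rho$ gives the plus-offset condition.
\end{itemize}
Without specifying $G_-$ on $(0,1]$ and this identity, neither branch's $K$ is actually derived.
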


\begin{restatable}{remark}{rmklinasymp}
    The profile of \Cref{thm:lin:Kbranch_construction} yields a $\roblin$-robust $\conslin$-consistent strategy with
    \(
    \roblin = \frac{2}{\conslin - 1} + \frac{7}{3} + O(\conslin - 1).
    \)
    In Appendix~\ref{sec:linsearch_lb}, we present a lower bound of
    \(
    \roblin = \frac{2}{\conslin - 1} + \frac{3}{2} + \Omega(\conslin - 1),
    \)
    demonstrating our strategy is near-optimal up to sub-leading constant.
    Both trade-off bounds are depicted in \Cref{subfig:linear-search-consistency-robustness-search}.
\end{restatable}

%% file: arxiv_include/5_conclusion.tex
\section{Conclusion}
We study learning-augmented online bidding and completely characterize the robustness--consistency trade-off for randomized algorithms via bidding profiles.  
We further extend our approach to the linear search problem, demonstrating its versatility.  

Several interesting directions arise from this work.  A natural next step is to close the remaining gap in the robustness--consistency trade-off for linear search; we conjecture that our profile-based approach leads to a Pareto-optimal strategy. Moreover, since our approach is tailored to the robustness-consistency trade-off, it would be interesting to investigate the optimal \emph{smoothness} achievable for these problems. Finally, it would also be interesting to find other problems in which our profile-based approach can be effectively applied.

%% file: arxiv_include/A_relwork.tex
\section{Further related work} \label{app:relwork}
Online bidding and linear search (also known as the cow-path problem) are classical problems in online optimization~\cite{chrobak2008incremental,bellman1963optimal,beck1964linear,beck1970yet,demaine2006online,kranakis2024survey}.  
Techniques developed for these problems have played a fundamental role in the design and analysis of online algorithms, with applications including contract scheduling, incremental clustering, hierarchical clustering, and list scheduling.
They have also influenced the study of offline optimization problems; for example, they have been used to establish connections between the minimum latency problem and $k$-TSP.  
We refer the interested reader to the survey of Chrobak and Kenyon-Mathieu~\cite{chrobak2006sigact} and the references therein.

We briefly review previous work on learning-augmented algorithms for linear search.  
Angelopoulos~\cite{angelopoulos2023online} proposed a deterministic Pareto-optimal learning-augmented algorithm for the problem.  
Later, Angelopoulos and Simon~\cite{angelopoulos2025learning} studied randomized learning-augmented algorithms for linear search alongside their results for online bidding.

Recent work on learning-augmented algorithms has also shown growing interest in \emph{distributional predictions}~\cite{dinitz2024binary,bai2025optimal,canonne2025little,kehne2025prophet,besbes2025beyond,cui2026ski,kim2026robust}.  
In this direction, online bidding and linear search with distributional predictions have been studied in~\cite{angelopoulos2024contract,angelopoulos2025learning}.

Another important aspect of learning-augmented algorithms is \emph{smoothness}, which relates algorithmic performance to prediction error~\cite{mitzenmacher2022algorithms}.  
A substantial body of work has focused on designing algorithms whose performance degrades smoothly as the prediction error increases~\cite{im2023online,angelopoulos2024contract,elenter2024overcoming}.  
In contrast, this paper focuses on characterizing the trade-off between robustness and consistency, leaving the question of optimal smoothness for these problems as an interesting direction for future work.

For a comprehensive collection of work on learning-augmented algorithms (or algorithms with predictions), we refer the reader to the following repository.\footnote{Alexander Lindermayr and Nicole Megow. \url{https://algorithms-with-predictions.github.io/}, Accessed: 2026-05-05.}

%% file: arxiv_include/B_moski_ob.tex
\newcommand{\Tski}{T_\mathrm{ski}}
\newcommand{\Pski}{P_\mathrm{ski}}
\newcommand{\optski}{\mathrm{OPT}_\mathrm{ski}}
\newcommand{\Bski}{B_\mathrm{ski}}
\section{Equivalence between multi-option ski rental and online  bidding} \label{app:moskiob}
It suffices to show a reduction from multi-option ski rental to online bidding with performance guarantees preserved, because the other direction has already been shown by \cite{shin25}.\footnote{In fact, Shin et al.~\cite{shin25} established an (almost) factor-preserving reduction from the \emph{button} problem, a finite variant of online bidding in which the minimum and maximum values are given in advance.  
One can then derive \Cref{lem:moski:ski2bid} by letting the maximum value tend to $\infty$.}
\begin{lemma}[cf. Lemma 5.1 of \cite{shin25}] \label{lem:moski:ski2bid}
    If there exists a $\rho$-robust, $\chi$-consistent algorithm for multi-option ski rental, then for all constant $\varepsilon \in (0, 1)$, there exists a $(\rho + \varepsilon)$-robust, $(\chi + \varepsilon)$-consistent algorithm for online bidding.
\end{lemma}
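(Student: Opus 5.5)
The lemma says: a $\rho$-robust, $\chi$-consistent algorithm for multi-option ski rental yields, for every $\varepsilon>0$, an online bidding algorithm with the same guarantees up to $\varepsilon$. I would prove it by simulating online bidding as a particular multi-option ski rental instance and translating a ski-rental algorithm into a bidding algorithm step by step.

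\textbf{The instance.} I fix a scale $\delta>0$ and a range $[1,M]$ of admissible targets, with $\delta$ small and $M$ large. I build a ski-rental instance with one buy-type option per candidate bid value $b$ on a fine grid $\{b_i\}$ in $[\delta,M]$.
\begin{itemize}
\item Option $i$ costs $b_i$.
\item Option $i$ covers the season up to ``length'' $b_i$.
\item The unknown season length plays the role of $T$.
\end{itemize}
Under this encoding, opening option $i$ is exactly submitting bid $b_i$, and the process stops once an option covering $T$ has been bought. The ski-rental prediction $\Pski$ is set to correspond to the bidding prediction $P$.

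\textbf{Matching the costs.} The optimum $\optski$ equals the cheapest option covering $T$, which is about $T$ up to a factor $(1+\eta)$ coming from the grid spacing. Any ski-rental algorithm, run with the information it receives in this instance, yields a randomized increasing sequence of bids. Its cost equals the ski-rental cost, because each purchase is a bid and termination happens exactly when the covering option has been bought. Hence the bidding algorithm satisfies
\[
\cost(T)\le \rho\,\optski \le \rho(1+\eta)\,T ,
\]
and, when the prediction is correct, it pays at most $\chi(1+\eta)\,T$.

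\textbf{Choosing the parameters.} Taking $\eta$ small enough gives $(\rho+\varepsilon,\chi+\varepsilon)$. To remove the finite range $[\delta,M]$, I would follow the footnote. Shin et al.~\cite{shin25} give an almost factor-preserving reduction from the button problem (online bidding with known minimum and maximum values) to multi-option ski rental. So I would cite their Lemma~5.1 for the bounded version and then let $M\to\infty$. Targets beyond $M$ are handled by appending a doubling tail starting at $M$, whose contribution relative to $T$ is made at most $\varepsilon/2$ by choosing $M$ large. Targets below the lower bound are excluded by the normalization $T\ge 1$ discussed in Appendix~\ref{app:str2alg}.

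The main obstacle is this limiting step: I must check that the additive losses in the button-problem reduction are uniform in $M$, so that the $\varepsilon$ loss does not blow up as $M\to\infty$.
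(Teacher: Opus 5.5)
Your proposal has a genuine gap in each of its two parts: the self-built ski-rental instance does not encode online bidding, and the doubling tail cannot implement the passage $M\to\infty$. For comparison, the paper's argument is just to cite Shin et al.'s (almost) factor-preserving reduction from the button problem and let the maximum value tend to infinity.

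\textbf{The direct encoding.} It misreads multi-option ski rental as the paper defines it. An option $(c,d)$ is a \emph{rental} covering $d$ consecutive days, so durations of purchased options add up, and the same option may be bought repeatedly.

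With options $(b_i,b_i)$, every option costs exactly one unit per day. Hence the algorithm that keeps renting the shortest option $(\delta,\delta)$ pays less than $T+\delta$, while the optimum is at least $T$. The instance is essentially $1$-competitive. Under your translation, this algorithm submits the single bid $\delta$ and never terminates.

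So the claimed identity (``each purchase is a bid and termination happens exactly when the covering option is bought'') is false. If it held, you would obtain a nearly $1$-robust online bidding algorithm, contradicting the classical randomized lower bound of $e$.

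A correct instance must make durations grow much faster than costs. Then coverage is governed by the most expensive option bought, and repeating cheaper options never pays off. The translation must also handle repeated and out-of-order purchases. This is exactly what the button-problem reduction of Shin et al.\ supplies. Since you end up citing their Lemma~5.1 for the bounded case anyway, your own instance should simply be dropped.

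\textbf{The limiting step.} The passage $M\to\infty$ is the only part not covered by the citation, and your proposed fix fails. Appending a doubling tail after $M$ cannot preserve $(\rho+\varepsilon)$-robustness, because robustness is a supremum over all targets.

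The bounded algorithm only has to be good on $[1,M]$, so nothing prevents it from spending nearly $(\rho+\varepsilon)M$ on a target at $M$. For a target slightly above $M$, you pay all of that plus a tail bid of size $2M$. That is a ratio of about $\rho+\varepsilon+2$, however large $M$ is; enlarging $M$ only moves the bad targets.

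What you need instead is a limit of the bounded algorithms $\mathcal{A}_M$ themselves. For each fixed target $T$ and prediction, the guarantee of $\mathcal{A}_M$ holds as soon as $M$ exceeds both. So you need a limiting randomized algorithm whose cost at each fixed $T$ is at most $\liminf_{M\to\infty}$ of the costs of $\mathcal{A}_M$. This can be done with aggregate measures, inverse profiles and Fatou's lemma as in Appendix~\ref{app:str2alg}, or with a tightness argument on the laws of the bid sequences. The cost at a fixed target is lower semicontinuous: a limiting sequence can only stop earlier, never later.

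This also removes your worry about uniformity in $M$. No uniform additive control is needed, only bounds that hold pointwise in $T$ for all sufficiently large $M$.
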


In the multi-option ski rental problem, we are given a set of renting options, each of which is defined by a tuple $(c, d) \in \bbR_{\geq 0} \times \left( \bbZ_{\geq 1} \cup \{\infty\}\right)$, representing that we are allowed to go skiing for $d$ days at cost $c$. 
The objective is to cover at the minimum total cost the whole skiing days $\Tski \in \bbZ_{\geq 1}$, which is unknown until the last day.
For the learning-augmented setting, we are additionally given a prediction $\Pski \in \bbZ_{\geq 1}$ on the last day $\Tski$.
For each $t \in \bbZ_{\geq 1}$, let $\optski(t)$ denote the minimum cost of a multiset of renting options to span at least $t$ day.
For any $x \in \bbR_{\geq 0}$, let $\Bski(x)$ denote a multiset of renting options that span the most number of days within the budget $x$; in other words, $\Bski(x)$ covers $\max \{t \in \bbZ_{\geq 1} : \optski(t) \leq x \}$ days.

We are now ready to describe the reduction of multi-option ski rental to online bidding.
Let $\cA$ denote a learning-augmented algorithm for online bidding.
Given a multi-option ski rental instance (except unknown $\Tski$), we first feed to $\cA$ the prediction $P := \optski(\Pski)$.
We then follow $\cA$ by, if $\cA$ makes a bid of $x$, appending $\Bski(x)$ into our solution for multi-option ski rental.
Whenever we are informed of the last day $\Tski$, we terminate the whole execution with reporting to $\cA$ that the target bid $T$ is $x'+\varepsilon$, where $x'$ is the second-to-last bid made by $\cA$, and $\varepsilon$ is infinitesimal.

Note that $x' < \optski(\Tski)$ since, otherwise, we would already have covered at least $\Tski$ days when $\cA$ made the bid $x'$ due to the definition of $\Bski(\cdot)$, contradicting that $x'$ is the second-to-last bid.
Hence, if the sum of all bids made by $\cA$ until its termination is within a factor $\rho$ of $T = x' + \varepsilon \leq \optski(\Tski)$, we can derive that the total cost of our solution for multi-option ski rental is also within the same factor $\rho$ of $\optski(\Tski)$, showing that the robustness follows.
For consistency, it suffices to consider the case where we report to $\cA$ that the target bid is $\optski(\Pski)$, where the rest of the argument follows analogously.
These together imply the following lemma.

\begin{lemma}
    If there exists a $\rho$-robust, $\chi$-consistent algorithm for online bidding, there exists a $\rho$-robust, $\chi$-consistent algorithm for multi-option ski rental.
\end{lemma}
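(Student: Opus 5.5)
This is the converse reduction. Given a $\rho$-robust, $\chi$-consistent algorithm $\cA$ for online bidding, I will build a $\rho$-robust, $\chi$-consistent algorithm for multi-option ski rental. The idea is to simulate $\cA$ and translate each of its bids into a purchase of renting options.

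\textbf{Construction.} First feed $\cA$ the prediction $P:=\optski(\Pski)$. Whenever $\cA$ submits a bid $x$, buy the multiset $\Bski(x)$. This multiset has cost at most $x$ and covers $\max\{t:\optski(t)\le x\}$ days. When the last day $\Tski$ is revealed, stop. To $\cA$ we report a target $T$ that is consistent with its entire run: $T:=x'+\varepsilon$, where $x'$ is the second-to-last bid.

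\textbf{Key observation.} The second-to-last bid satisfies $x'<\optski(\Tski)$. Otherwise $\Bski(x')$ would already have covered $\Tski$ days, and we would have stopped one bid earlier. Hence the reported target obeys $T\le\optski(\Tski)$ for infinitesimal $\varepsilon$. Also, our total ski cost is at most the sum of the bids $\cA$ made up to termination. These bids are exactly $\cA$'s cost on target $T$, because every earlier bid is $\le x'<T$ and the last bid is $\ge T$. Note that $T$ depends only on the run, not on the internal coins after termination. Robustness then gives
\[
\bbE[\text{ski cost}]\le \cost_{\cA}(T)\le \rho T\le \rho\,\optski(\Tski).
\]

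\textbf{Main obstacle.} The delicate part is that $T$ is random: it depends on $\cA$'s random choices. I would handle this by coupling. For each fixed target $T_0$, $\cA$ run on $T_0$ stops at the first bid $\ge T_0$. I will compare directly with the deterministic target $T_0:=\optski(\Tski)$. On every sample path, our process stops no later than $\cA$ run on $T_0$ would, because once a bid $x\ge\optski(\Tski)$ is made, $\Bski(x)$ covers $\Tski$ days. So the sum of bids we pay is at most $\cost$ of $\cA$ on $T_0$ pathwise. Taking expectations gives cost $\le\rho\,\optski(\Tski)$. This pathwise comparison avoids the $\varepsilon$ issue entirely.

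\textbf{Consistency.} When $\Tski=\Pski$, the same comparison with $T_0=\optski(\Pski)=P$ gives expected cost $\le\cost_\cA(P)\le\chi P=\chi\,\optski(\Tski)$. This proves the lemma.
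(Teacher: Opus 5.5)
Your proof is correct. The reduction itself is exactly the paper's: feed $\cA$ the prediction $\mathrm{OPT}_{\mathrm{ski}}(P_{\mathrm{ski}})$, buy $B_{\mathrm{ski}}(x)$ for each bid $x$, and use the fact that a single bid $x\ge \mathrm{OPT}_{\mathrm{ski}}(T_{\mathrm{ski}})$ already covers all $T_{\mathrm{ski}}$ days.

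The difference is in the robustness analysis. The paper reports the target $x'+\varepsilon$, which is consistent with the whole transcript, so each run is a legal online-bidding instance. It then argues that the sum of bids is within $\rho$ of $x'+\varepsilon\le \mathrm{OPT}_{\mathrm{ski}}(T_{\mathrm{ski}})$. You instead fix $\cA$'s coins and compare with $\cA$ run on the \emph{fixed} target $T_0=\mathrm{OPT}_{\mathrm{ski}}(T_{\mathrm{ski}})$. Both runs see the same feedback until the ski process stops, and it stops no later than the first bid $\ge T_0$. So the ski cost is pathwise at most the $T_0$-run's cost, giving $\mathbb{E}[\text{ski cost}]\le \cost_{\cA}(T_0)\le\rho T_0$.

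This buys genuine rigor. The target $x'+\varepsilon$ depends on $\cA$'s random choices, whereas $\rho$-robustness only bounds the expected cost for a target fixed in advance. The paper's step as written is therefore justified only for deterministic $\cA$. For the classical $e$-competitive strategy, every realization pays $\frac{e^2}{e-1}x'>e\,(x'+\varepsilon)$ for small $\varepsilon$. Your fixed-target comparison is the same device the paper uses for consistency, and it also removes the infinitesimal $\varepsilon$.

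You should delete the display $\mathbb{E}[\text{ski cost}]\le\cost_{\cA}(T)\le\rho T$ in your key-observation paragraph. It is precisely the unjustified adaptive-target step, and your coupling argument replaces it.
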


%% file: arxiv_include/C_bid.tex
\section{Additional details on online bidding}

\input{arxiv_include/C1_bid_profile}
\input{arxiv_include/C2_bid_tight}
\input{arxiv_include/C3_bid_str2prof}

%% file: arxiv_include/C1_bid_profile.tex
\subsection{Deferred proofs from \Cref{subsec:profile_function_framework}}
\rmkprofile*
\begin{proof}
    If $G$ is zero at some point, take supremum point $x_0$ for the zero set. The left-continuity implies that $G(x_0) = 0$, and from the robustness condition,
    \[
    \int_{-\infty}^{x_0 + 1} G(t) \dd t \le \rho G(x_0) = 0,
    \]
    implying that $G(t) \equiv 0$ for $(-\infty, x_0 + 1)$, contradicting the maximality of $x_0$.
\end{proof}

\lembidcost*
\begin{proof}
    We first claim that $\tau(T)\leq \rho T-1<\infty$.
    Note that
    \[G(\rho T-1)\geq\frac{1}{\rho}\int_{-\infty}^{\rho T}G(t)\,dt\geq \frac{1}{\rho}\int_{0}^{\rho T }1\,dt=T,\]
    where the first inequality follows from the robustness condition and the second inequality from the offset condition.
    Since $G$ is non-decreasing, $G(t)\geq T$ if $t\geq \rho T-1$, hence $\tau(T)\leq \rho T-1$.

    Next, we analyze the expected cost $\cost_G(T)$.
    After $U\sim\Unif(0,1]$ is sampled, note that for any $n\in\mathbb{Z}$, the algorithm makes the bid $G(n+U)$ if and only if $G(n-1+U)<T$.
    Therefore,
    \begin{align}
        \cost_G(T)
        &=\mathbb{E}_{U\sim \Unif(0,1]}{\sum_{n\in\bbZ} G(n + U) \mathbf{1}_{G(n - 1 + U)<T}}\nonumber\\
        &= \sum_{n\in\bbZ}\int_0^1 G(n+u) \mathbf{1}_{G(n - 1 + u)<T} \, \dd u \nonumber
        \\&
        = \int_{\{t : G(t-1) < T\}} G(t)\dd t \nonumber
        \\&
        = \int_{-\infty}^{\tau(T)+1}G(t)\dd t , \notag 
    \end{align}
    which concludes the proof.
\end{proof}

%% file: arxiv_include/C2_bid_tight.tex
\subsection{Deferred proofs in Section~\ref{subsec:bid:tight}}
\label{app:bid:tight}
Recall that
\[
    \rho(s):=\frac{e^s}{s},
    \qquad
    \chi(s):=
    \begin{cases}
    (e^s-1)/s & (0<s\le\ln2),\\[1mm]
    \xi(s)/s & (\ln2\le s\le1),
    \end{cases}
\]
where $\xi(s)$ is the unique solution $\xi \in[1,e]$ satisfying $\xi(2-\ln\xi)=e^s$.

We show the lemma that will be useful in the subsequent proof.
\begin{lemma}
    Fix a positive number $0<s\le 1$. For some $\chi$, let $G$ be a tight $(\rho(s),\chi)$-bidding profile induced by an integrable function $\phi:(0, 1] \to [1, \infty)$. 
    Then
    \begin{align}
        \chi \ge e^s\int_0^1e^{-sx}\phi(x)\dd x. \notag
    \end{align}
    \label{lem:bidding_bpb}
\end{lemma}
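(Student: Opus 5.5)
The plan is to turn the tightness identity into a first-order relation for the cumulative integral and test it against the weight $e^{-sx}$. This weight is chosen because $\rho(s)\,s=e^s$ makes the boundary contributions telescope.

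Let $F(x):=\int_{-\infty}^{x}G(t)\dd t$. This is finite for every $x$: for $x\le 1$ tightness gives $F(x)\le F(1)=\chi$, and $G$ is locally bounded beyond that. Since $G$ is non-decreasing and locally bounded, $F$ is absolutely continuous on every compact interval, with $F'=G$ almost everywhere. The tightness condition with $\rho=\rho(s)=e^s/s$ then reads
\[
F(x+1)=\frac{e^s}{s}\,G(x)\qquad\text{for all } x\le 0 .
\]

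Next, fix $M>0$, multiply both sides by $s e^{-sx}$, and integrate over $[-M,0]$. The right-hand side becomes $e^s\int_{-M}^{0}e^{-sx}G(x)\dd x$. For the left-hand side, integrate by parts using the absolute continuity of $x\mapsto F(x+1)$:
\[
\int_{-M}^{0}s e^{-sx}F(x+1)\dd x
= -F(1)+e^{sM}F(1-M)+\int_{-M}^{0}e^{-sx}G(x+1)\dd x .
\]
Substituting $y=x+1$, the last integral equals $e^{s}\int_{1-M}^{1}e^{-sy}G(y)\dd y$.

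Equating the two sides, the common part $e^s\int_{-M}^{0}e^{-sy}G(y)\dd y$ cancels up to the shift of the lower endpoint, which is absorbed into the boundary term. I expect this to leave
\[
F(1)=e^{sM}F(1-M)+e^{s}\int_{0}^{1}e^{-sy}\phi(y)\dd y ,
\]
where $G|_{(0,1]}=\phi$ was used. Before relying on it, I will recheck the endpoint bookkeeping here, namely that the integration ranges $[1-M,1]$ and $[-M,0]$ cancel exactly to leave $[0,1]$.

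Since $G>0$, we have $F(1-M)\ge 0$. Together with $\chi=F(1)$ (the second tightness condition), this gives $\chi\ge e^s\int_0^1 e^{-sx}\phi(x)\dd x$. The inequality in fact holds with the explicit slack $e^{sM}F(1-M)$ for every $M$.

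The only delicate point is justifying the integration by parts, i.e., absolute continuity of $F$ and finiteness of all integrals. Working on the finite window $[-M,0]$ avoids any limit at $-\infty$, so I do not expect a real obstacle.
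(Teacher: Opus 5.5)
Your integration by parts is correct, but the endpoint bookkeeping you flagged does fail, and the failure is substantive. For $M\ge 1$ one has $\int_{1-M}^{1}-\int_{-M}^{0}=\int_{0}^{1}-\int_{-M}^{1-M}$, so the correct identity is
\[
\chi=e^{s}\int_0^1 e^{-sy}\phi(y)\,\dd y+\Bigl(e^{sM}F(1-M)-e^{s}\int_{-M}^{1-M}e^{-sy}G(y)\,\dd y\Bigr).
\]
The extra term is negative, so $F(1-M)\ge 0$ no longer finishes the proof. Your version is in fact false: for $s=1$ and the tight $(e,e)$-profile $G(x)=e^{x}$ it reads $e=2e$.

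Put $\Phi(x):=e^{-sx}F(x)\ge 0$. One more integration by parts turns the bracket into $R=e^{s}\bigl(\Phi(-M)-s\int_{-M}^{1-M}\Phi(y)\,\dd y\bigr)$. Its sign is not visible termwise: you need to know how fast $F$ decays at $-\infty$ compared with $e^{sx}$. That is exactly the substantive half of the paper's proof. The paper shows $e^{-sx}A(x)$ is non-decreasing by bootstrapping $A'\ge c_nA$ with $c_0=1/(\rho-1)$ and $c_n=e^{c_{n-1}}/\rho\uparrow s$; this is where $s\le 1$ is used. Given that, $\Phi(-M)\to L\ge 0$ and $R\to e^{s}(1-s)L\ge 0$.

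You can also close the gap inside your own finite-window framework. The bracket equals $\chi-e^{s}\int_0^1e^{-sy}\phi(y)\,\dd y$, so it does not depend on $M$. Suppose it were negative, and set $\delta:=-e^{-s}R>0$. Then $\Phi(x)\le s\max_{[x,x+1]}\Phi-\delta$ for all $x\le -1$. Since $\Phi\ge 0$ and $s\le 1$, the maximum of $\Phi$ on any $[x_0,0]$ with $x_0\le -1$ cannot be attained in $[x_0,-1]$. Hence $\Phi$ is bounded on $(-\infty,0]$ by $\max_{[-1,0]}\Phi$. Then $T_k:=\sup_{(-\infty,-k]}\Phi$ satisfies $T_{k+1}\le sT_k-\delta\le T_k-\delta$, which eventually contradicts $T_k\ge 0$.

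Incidentally, windowing is the right instinct here. The paper integrates over $(-\infty,0]$ directly, where the separate integrals diverge when $L>0$. Its displayed identity $\chi=e^{s}L+e^{s}\int_0^1e^{-sx}\phi(x)\,\dd x$ should have $e^{s}(1-s)L$ in place of $e^{s}L$; again $G(x)=e^{x}$ with $s=1$ is a check. The inequality itself is unaffected.
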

\begin{proof} 
    Let us use $\rho$ to denote $\rho(s)$ for simplicity.
    For $x \in \bbR$, let
    \[
    A(x) := \int_{-\infty}^{x} G(t) \dd t.
    \]
    Note that $A'(x) = G(x)$ a.e. (almost everywhere) as $G$ is non-decreasing.  
    From the ``tightness'' of a tight bidding profile $G$ for $x\le 0$, we obtain a delayed-differential equation (which holds in equality):
    \[
    \rho A'(x) = A(x+1), \qquad x \le 0, \text{ a.e.}
    \]
    Let us assume that $e^{-sx}A(x)$ is non-decreasing.
    Then the following is well-defined:
    \[
        L := \lim_{t \to -\infty} e^{-st}A(t) \ge 0.
    \]
    Note that, for $x\le 0$,
    \[
    \int_{-\infty}^{0} e^{-sx}(\rho A'(x) - A(x+1)) \dd x = 0.
    \]
    Integrating by parts,
    \begin{align*}
    \int_{-\infty}^{0} e^{-sx}A(x+1) \dd x &= e^{s} \int_{-\infty}^{1} e^{-sx}A(x) \dd x \\
    &= e^{s}\left(\left.-\frac{e^{-sx}A(x)}{s}\right\rvert_{-\infty}^{1} + \int_{-\infty}^{1} \frac{e^{-sx}A'(x)}{s}\dd x\right)\\
    &= -\frac{\chi}{s} + \rho L + \rho \int_{-\infty}^{1} e^{-sx}A'(x)\dd x,
    \end{align*}
    where we recall $\rho = e^s/s$ for the last equality.
    We thus obtain
    \begin{align*}
    \int_{-\infty}^{0} e^{-sx}(\rho A'(x) - A(x+1))\dd x &= \frac{\chi}{s} - \rho L - \rho \int_{0}^{1} e^{-sx}A'(x) \dd x.
    \end{align*}
    Since $A'(x) = \phi(x)$ for $0 < x \le 1$ a.e., and $\rho = \rho(s)= e^{s}/s$,
    \[
    \chi = e^{s}L + e^{s}\int_{0}^{1} e^{-sx}\phi(x) \dd x \ge e^{s} \int_{0}^{1} e^{-sx}\phi(x) \dd x,
    \]
    where the inequality is from $L\ge 0$.

    It remains to show $e^{-sx}A(x)$ is non-decreasing. 
    Since $A(x)$ is continuous, 
    \begin{align}
    e^{-sx}A(x) \text{ is non-decreasing  on } \bbR \iff A'(x) \ge sA(x) \text{ a.e.} \label{eq:bid:dde_laplace}
    \end{align}
    
    To this end, we show that $A'(x) \ge s A(x)$ a.e. First observe that, 
    \begin{align*}
        \rho A'(x) &\ge A(x+1) \\
        &= A(x) + \int_{x}^{x+1} A'(t) \dd t\\
        &\ge A(x) + \int_{x}^{x+1} A'(x) \dd t\\
        &= A(x) + A'(x),
    \end{align*}
    where the inequality is due to the fact that $A'(t)=G(t)$ is non-decreasing. 
    Thus,
    \[
    A'(x) \ge \frac{1}{\rho-1}A(x).
    \]
    Now define $c_{0} := \frac{1}{\rho-1}$, and $c_{n} := e^{c_{n-1}}/\rho$ for $n \in \bbZ_{\ge 1}$. 
    We show $A'(x)\ge c_n A(x)$.
    From definition, $A'(x) \ge c_{0} A(x)$.
    Let us assume $A'(x) \ge c_{n-1}A(x)$.
    Since 
    \[
    \ln A(x+1) - \ln A(x) = \int_{x}^{x+1} \frac{A'(t)}{A(t)}\dd t \ge c_{n-1},
    \]
    we have that \[A'(x) \ge \frac{1}{\rho}A(x+1) \ge \frac{1}{\rho}e^{c_{n-1}} A(x) = c_{n}A(x).\]
    We now show $\{c_n\}$ converges to $s$ as $n\to\infty$.
    We first show that it is increasing and bounded from above by $s$.
    Note that $c_{0} < s$. 
    Suppose $c_{n-1} < s$.
    It is easy to see that $c_{n} = e^{c_{n-1}}e^{-s}s < s$. 
    Since $e^z/z$ strictly decreases on $z\in(0,1]$,
    \[
    \frac{c_n}{c_{n-1}} = \frac{e^{c_{n-1}}/\rho}{c_{n-1}} = f(c_{n-1})/f(s) > 1,
    \]
    implying $c_{n} > c_{n-1}$.
    By the monotone convergence theorem, $c_{n} \to c^{\ast}$. Since $e^{c^{\ast}} = \rho c^{\ast}$, $c^{\ast} = s$. Therefore, $A'(x) \ge c^{\ast}A(x) = sA(x)$, showing \eqref{eq:bid:dde_laplace}.
\end{proof}

We show a useful fact.
\begin{fact}\label{fact:schi}
    Given $s\in (0,1]$, $s > \ln 2 \iff s\chi(s) > 1$.
    Moreover, if $s\chi(s)>1$, $0\le x_0<1$ where $x_0:=1-\frac{\ln (s\chi(s))}{s}$.
\end{fact}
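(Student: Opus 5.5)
We treat the two assertions of \Cref{fact:schi} separately, using only the definition of $\chi(s)$ and elementary monotonicity arguments.

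\textbf{The equivalence.} There are two regimes.

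For $0<s<\ln 2$ we have $s\chi(s)=e^s-1<1$, since $e^s<2$.

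For $\ln 2\le s\le 1$ we have $s\chi(s)=\xi(s)$, where $\xi=\xi(s)\in[1,e]$ solves $g(\xi):=\xi(2-\ln\xi)=e^s$. The key observation is that $g'(\xi)=1-\ln\xi\ge 0$ on $[1,e]$, with equality only at $\xi=e$. So $g$ is a strictly increasing bijection from $[1,e]$ onto $[2,e]$, and $\xi(s)=g^{-1}(e^s)$ is strictly increasing in $s$. Since $g(1)=2=e^{\ln 2}$, we get $\xi(\ln 2)=1$, and therefore $\xi(s)>1$ exactly when $e^s>2$, i.e.\ when $s>\ln 2$.

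Combining the two regimes: $s\chi(s)>1$ holds precisely for $s\in(\ln 2,1]$. At the boundary point $s=\ln 2$, both formulas in the definition of $\chi$ agree and give $s\chi(s)=1$, so neither side of the equivalence holds there.

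\textbf{The range of $x_0$.} Assume $s\chi(s)>1$. By the above, $s>\ln 2$ and $1<\xi(s)\le e$.

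First, $\ln(s\chi(s))=\ln\xi(s)>0$, so $x_0=1-\ln\xi(s)/s<1$.

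Second, for $x_0\ge 0$ we need $\ln\xi(s)\le s$, i.e.\ $\xi(s)\le e^s$. From the defining equation, $e^s=\xi(2-\ln\xi)\ge \xi$ because $\ln\xi\le 1$ on $[1,e]$. Hence $\xi(s)\le e^s$ and $x_0\ge 0$.

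The only step requiring care is the strict monotonicity of $g$. It is needed both for the uniqueness of $\xi(s)$ and for the sign analysis of $\xi(s)-1$. It holds because $g'$ vanishes only at the single endpoint $\xi=e$.
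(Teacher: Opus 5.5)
Your proposal is correct and follows essentially the same route as the paper. Both arguments split into two cases. When $s\le\ln 2$, one has $s\chi(s)=e^s-1\le 1$. When $s>\ln 2$, the strict monotonicity of $\xi\mapsto\xi(2-\ln\xi)$ on $[1,e]$, together with its value $2$ at $\xi=1$, forces $\xi(s)>1$. Both then obtain $x_0\ge 0$ from $\xi(s)\le\xi(s)(2-\ln\xi(s))=e^s$. Your explicit treatment of the boundary point $s=\ln 2$ is only a minor tidying of the paper's argument.
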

\begin{proof}
    We first prove the first equivalence, beginning with the if part.
    Suppose toward contradiction that $s \le \ln 2$.
    Then $s\chi(s)=e^s-1$.
    On the other hand, $e^s-1 > 1\iff s > \ln 2$.
    
    For the only if part, fix any $s>\ln 2$.
    From $e^s>2$ and the fact that $x \mapsto x(2-\ln x)$ strictly increases on $[1,e]$, $\xi(s)(2-\ln\xi(s))=e^s>2=1\cdot (2-\ln 1)$, we obtain $\xi(s)>1$.
    Observe $s\chi(s)=\xi(s)$ from the definition of $\chi(s)$.

    For the second statement, $x_0 < 1$ follows from $s\chi(s) > 1$.
    Note that $s\chi(s)=\xi(s)$ since $s>\ln 2$.
    Furthermore, $2- \ln \xi(s) \ge 1$ since $\xi(s)\in[1,e]$.
    Therefore,
    \[
        s\chi(s) = \xi(s) \le \xi(s) (2-\ln \xi(s)) = e^s
    \]
    The following equivalence completes the proof.
    $x_0 \ge 0 \iff s\chi(s) \le e^s.$
\end{proof}

\subsubsection{Optimal bidding profile}
\thmbiddingub*

Observe that $\xi(s)$ is well defined since $\xi \mapsto \xi (2 - \ln \xi)$ is increasing on $\xi \in [1, e]$ with $1 \mapsto 2$ and $e \mapsto e$.

\begin{proof}
    It is trivial for $s=1$; the classical $(e, e)$-bidding profile $G(x)=e^x$ suffices.

    Now fix any $0<s<1$. 
    Let us use $\rho$ and $\chi$ to denote $\rho(s)$ and $\chi(s)$, respectively.
    Throughout the proof let $\phi(x) := \max(1,s\chi \cdot e^{s(x-1)})$ be a function defined on $(0,1]$.
    
    We first define $G$ when $x > 0$:
    \begin{align*}
        G(x):=\begin{cases}
            \phi(x),&0<x\le 1,\\
            \max(1,s\chi)\cdot e^{s(x-1)},& x>1.
        \end{cases}
    \end{align*}
    It remains to define $G$ on $x\le0$; we inductively define the remaining $G$.
    Let $G(0)=\frac{\chi}{\rho}$.
    Given $G$ defined on $[k,k+1)$ for an integer $k \le 0$, we define $G$ on $[k-1,k)$ by
    \[
        G(x)=G(k)-\frac{1}{\rho}\int_{x+1}^{k+1}G(t)\dd t.
    \]
    Note that $G$ is continuous on $x\le 0$.
    
    Suppose $G$ is non-negative and non-decreasing.
    We show that $G$ is a $(\rho,\chi)$-bidding profile.
    The offset condition trivially holds.
    From the definition, for any integer $k\le 0$ and $x \in [k-1,k)$,
    \begin{align}
        G(x) \ge G(x)-G(k) + \sum_{n\le k}(G(n)-G(n-1)) = \frac{1}{\rho}\int_{-\infty}^{x+1}G(t)\dd t,\label{eq:bid:ub-1}
    \end{align}
    which shows the robustness when $x< 0$.
    The above also holds when $x=0$ by the choice of $G(0)$, showing the consistency condition and the robustness condition when $x=0$.
    
    We show the robustness condition when $x>0$.
    Suppose  $s\chi \le 1$. From \Cref{fact:schi}, $s \le \ln 2$.
    Then,
    \[
        \chi=\frac{e^s-1}{s},
        \qquad
        G(x)=1\quad(0<x\le1),
        \qquad
        G(x)=e^{s(x-1)}\quad(x\ge1).
    \]
    Note that
    \[
        \int_{-\infty}^{x+1}G(t)\dd t\le \chi+\frac{e^{sx}-1}{s}.
    \]
    If $0<x\le1$, then
    \[
        \chi+\frac{e^{sx}-1}{s}
        =
        \frac{e^s+e^{sx}-2}{s}
        \le
        \frac{e^s}{s}
        =
        \rho G(x),
    \]
    because $s\le\ln2$ implies $e^{sx}\le2$. If $x\ge1$, then
    \[
        \chi+\frac{e^{sx}-1}{s}
        \le
        \frac{e^{sx}}{s}
        =
        \rho e^{s(x-1)}
        =
        \rho G(x),
    \]
    because $\chi\le1/s$.
    
    Now suppose $s\chi>1$. Then $G(x)=s\chi\cdot e^{s(x-1)}$ for $x \ge 1$. Hence
    \[
        \int_{-\infty}^{x+1}G(t)\dd t \le\chi+\int_1^{x+1}s\chi\cdot e^{s(t-1)}\dd t
        =
        \chi e^{sx}.
    \]
    It is easy to see that the robustness condition holds when $x> 1$.
    Consider any $x\in (0,1]$.
    If $G(x)=s\chi \cdot e^{s(x-1)}$, then
    \[
        \rho G(x)=\frac{e^s}{s}\cdot s\chi \cdot e^{s(x-1)}
        =
        \chi e^{sx}.
    \]
    If $G(x)=1$, which implies $s\chi \cdot e^{s(x-1)}\le1$, then the above equation shows $\chi e^{sx} \le \rho = \rho G(x)$.

    Note that \eqref{eq:bid:ub-1} holds in equality.
    Suppose toward contradiction that $\lim_{n \to -\infty}G(n)=\varepsilon$ for some $\varepsilon>0$.
    Then, for any integer $N<0$, $G(0)-G(N)\ge \frac{|N|\varepsilon}{\rho}$ since $G$ is non-decreasing.
    Sufficiently small $N$ leads to $G(N)<0$, contradicting the non-negativity of $G$.
    Therefore, $G$ is a tight $(\rho,\chi)$-bidding profile.

    In what follows, we show $G$ is non-negative and non-decreasing.
    Recall the definition of the functional operator $\cF$.
    \defncF*
    Observe that our bidding profile $G$ on $x\le 0$ can be equivalently represented as the $(-\infty,0]$ region of a non-negative, non-decreasing function $G^\ast$ defined on $(-\infty,1]$ such that $G^\ast = \cF G^\ast$, $\int_{-\infty}^1 G^\ast(t)\dd t = \int_{-\infty}^1 G(t)\dd t$ and $G^\ast\equiv \phi$ on $(0,1]$.
    This can be seen from $G^\ast(0)=G(0)$, and for $x \in [k-1,k)$ for some integer $k\le 0$,
    \begin{align*}
        G^\ast(x) = \frac{1}{\rho} \int_{-\infty}^{x+1} G^\ast(t) \dd t &= \frac{1}{\rho} \left( \int_{-\infty}^{k+1} G^\ast(t) \dd t - \int_{x+1}^{k+1} G^\ast(t) \dd t \right)
        \\&
        = G^\ast(k) - \frac{1}{\rho} \int_{x+1}^{k+1} G^\ast(t) \dd t = G(x)
    \end{align*}
    where the first and the third equalities are from $G^\ast = \cF G^\ast$ and the last equality is from the inductive hypothesis.
    
    We show such $G^\ast$ exists.
    We exploit the \emph{order-preserving} property of $\cF$: if $A \le B$ (i.e., $A(x)\le B(x)$ for all $x$ in the domain) for two functions $A,B$, then $\cF A \le \cF B$.
    Let $G^{(0)}$ be a function defined on $x\le 1$ such that $G^{(0)}(x)=0$ for $x\le 0$ and $G^{(0)}(x)=\phi(x)$ for $0<x\le 1$.
    For $n \in \bbZ_{\ge 0}$, $G^{(n+1)}(x) := \cF G^{(n)}(x)$.
    Note that the order-preserving property shows the following implication:
    \[
        G^{(n)}\le G^{(n+1)} \Longrightarrow \cF G^{(n)}\le \cF G^{(n+1)} \Longrightarrow G^{(n+1)}\le G^{(n+2)}.
    \]
    Since $G^{(0)} \le G^{(1)}$, we have $G^{(n)}\le G^{(n+1)}$ for all $n \in \bbZ_{\ge 0}$.
    
    We show for all $n \in \bbZ_{\ge 0}$, $G^{(n)}$ is bounded from above by 
    \begin{align*}
        H(x):=\begin{cases}
            Ke^{cx}, & x\le 0, \\
            \phi(x), &0<x\le 1,
        \end{cases}
    \end{align*}
    where $c\in (s,1)$ and $K>0$ is sufficiently large.
    
    First we show $\cF H \le H$.
    Since $x \mapsto e^x/x$ strictly decreases on $0<x\le 1$, $e^c/c < e^s/s=\rho$.
    Therefore, for $x\le -1$,
    \[
        \cF H(x) = \frac{1}{\rho}\int_{-\infty}^{x+1}H(t)\dd t = \frac{1}{\rho}Ke^{cx}\frac{e^c}{c} \le Ke^{cx} = H(x).
    \]
    For $x \in (-1,0]$, using $\phi(x)\le M e^{sx}$ for some $M>0$,
    \begin{align*}
        \cF H(x) &= \frac{1}{\rho}\left(\int_{-\infty}^{0}Ke^{cx}\dd t +\int_{0}^{x+1}\phi(t)\dd t\right) 
        \\&
        \le \frac{1}{\rho}\left(\frac{K}{c} + \int_{-\infty}^{x+1}M e^{st} \dd t\right)
        \\&
        \le \frac{K}{\rho c} + \frac{Me^s}{\rho s}e^{sx} \le Ke^{cx} = H(x).
    \end{align*}
    where the last inequality is from $c>s$ and the choice of sufficiently large $K$. 
    Along with $G^{(0)}\le H$ (which holds from the definition), we obtain the following:
    \[
    G^{(n)} \le \cF^{(n)} G^{(0)} \le \cF^{(n)} H \le H.
    \]
    By the monotone convergence theorem, $\{G^{(n)}\}$ converges.
    We show $\lim_{n\to \infty}G^{(n)}=G^\ast$.
    From the definition of $\cF$, it is easy to see that $G^\ast$ is non-negative and non-decreasing and $G^\ast \equiv \phi$ on $(0,1]$.
    Since $G$ is a tight $(\rho,\chi)$-bidding profile, showing the following completes the proof.
    \[
        \chi^\ast := \int_{-\infty}^1 G^\ast(t)\dd t = \chi.
    \]
    Consider a $(\rho,\chi^\ast)$-bidding profile defined by $G^*(x)$ on $x\le 1$ and $G(x)$ on $x>1$.
    Observe $G^\ast = \cF G^\ast$ implies $G^\ast(x)=\frac{1}{\rho}\int_{-\infty}^{x+1}G^\ast(t)\dd t$.
    Therefore, such bidding profile is tight and induced by $\phi$.
    By applying \Cref{lem:bidding_bpb}, 
    \[
        \chi^\ast \ge e^s \int_0^1 e^{-sx}\phi(x)\dd x.
    \]
    In fact, in the proof of \Cref{lem:bidding_bpb}, the only inequality was from 
    \[
    L:=\lim_{x\to -\infty}e^{-sx}\int_{-\infty}^x G^\ast(t) \dd t \ge 0.
    \]
    Recall $G^\ast(x)$ is bounded from above by $Ke^{cx}$ on $x\le 0$ where $c > s$ and thus $L=0$, implying
    \[
        \chi^\ast = e^s \int_0^1 e^{-sx}\phi(x)\dd x.
    \]
    Suppose $s\chi \le 1$. From \Cref{fact:schi}, $s\le \ln 2$, which implies $\phi(x)=1$ for all $0<x\le 1$, and thus $\chi^\ast=\chi$.
    Otherwise, let $x_0:=1-\ln(s\chi)/s \in [0,1)$. 
    Then
    \begin{align*}    
        \chi^\ast &= e^s \left( \int_0^{x_0} e^{-sx}\cdot 1\dd x + \int_{x_0}^1 e^{-sx}s\chi\cdot e^{s(x-1)}\dd x \right)
        \\&
        =\frac{e^{s}-s\chi}{s} + \chi\ln(s\chi)
        \\&
        =\frac{\xi(2-\ln\xi)-s\chi}{s} + \chi\ln(s\chi)
        \\&
        =\frac{s\chi(2-\ln(s\chi))-s\chi}{s} + \chi\ln(s\chi)
        \\&
        = \chi.
    \end{align*}
\end{proof}

\subsubsection{Lower bounds}
\lembidtightening*
\begin{proof}
    Recall the functional operator $\cF$ defined in the proof of \Cref{thm:bidding_ub}.
    In a high level, we show applying $\cF$ to $G$ yields a $(\rho,\chi')$-bidding profile with $\chi'\le \chi$.
    We then show that by iteratively applying $\cF$ to $G$, the process converges to a tight bidding profile.
    Let $G^{(0)} := \left. G \right\rvert_{(-\infty, 1]}$, and for $n \in \bbZ_{\ge 0}$, $G^{(n+1)}(x) := \cF G^{(n)}(x)$.
    Note that the order-preserving property of $\cF$ shows the following implication:
    \[
        G^{(n)}\ge G^{(n+1)} \Longrightarrow \cF G^{(n)}\ge \cF G^{(n+1)} \Longrightarrow G^{(n+1)}\ge G^{(n+2)}.
    \]
    From the robustness condition of bidding profiles and $\cF$,
    for $x \le 0$,
    \[
        G^{(0)}(x) \ge \frac{1}{\rho}\int_{-\infty}^{x+1}G^{(0)}(t)\dd t = \cF G^{(0)}(x) = G^{(1)}(x).
    \]
    Since $G^{(0)}(x) = \cF G^{(0)}(x)$ for $x \in (0,1]$, $G^{(0)} \ge G^{(1)}$ and thus we have $G^{(n)}\ge G^{(n+1)}$ for all $n \in \bbZ_{\ge 0}$.
    Since applying $\cF$ to a non-negative function always yields a non-negative function, $\{G^{(n)}\}$ converges to a function, say $G^{\ast}$, as $n \to \infty$ by the monotone convergence theorem.

    To see that $G^\ast$ glued with $G$ for $x>1$ is a tight $(\rho,\chi')$-bidding profile for some $\chi'\le \chi$, we simply follow the proof of \Cref{thm:bidding_ub}.
    In particular, 
    \[
    \chi' := \int_{-\infty}^{1} G^{\ast}(t)\dd t \le \int_{-\infty}^{1} G(t) \dd t \le \chi.
    \]
    This completes the proof.
\end{proof}

We show the following lemma before proving \Cref{thm:bidding_lb}.
\begin{lemma}\label{lem:bidding_phi}
    Fix $0<s \le 1$, and an integrable function $\phi : (0, 1] \to [1, \infty)$. If there exists a tight $\left(\rho:=\rho(s), \chi\right)$-bidding profile induced by some $\phi$, then
    \begin{align*}
        \phi(x)\ge \max(1,s\chi e^{s(x-1)}).
    \end{align*}
\end{lemma}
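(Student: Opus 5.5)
The inequality $\phi\ge 1$ is immediate from the codomain of $\phi$ (equivalently, the offset condition). So the whole task is the bound $\phi(x)\ge s\chi e^{s(x-1)}$ for $x\in(0,1]$. The plan is to combine tightness at $x=0$ with the robustness condition at points $x\in(0,1]$, and then use the exponential growth estimate already established in the proof of \Cref{lem:bidding_bpb}.

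First, tightness gives $A(1)=\int_{-\infty}^{1}G(t)\dd t=\chi$, where $A(x):=\int_{-\infty}^x G(t)\dd t$ as in \Cref{lem:bidding_bpb}. Next, the robustness condition at $x\in(0,1]$ reads $\rho\,\phi(x)=\rho G(x)\ge A(x+1)$. So it suffices to show $A(x+1)\ge \chi e^{sx}=A(1)e^{sx}$, because then
\[
\phi(x)\ge \frac{\chi e^{sx}}{\rho}=s\chi e^{s(x-1)},
\]
using $\rho=e^s/s$.

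The key step is that $e^{-sy}A(y)$ is non-decreasing on all of $\bbR$, i.e.\ $A'(y)\ge sA(y)$ a.e. This was proved inside \Cref{lem:bidding_bpb}, and the argument only used two facts: $\rho A'(y)\ge A(y+1)$, which is the robustness condition (valid for every $y\in\bbR$, not just $y\le0$), and the monotonicity of $G=A'$. The bootstrapping sequence $c_n\uparrow s$ then yields $A'\ge sA$ everywhere. I would re-invoke that argument verbatim, noting that it applies globally and not only on $y\le0$.

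Applying the monotonicity between $1$ and $x+1$ gives $e^{-s(x+1)}A(x+1)\ge e^{-s}A(1)$, that is, $A(x+1)\ge \chi e^{sx}$, which completes the proof. The main point to check is that \Cref{lem:bidding_bpb}'s derivation of $A'\ge sA$ indeed holds for all real arguments. In particular, the step $\rho A'(y)\ge A(y)+A'(y)$ needs $G$ non-decreasing on $[y,y+1]$, which is true globally, and $A$ finite, which follows from Remark~\ref{rmk:bid:positivity} and the robustness condition. Pointwise versus a.e.\ issues at $x$ can be handled through the left-continuity of $G$, or by simply applying robustness pointwise as above; no derivative of $\phi$ is needed.
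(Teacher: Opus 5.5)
Your proposal is correct and follows the paper's proof essentially verbatim. Both use $A(1)=\chi$ from tightness, the pointwise robustness bound $\rho\,\phi(x)\ge A(x+1)$ on $(0,1]$, and the global inequality $A'\ge sA$ taken from the proof of \Cref{lem:bidding_bpb}; the paper integrates $A'/A$ from $1$ to $x+1$, which is equivalent to your monotonicity of $e^{-sy}A(y)$, and it likewise relies on that bootstrapping argument holding on all of $\bbR$.
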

\begin{proof}
    As in the proof of \Cref{lem:bidding_bpb}, define $A(x) := \int_{-\infty}^{x} G(t)\dd t.$
    Recall from \eqref{eq:bid:dde_laplace} in the proof of \Cref{lem:bidding_bpb} that $A'(x) \ge s A(x)$ for $x \in \bbR$.
    Therefore, for $x>0$,
    \[
    \ln A(x+1) - \ln \chi = \ln A(x+1) - \ln A(1) = \int_{1}^{x+1} \frac{A'(t)}{A(t)}\dd t \ge \int_{1}^{x+1} s\dd t = sx.
    \]
    From the robustness condition for $0 < x \le 1$,
    \[
    \phi(x) = A'(x) \ge \frac{1}{\rho}A(x+1) \ge \frac{1}{\rho}A(1)e^{sx} = s\chi e^{s(x-1)}.
    \]
    Meanwhile, we have $\phi(x) \ge 1$ from the definition of a bidding profile, completing the proof.
\end{proof}

We are now ready to prove the following theorem.
\thmbiddinglb*
\begin{proof} 
    Let $G$ be a tight $(\rho(s),\chi)$-bidding profile, and let $\phi$ be a function on $(0,1]$ where $G$ is induced.
    From \Cref{lem:bidding_bpb,lem:bidding_phi},
    \begin{align}
        \chi \ge e^{s}\int_{0}^{1}e^{-sx}\phi(x) \dd x \ge e^{s}\int_{0}^{1} e^{-sx}\max(1, s\chi e^{s(x-1)})\dd x.\label{eq:bid:chilb}
    \end{align}
    Note that the following holds in general. 
    \[
    \chi \ge e^s\int_0^1 e^{-sx}\dd x = \frac{e^s-1}{s}.
    \]
    Suppose toward contradiction that $\chi < \chi(s)$.
    If $s \le \ln 2$, then $\chi(s)=(e^s-1)/s$ which contradicts $\chi \ge (e^s-1)/s$.
    Thus, it must be the case that $s > \ln 2$.
    Note $s \chi \ge e^s-1 > 1$.
    
    From \Cref{fact:schi}, $s\chi(s)>1$ and $x_0 \in [0,1)$ where 
    \[
        x_{0}:=1 - \frac{\ln(s\chi(s))}{s}.
    \]
    Let 
    \[
        y_{0}:=1 - \frac{\ln(s\chi)}{s}.
    \]
    Note that $s\chi > 1$ implies $y_0 < 1$.
    Since $s\chi<s\chi(s)$, we have $y_0 > x_0$.
    Thus, $y_0 \in (x_0,1)\subseteq(0,1)$.

    By further bounding \eqref{eq:bid:chilb} from below,
    \[
    \chi \ge e^{s} \left(\int_{0}^{y_0} e^{-sx}\dd x + \int_{y_0}^{1} e^{-sx}s\chi e^{s(x-1)}\dd x \right) = \frac{e^{s}-s\chi}{s} + \chi\ln(s\chi),
    \]
    implying
    \[
    s\chi(2 - \ln(s\chi)) \ge e^{s}.
    \]
    Let $f(x):=x(2-\ln x)$.
    Recall that $f(x)$ is strictly increasing on $[1,e]$.
    By definition, $\xi(s)=s\chi(s) \in [1,e]$ satisfies $f(s\chi(s))=e^s$.
    Recall $1<s\chi<s\chi(s)\le e$.
    The strict monotonicity of $f$ implies 
    \[
        f(s\chi)<f(s\chi(s))=e^s,
    \]
    which directly contradicts $f(s\chi)=s\chi(2 - \ln(s\chi)) \ge e^{s}$.
\end{proof}

%% file: arxiv_include/C3_bid_str2prof.tex
\subsection{Deferred proofs from \Cref{subsec:bid:alg2profile}} \label{app:str2prof}

\lembiddinggconvert*

\begin{proof}
    The if part is proved in \Cref{cor:G_bidding}. We prove the only-if part.

    Let any $\rho$-robust $\chi$-consistent (randomized) strategy $\scriptA$ be given, whose prediction is given as $P=1$.
    Recall $\scriptA$ is a probability distribution over bi-infinite sequences; let $(Q_{n})_{n\in \bbZ}$ denote a random sequence.
    Since the strategy is $\rho$-robust and $\chi$-consistent,
    \begin{equation}
        \cost_\scriptA(T)=\bbE{\sum_{n\in\bbZ}Q_n\mathbf 1_{Q_{n-1}<T}}\le \rho T, \quad \forall T>0,\label{eq:bid_alg_rob}
    \end{equation}
    and
    \begin{equation}
        \cost_\scriptA(1)=\bbE{\sum_{n\le0}Q_n}\le \chi.\label{eq:bid_alg_cons}
    \end{equation}
    
    Let $\mu_n$ be the marginal probability distribution of $Q_n$, and $\mu$ be its aggregate measure.
    \[
        \mu:=\sum_{n\in\bbZ}\mu_n.
    \]
    In other words, for every bounded interval $I\subseteq(0,\infty)$,
    \[
        \mu(I)=\sum_{n\in\bbZ}\bbE{\mathbf{1}_{Q_n\in I}}
        =
        \bbE{\sum_{n\in\bbZ}\mathbf{1}_{Q_n\in I}}.
    \]

    Clearly, $\mu$ is supported on $(0, \infty)$.
    {
    Now we verify that $\mu$ is locally finite on $(0,\infty)$ 
    while $\mu((0,1))=\mu([1,\infty))=\infty$.
    For any $T\in(0,\infty)$, consider a set $[T,(2\rho-1)T)$ and let $x:=\mu([T,(2\rho-1)T))$.
    {Since $\cost_\scriptA((2\rho-1)T)\le \rho (2\rho-1)T$}, we have
    \begin{align*}
        Tx&=\bbE{\sum_{n\in \bbZ}T\cdot \mathbf{1}_{Q_n\in [T,(2\rho-1)T)}}\\
        &\leq \bbE{\sum_{n\in \bbZ}Q_n\cdot \mathbf{1}_{Q_n\in [T,(2\rho-1)T)}}\\
        &\leq \bbE{\sum_{n\in \bbZ}Q_n\cdot \mathbf{1}_{Q_{n-1}<(2\rho-1)T}}\leq \rho(2\rho -1)T,
    \end{align*}
    thus $x\leq \rho(2\rho -1)$.
    On the other hand, {from $\cost_\scriptA(T)\le \rho T$}, we have
    \begin{align}
     \rho T&\geq \bbE{\sum_{n\in \bbZ}Q_n\cdot \mathbf{1}_{Q_{n-1}<T}}\nonumber\\
        &\geq \bbE{\sum_{n\in \bbZ}Q_n\cdot \mathbf{1}_{Q_{n-1}<T\leq Q_n}}\nonumber\\
        &\geq \bbE{\sum_{n\in \bbZ}Q_n\cdot \mathbf{1}_{Q_{n-1}<T\leq Q_n<(2\rho-1)T}}+\bbE{\sum_{n\in \bbZ}Q_n\cdot \mathbf{1}_{Q_{n-1}<T<(2\rho-1)T\leq Q_n}}\nonumber\\
        &\geq T\cdot\min\{1,x\}+(2\rho -1)T\cdot (1-\min\{1,x\}),\label{eq:mu_lb}
    \end{align}
    To check the inequality \eqref{eq:mu_lb}, note that ${\sum_{n\in \bbZ}Q_{n} \cdot \mathbf{1}_{Q_{n-1}<T\leq Q_n}}$ is a random variable that represents the stopping bid of $\scriptA$ when the target is at $T$. If $x<1$, then the stopping bid is larger than or equal to $(2\rho -1)T$ with probability $\geq 1-x$.
    Therefore,
    \begin{equation*}
        2(\rho -1)\min\{1,x\}\geq (\rho - 1),\qquad x\geq \min\{1,x\}\geq\frac{1}{2}.
    \end{equation*}
    Overall, we have
    \[\frac{1}{2}\leq \mu([T,(2\rho-1)T))\leq \rho(2\rho -1),\]
    where the upper bound implies the local finiteness while the lower bound implies $\mu((0,1))=\mu([1,\infty))=\infty$.
    }
    
    We now define \(G : \bbR \to (0, \infty)\) as follows:
\[
    G(x) := \begin{cases}
        \sup\{0< t<1:\ \mu([t,1)) \ge -x\}, & \text{ if $x < 0$;} \\
        \inf\{t\ge1:\ \mu([1,t])\ge x\}, & \text{ if $x \geq 0$.}
    \end{cases}
\]
Since $\mu([t,1))<\infty$ for all $0<t<1$ by the local finiteness of $\mu$ and $\lim_{t\downarrow 0}\mu([t,1))=\infty$,
$G(x)$ is well-defined, finite, and strictly positive for all $x<0$.
Similarly, since $\mu([1,t])<\infty$ for all $t>1$ and $\lim_{t\to \infty}\mu[1,t]=\infty$, the same properties hold for all $x\geq 0$.

Note that for every bounded interval
\(I\subseteq(0,\infty)\),
\begin{align}
    m(G^{-1}(I))=\mu(I),
    \label{eq:bidding_pushforward}
\end{align}
where \(m\) denotes the Lebesgue measure; In other words, \(\mu\) is a pushforward measure of the Lebesgue measure by \(G\). 

We show that $G$ satisfies all the definition of a $(\rho, \chi)$-bidding profile.
\(G\) is non-decreasing by its definition. We may assume that $G$ is left-continuous; otherwise, we can modify $G$ to be left-continuous without changing the value of its integral. This is justified because $G$ is monotone, implying it has only countably many discontinuities, hence the integration of $G$ over the Lebesgue measure like \eqref{eq:bid_precost} and \eqref{eq:bid_postcost} stays intact after the change. The offset condition is clear by the definition of $G$:
\[
G(x)<1\quad(x<0),
\qquad
G(x)\ge1\quad(x>0).
\]

    Finally, we show that the strategy driven from $G$ has no larger expected cost at any target,
    which naturally implies both robustness and consistency conditions.
    Let any target $T>0$ be given.
    From the original strategy, let us decompose the bidding cost for $T$ into
    \begin{align}
        \cost_\scriptA(T)&=\bbE{\sum_{n\in\bbZ}Q_n\mathbf{1}_{Q_{n-1}<T}}\notag\\
        &=\bbE{\sum_{n\in\bbZ}Q_n\mathbf{1}_{Q_{n}<T}}+\bbE{\sum_{n\in\bbZ}Q_n\mathbf{1}_{Q_{n-1}<T\leq Q_n}}.\label{eq:decompose}
    \end{align}
    The first term of \eqref{eq:decompose} simplifies to
    \begin{align}
        \bbE{\sum_{n\in\bbZ}Q_n\mathbf{1}_{Q_{n}<T}}
        &=\sum_{n\in\bbZ}\bbE{Q_n\mathbf{1}_{Q_{n}<T}}\notag\\
        &=\sum_{n\in\bbZ}\int_{(0,T)} t\,d\mu_n(t)\notag\\
        &=\int_{(0,T)} t\,d\mu(t)=\int_{-\infty}^{\tau(T)}G(t)\,dt,\label{eq:bid_precost}
    \end{align}
    where $\tau(T):=\sup\{t\in\mathbb{R}:G(t)<T\}$ is defined as in \Cref{lem:bid_cost}.
    The last equality is due to the change of variable formula with \eqref{eq:bidding_pushforward}, since $G^{-1}(0,T)=(-\infty,\tau(T))$ or $(-\infty,\tau(T)]$.

    We now consider the second term of \eqref{eq:decompose}.
    Note that the second term is just an expectation of the stopping bid.
    Denote $R$ as a random variable representing the stopping bid from $\scriptA$, and $R^G$ as that from a strategy $\cB$ induced by $G$.
    Note that $R^G$ has a same distribution with $G(\tau(T)+U),\,U\sim\Unif(0,1]$.
    Defining $T^*:=G(\tau(T)+1)$, we have
    \begin{align}
        \bbE{\sum_{n\in\bbZ}Q_n\mathbf{1}_{Q_{n-1}<T\leq Q_n}}&=\bbE{R}=\bbE{R\cdot(\mathbf{1}_{T\leq R<T^*}+\mathbf{1}_{R=T^*}+\mathbf{1}_{R>T^*})}\notag\\
        &=\bbE{(R-T^*)\cdot(\mathbf{1}_{T\leq R<T^*}+\mathbf{1}_{R=T^*}+\mathbf{1}_{R>T^*})}+T^*.\label{eq:trich}
    \end{align}
    To bound the first term, we note that for every interval $I\subseteq [T,T^*)$,
    \begin{equation*}
    \Pr[R\in I]\leq \Pr\left[\exists n,Q_n\in I\right]\leq \sum_{n\in\mathbb{Z}}\Pr[Q_n\in I]=\mu(I)=\Pr[R^G\in I].    
    \end{equation*}
    Since $x-T^*<0$ for $T\leq x<T^*$, the above implies that
    \begin{equation}
        \bbE{(R-T^*)\mathbf{1}_{T\leq R<T^*}}\geq \bbE{(R^G-T^*)\mathbf{1}_{T\leq R^G<T^*}}.\label{eq:rt_1}
    \end{equation} 
    Moreover,
    \begin{align}
        \bbE{(R-T^*)\mathbf{1}_{R=T^*}}&=0=\bbE{(R^G-T^*)\mathbf{1}_{R^G=T^*}},\label{eq:rt_2}\\
        \bbE{(R-T^*)\mathbf{1}_{R>T^*}}&\geq 0 =\bbE{(R^G-T^*)\mathbf{1}_{R^G>T^*}}.\label{eq:rt_3}
    \end{align}
    Combining \eqref{eq:rt_1}, \eqref{eq:rt_2} and \eqref{eq:rt_3} to \eqref{eq:trich} yields
    \begin{align}
        \bbE{R}&=\bbE{(R-T^*)(\mathbf{1}_{T\leq R<T^*}+\mathbf{1}_{R=T^*}+\mathbf{1}_{R>T^*})}+T^*\notag\\
        &\geq \bbE{(R^G-T^*)(\mathbf{1}_{T\leq R^G<T^*}+\mathbf{1}_{R^G=T^*}+\mathbf{1}_{R^G>T^*})}+T^*\notag\\
        &=\bbE{R^G}=\int_{0}^1 G(\tau(T)+t)\,dt=\int_{\tau(T)}^{\tau(T)+1}G(t)\,dt.\label{eq:bid_postcost}
    \end{align}

    Overall, applying \eqref{eq:bid_precost} and \eqref{eq:bid_postcost} to \eqref{eq:decompose}, we have
    \begin{align*}
        \cost_\scriptA (T)&=\bbE{\sum_{n\in\bbZ}Q_n\mathbf{1}_{Q_{n}<T}}+\bbE{\sum_{n\in\bbZ}Q_n\mathbf{1}_{Q_{n-1}<T\leq Q_n}}\\
        &\geq \int_{-\infty}^{\tau (T)} G(t)\,dt+\int_{\tau(T)}^{\tau(T)+1} G(t)\,dt\\
        &= \int_{-\infty}^{\tau (T)+1} G(t)\,dt=\cost_\cB(T).
    \end{align*}
    Combining the above with \eqref{eq:bid_alg_rob} and \eqref{eq:bid_alg_cons}, we have for any $\varepsilon > 0$:
    \begin{equation*}
        \int_{-\infty}^{x+1}G(t)\,dt\le \int_{-\infty}^{\tau(G(x) + \varepsilon)+1}G(t)\,dt=\cost_\cB(G(x) + \varepsilon)\leq \cost_\scriptA (G(x) + \varepsilon)\leq \rho (G(x) + \varepsilon).
    \end{equation*}
    Letting $\varepsilon \to 0$, we get that $\int_{-\infty}^{x+1} G(t)\dd t \le \rho G(x)$ for all $x$. For the consistency, From $\tau(1) = 0$:
    \begin{equation*}
        \int_{-\infty}^{1}G(t)\,dt=\int_{-\infty}^{\tau(1)+1}G(t)\,dt=\cost_\cB(1)\leq\cost_\scriptA(1)\leq \chi.
    \end{equation*}
    Therefore, $G$ is a $(\rho,\chi)$-bidding profile, which completes the proof.
\end{proof}

%% file: arxiv_include/D_lin.tex
\section{Additional details on linear search}

\input{arxiv_include/D1_lin_prelim}
\input{arxiv_include/D2_lin_profile}
\input{arxiv_include/D3_lin_ub}
\input{arxiv_include/D4_lin_lb}

%% file: arxiv_include/D1_lin_prelim.tex
\subsection{Preliminaries on linear search}
\label{app:lin:prelim}

We continue with preliminary background on linear search.

As in online bidding, we introduce a strategy for linear search as a surrogate model for an algorithm.
A deterministic strategy is formally defined as a bi-infinite sequence $y = \{y_n\}_{n \in \bbZ}$ of signed excursion endpoints such that the signs of consecutive elements are alternating (i.e., $\sgn(y_n) \neq \sgn(y_{n+1})$ for all $n \in \bbZ$), and the magnitudes of every other element are increasing (i.e., $|y_n| < |y_{n+2}|$ for all $n \in \bbZ$).
Given a deterministic strategy $y$ and a target $T \in \bbR \setminus \{0\}$, the cost incurred by $y$ given $T$ is defined as
\[
    \textstyle
    \cost_y(T) := |T| + 2 \, \sum_{n < n_\ast} |y_n|,    
\]
where $n_\ast := \min \{ n \in \bbZ : \sgn(y_n) = \sgn(T) \text{ and } |y_n| \geq |T| \}$.
A randomized strategy $Y$ is defined as a probability distribution over deterministic strategies, and the expected cost incurred by $Y$ given $T$ is denoted by $\cost_Y(T)$.

As the universe of $T$ is $\bbR \setminus \{0\}$, we can see that the problem is homogeneous under scaling and negating.
We thus assume that the prediction is given as $P := +1$ by dividing the whole universe by $P$.
Moreover, we can freely shift the indices of a strategy since it is bi-infinite.
We may assign index~$0$ to the first excursion point where $+1 = P$ is passed through for the first time.

A (randomized) strategy $Y$ is called $\rho$-robust if $\cost_Y (T) \leq \rho \, |T|$ for any $T \in \bbR \setminus \{0\}$, and $\chi$-consistent if $\cost_Y(+1) \leq \chi$.
A pair $(\rho, \chi)$ is \emph{Pareto-optimal} if there exists a $\rho$-robust $\chi$-consistent strategy, but no $\rho'$-robust $\chi$-consistent strategy for any $\rho' < \rho$.

%% file: arxiv_include/D2_lin_profile.tex
\subsection{Deferred proofs from \Cref{subsec:excursion_profile}}
\label{app:subsec:excursion_profile}

\corlsbid*
\begin{proof}
Since the cost of a linear search always double-counted if an excursion does not reach the target, and ends with a successful excursion exactly incurring the cost at the target's magnitude, we will see that a $(\rho, \chi)$-excursion profile drives a $(1+2\rho)$-robust, $(1+2\chi)$-consistent linear search strategy.

For $x > 0$, define
\begin{align*}
&\tau_{+}(T) := \sup\{t \in \bbR : G_{+}(t) < T\}, \text{ and } \\
&\tau_{-}(T) := \sup\{t \in \bbR : G_{-}(t) < T\}.
\end{align*}

Note that $\tau_\pm (T)$ is well-defined and finite, from the robustness condition. The proof is analogous to the online bidding case, as
\[
G_+(\rho T) \ge \frac{1}{\rho} \int_{-\infty}^{\rho T} G_+(t)\dd t + \frac{1}{\rho}\int_{-\infty}^{\rho T} G_-(t) \dd t \ge \frac{1}{\rho}\int_{0}^{\rho T} 1\dd t = T,
\]
implying $\tau_+(T) \le \rho T < \infty$ for any $T > 0$. Also,
\[
G_-(\rho T - 1) \ge \frac{1}{\rho} \int_{-\infty}^{\rho T} G_+(t)\dd t + \int_{-\infty}^{\rho T - 1}G_-(t) \dd t \ge \frac{1}{\rho}\int_{0}^{\rho T} \dd t = T,
\]
implying $\tau_-(T) \le \rho T - 1$ for any $T > 0$.

For a target at $+x$ $(x > 0)$, observe that if $G_{+}(n + U) < x$, cost of $2G_{+}(n + U) + 2G_{-}(n + U)$ is incurred. Otherwise, the strategy terminates after paying $x$. Therefore, the entire cost is at most
\begin{align*}
&x + 2\sum_{n \in \bbZ} \int_{0}^{1} (G_{+}(n + u) + G_{-}(n + u))\mathbf{1}_{G_+(n+u) < x}\dd u\\
&=x + 2\int_{-\infty}^{\tau_+(x)} (G_{+}(v) + G_{-}(v))\dd v\\
&= x + 2C_{+}(\tau_+(x))\\
&\le x + 2\rho G_{+}(\tau_+(x)) \le (1 + 2\rho) x.
\end{align*}
Symmetrically for a target at $-x$ $(x > 0)$, cost of $2G_{-}(n + U) + 2G_{+}(n+1+U)$ is incurred if $G_{-}(n + U) < x$. Otherwise, the strategy terminates after paying $x$. Then the entire cost is at most
\begin{align*}
&x + 2\sum_{n \in \bbZ} \int_{0}^{1} (G_{-}(n + u) + G_{+}(n + 1 + u))\mathbf{1}_{G_-(n+u) < x}\dd u\\
&=x + 2\int_{-\infty}^{\tau_-(x)} (G_{+}(v+1) + G_{-}(v))\dd v\\
&= x + 2C_{-}(\tau_-(x))\\
&\le x + 2\rho G_{-}(\tau_-(x)) \le (1 + 2\rho) x.
\end{align*}
Hence the strategy is $(1 + 2\rho)$-robust. From the plus-offset condition, it is clear that the strategy is $(1 + 2\chi)$-consistent.
\end{proof}

%% file: arxiv_include/D3_lin_ub.tex
\subsection{Deferred proofs in \Cref{subsec:lin:tight}} \label{app:lin:ub}

\lemlintight*
\begin{proof}
For a pair of nonnegative functions \(H_\pm\) on \((-\infty,0]\), let us define a functional operator $\cF$ as follows:
\[
    (\cF H)_+(x)
    :=
    \frac1\rho
    \left(
    \int_{-\infty}^{x}H_+(t)\dd t+
    \int_{-\infty}^{x}H_-(t)\dd t
    \right),
\]
and
\[
    (\cF H)_-(x)
    :=
    \frac1\rho
    \left(
    \int_{-\infty}^{\min\{x+1,0\}}H_+(t)\dd t
    +
    \mathbf 1_{\{x>-1\}}\int_0^{x+1}\psi(t)\dd t
    +
    \int_{-\infty}^{x}H_-(t)\dd t
    \right).
\]
The operator \(\cF\) is order-preserving.  Start from
\(G_+^{(0)}=G_-^{(0)}\equiv0\) on \((-\infty,0]\), and set
\(G^{(m+1)}=\cF G^{(m)}\).  Then
\[
    0=G_\pm^{(0)}\le G_\pm^{(1)}\le G_\pm^{(2)}\le\cdots .
\]
Here, $G_\pm^{(m)}\le G_\pm^{(m+1)}$ represents $G_+^{(m)}\le G_+^{(m+1)}$ and $G_-^{(m)}\le G_-^{(m+1)}$.

We next bound the iteration from above.  Choose \(c\in(s,s_\ast)\).  Since
\(r\mapsto(1+e^r)/(2r)\) is strictly decreasing on \((0,s_\ast]\), we have
\[
    \frac{1+e^c}{2c}<\rho.
\]
Equivalently, \(2c\rho-1>e^c\).  Put
\[
    \eta:=\frac{e^{2c}}{2c\rho-1}.
\]
Then \(0<\eta\le2c\rho-1\).  For a sufficiently large constant \(L\), define
\[
    W_+(x):=Le^{2cx},
    \qquad
    W_-(x):=L\eta e^{2cx},
    \qquad x\le0.
\]
We claim that \(\cF W\le W\).  For the plus coordinate,
\[
    (\cF W)_+(x)=\frac{1+\eta}{2c\rho}Le^{2cx}\le W_+(x).
\]
For the minus coordinate and \(x\le-1\),
\[
    (\cF W)_-(x)
    =
    \frac{e^{2c}+\eta}{2c\rho}Le^{2cx}
    =
    L\eta e^{2cx}
    =
    W_-(x).
\]
For \(-1<x\le0\), $(\cF W)_-(x) \le W_-(x)$ is equivalent to
\[
    L\left(
    \rho\eta e^{2cx}
    -
    \frac{1}{2c}
    -
    \frac{\eta e^{2cx}}{2c}
    \right)
    \ge
    \int_0^{x+1}\psi(t)\dd t.
\]
Let $f(x)$ denote the expression in parenthesis.
Since $f(-1)=0$ and $f'(x)=e^{2c(x+1)} \ge 1$ for $-1<x\le 0$,
\[
f(x)=f(-1)+\int_{-1}^{x}f'(t)\dd t \ge 0+\int_{-1}^{x}1\dd t = x+1,
\]
on $[-1,0]$.

Taking
\(L\ge\sup_{0<t\le1}\psi(t)\) proves the claim.

Thus \(0\le G_\pm^{(m)}\le W_\pm\) for all \(m\).  The monotone convergence theorem
gives finite limits \(G_\pm\), and the exponential bound gives finite lower-tail
integrals.  Passing to the limit in \(G^{(m+1)}=\cF G^{(m)}\) gives \(G=\cF G\),
which is exactly the pair of tight equations on \(x\le0\).  Minimality follows
from the same induction: any $(\rho, \widetilde{\chi})$-excursion profile $\widetilde{G}_{\pm}$ satisfies $\widetilde{G} \ge \cF \widetilde{G}$, giving that $\chi \le \widetilde{\chi}, G \le \widetilde{G}$.

It remains to prove \eqref{eq:lin:tight_extension_identity}.  Define
\[
    A_\pm(x):=\int_{-\infty}^{x}G_\pm(t)\dd t .
\]
The tight equations give, almost everywhere on \((-\infty,0]\),
\begin{align}
    \rho A_+'(x)=A_+(x)+A_-(x),
    \qquad
    \rho A_-'(x)=A_+(x+1)+A_-(x).
    \label{eq:lin:diff_identities}
\end{align}
The exponential bound above implies when $T \to \infty$,
\[
    e^{2sT}A_\pm(-T)\to0,
    \qquad
    \int_{-T}^{1-T}e^{-2su}A_+(u)\dd u\to0 .
\]
Multiplying the two differential identities of \eqref{eq:lin:diff_identities} by \(e^{-2sx}\), integrate over
\([-T,0]\). Define
\[
X_\pm^{T} := \int_{-T}^{0} e^{-2sx}A_{\pm}(x) \dd x, \quad Y_{+}^{T} := \int_{-T}^{0} e^{-2sx}A_{+}(x+1)\dd x.
\]
Then integrating by parts,
\begin{align}
    e^sX_+^T-X_-^T+\rho A_+(0)-\rho e^{2sT}A_+(-T)=0, \label{eq:lin:lap_identity1}
\end{align}
and
\begin{align}
    e^sX_-^T-Y_+^T+\rho A_-(0)-\rho e^{2sT}A_-(-T)=0, \label{eq:lin:lap_identity2}
\end{align}
Note that
\[
Y_{+}^T = e^{2s}\int_{1-T}^{1} e^{-2sx}A_+(x)\dd x = e^{2s}\left(X_+^T +\int_{0}^{1} e^{-2sx}A_+(x)\dd x-\int_{-T}^{1-T} e^{-2sx}A_+(x)\dd x\right).
\]
Evaluating \(e^s \eqref{eq:lin:lap_identity1} + \eqref{eq:lin:lap_identity2}\) at \(T\to\infty\), the boundary terms vanish and we get
\[
    e^{2s}\int_0^1e^{-2sx}A_+(x)\dd x
    =
    \rho\left(e^sA_+(0)+A_-(0)\right).
\]
Since \(A_+(x)=A_+(0)+\int_0^x\psi(t)\dd t\) on \([0,1]\), one integration by parts gives
\[
    \int_0^1e^{-2sx}\psi(x)\dd x
    =
    e^{-2s}A_+(1)-A_+(0)
    +
    2s\int_0^1e^{-2sx}A_+(x)\dd x .
\]
Using \(2s\rho=1+e^s\) and \(A_+(1)=A_+(0)+\int_0^1\psi(t)\dd t\), this becomes
\[
    (1+e^s)\left(A_+(0)+A_-(0)\right)
    =
    \int_0^1\left(e^{2s(1-x)}-1\right)\psi(x)\dd x.
\]
This proves \eqref{eq:lin:tight_extension_identity}.
\end{proof}

\thmlinsearchub*
Before the proof, we first show that $K(s)$ is well-defined and increases with $s$. If $s \le s_{K}$,
\begin{align}
K'(s) = \frac{e^s(e^s(4s+3) + 5e^{2s} + e^{3s} - 1)}{(1+e^s)^3} > 0,\label{eq:Kbranch:increases}
\end{align}
implies that $K(s) \le K(s_{K}) = 1$ and $K(s)$ strictly increases when $s > 0$.
\begin{remark}
    For $s_{K} < s \le s_{\ast}$, the defining equation
    \[
    (e^{s} - \xi)\ln \xi + \xi(3+e^{-s}) = e^{s}(e^{s} + 2s-1)
    \]
    has a unique solution $\xi \in [1, e^{2s}]$.
\end{remark}
\begin{proof}
    Define
    \[
    F_{s}(\xi) := (e^{s} - \xi)\ln\xi + \xi(3 + e^{-s}) - e^{s}(e^{s} + 2s - 1).
    \]
    Then
    \[
    e^{s}F_{s}(1) = 3e^{s} + 1 - e^{2s}(e^{s} + 2s-1) = (1+e^s)^2 -e^s(e^{2s}+2se^{s}-1),
    \]
    which is negative for $s > s_{K}$, considering \eqref{eq:Kbranch:increases}.

    Moreover,
    \[
    F_{s}(e^{2s}) = 2s(e^{s}- e^{2s}) + 3e^{2s} + e^{s} - e^{s}(e^{s} + 2s-1) = 2e^{s}(1 + (1-s)e^{s}).
    \]
    Since $1 + (1-s^{\ast})e^{s^{\ast}} = 0$ from the definition of $s^{\ast} = 1 + W_{0}(e^{-1})$, and $x \mapsto 1 - (1-x)e^{x}$ decreases for $x > 0$, it implies that $
    F_{s}(e^{2s}) \ge 0$.

    Now, we show that $F_{s}$ is strictly increasing.
    \[
    F_{s}'(\xi) = \frac{e^s}{\xi}-\ln \xi + 2 + e^{-s},
    \]
    and
    \[
    F_{s}''(\xi) = -\frac{1}{\xi}-\frac{e^s}{\xi^2} < 0.
    \]
    Since $F_s'$ strictly decreases, it suffices to show that $F_{s}'(e^{2s}) \ge 0$.
    \[
    F_{s}'(e^{2s}) = e^{-s} -2s + 2 + e^{-s} = 2e^{-s}(1 + (1-s)e^{s}) \ge 0,
    \]
    as we proved before.
\end{proof}
\begin{proof}[Proof of \Cref{thm:lin:Kbranch_construction}]
The endpoint \(s=s_\ast\) is the classical exponential profile, so assume
\(s<s_\ast\).  Write \(K=K(s)\) and \(M=\max\{1,K\}\).  On \(x>0\), define
\[
    G_+(x):=\max\{1,Me^{2s(x-1)}\},
\]
and
\[
    G_-(x):=Me^s e^{2s(x-1)}+(K-M)e^{-s}e^{x/\rho}.
\]
The correction term in \(G_-\) appears only when \(K<1\), and it ensures that
\(G_-(0+)=Ke^{-s}\).

Let \(\psi=G_+|_{(0,1]}\), and define \(G_\pm\) on \(x\le0\) as the minimal tight
extension of \(\psi\) from \Cref{lem:lin:tight_extension}.  The consistency of this
left extension is
\[
    \chi_s=
    \frac{1}{1+e^s}
    \int_0^1
    \left(e^{2s(1-x)}-1\right)G_+(x)\dd x.
\]
If \(K<1\), then \(G_+(x)=1\) on \((0,1]\), so this is the first expression in the
statement.  If \(K\ge1\), set \(x_K=1-\log K/(2s)\).  Splitting the integral at
\(x_K\) gives the second expression.

We next record the boundary identity
\begin{align}
    \chi_s+\int_0^1G_+(x)\dd x=\rho Ke^{-s}.
    \label{eq:lin:Kbranch_boundary_identity}
\end{align}
When \(K<1\), this is exactly the definition of \(K(s)\).  When \(K\ge1\), we use
\[
    \int_0^1G_+(x)\dd x
    =
    1+\frac{K-1-\log K}{2s},
\]
and the formula for \(\chi_s\); after multiplying by \(2s(1+e^s)\), the identity is
equivalent to
\[
    (e^s-K)\log K+K(3+e^{-s})-e^s(e^s+2s-1)=0.
\]
Thus \eqref{eq:lin:Kbranch_boundary_identity} holds in both cases.  Consequently,
the tight left extension satisfies
\[
    C_-(0)=C_+(0)+\int_0^1G_+(x)\dd x=\rho Ke^{-s}=\rho G_-(0+).
\]

We now verify robustness for \(x>0\).  The negative constraint is tight.  Indeed,
\(C_-(0)=\rho G_-(0+)\), and direct differentiation gives
\[
    C_-'(x)=G_-(x)+G_+(x+1)=\rho G_-'(x).
\]
Hence \(C_-(x)=\rho G_-(x)\) for every \(x>0\).

It remains to check \(C_+(x)\le \rho G_+(x)\).  First suppose \(K\ge1\).  Then the
correction term in \(G_-\) vanishes.  At \(x=1\),
\[
    C_+(1)=C_-(0)+\int_0^1G_-(x)\dd x
    =\rho Ke^{-s}+\int_0^1e^sK e^{2s(x-1)}\dd x
    =\rho K=\rho G_+(1).
\]
When \(G_+(x) > 1\), \(C_+'(x)=G_+(x)+G_-(x)=\rho G_+'(x)\).  Otherwise, the function \(C_+\) is nondecreasing and reaches value \(\rho\)
at the right endpoint of the plateau.  Therefore \(C_+(x)\le \rho G_+(x)\) for all
\(x>0\).

Now suppose \(K<1\).  Then \(G_+(x)=1\) on \((0,1]\) and
\(G_+(x)=e^{2s(x-1)}\) for \(x\ge1\).  Also
\[
    G_-(x)=e^s e^{2s(x-1)}+(K-1)e^{-s}e^{x/\rho}.
\]
Since \(1/\rho<2s\), this function is positive and nondecreasing on \((0,\infty)\).
Moreover,
\[
    C_+(1)
    =
    C_-(0)+\int_0^1G_-(x)\dd x
    \le
    \rho e^{-s}+\int_0^1 e^s e^{2s(x-1)}\dd x
    =
    \rho.
\]
Thus \(C_+(x)\le\rho=\rho G_+(x)\) on \(0<x\le1\).  For \(x\ge1\),
\[
    \rho G_+'(x)-C_+'(x)
    =
    \rho G_+'(x)-G_+(x)-G_-(x)
    =
    (1-K)e^{-s}e^{x/\rho}\ge0.
\]
Since the inequality holds at \(x=1\), it propagates to all \(x\ge1\).

The left half-line is tight by construction, and the positive half-line was checked above.  Hence
the robustness condition holds everywhere.  Finally, \(C_+(0)=\chi_s\).  Also
\(\chi_s\le\rho\): in the case \(K<1\) this follows from the boundary identity, while in the
case \(K\ge1\) it follows from \(K\le e^{2s}\), which gives
\(\log K+Ke^{-s}\le2s+e^s\).  Therefore \(G_+(0)=\chi_s/\rho\le1\), and tightness on
\(x\le0\) gives \(G_+(x)<1\) for \(x<0\).  Since \(G_+(x)\ge1\) for \(x>0\) by construction,
the plus-offset condition is satisfied.  Thus \(G_\pm\) is a
\((\rho,\chi_s)\)-excursion profile.
\end{proof}

\rmklinasymp*

\begin{proof}
    We only provide the asymptotics of the upper bound from \Cref{thm:lin:Kbranch_construction}. When $s < s_{K}$, \[\conslin - 1 = 2\chi(s) = \frac{e^{2s}-1-2s}{s(1+e^s)} = s + \frac{s^2}{6} + O(s^4)\]. On the other hand, \[\roblin = 1 + \frac{1 + e^s}{s} = \frac{2}{s} + 2 + \frac{s}{2} + \frac{s^2}{6} + O(s^3),\]
    implying that
    \[
    \roblin - \frac{2}{\conslin - 1} = \frac{7}{3} + O(s) = \frac{7}{3} + O(\conslin - 1).
    \]
\end{proof}

%% file: arxiv_include/D4_lin_lb.tex
\subsection{Lower bound on robustness-consistency trade-off for learning-augmented linear search} \label{sec:linsearch_lb}
In this section, we provide a lower bound on the robustness-consistency trade-off of algorithms for the linear search problem.
The proof employs Yao's lemma, which provides a lower bound of the performance for every randomized algorithm by an optimal performance of the deterministic algorithm for the selected input distribution.

Analogous to the lower bound construction for the online bidding problem from \cite{shin25},
we select our input distribution to be proportional to the distance from the origin, but scaled differently by its direction and truncated by a closed interval $[-\alpha_2 ,+\alpha_1 ]$. By choosing the scaling factor and the interval appropriately, we prove the following theorem:

\begin{theorem}\label{thm:lin:lb}
For $t\in (0,1]$, any (randomized) $\left(1+2\frac{t(t+2)^2}{-t^3+3t+4}\right)$-consistent linear search algorithm has the robustness ratio of at least
\[1+4\frac{(t^2+t+1)}{t(-t^3+3t+4)}.\]
Asymptotically, any $\conslin$-consistent $\roblin$-robust algorithm must satisfy 
\[\roblin \ge \frac{2}{\conslin-1}+\frac{3}{2}+\Theta(\conslin-1)\]
as $\conslin \to 1^{+}$.
\end{theorem}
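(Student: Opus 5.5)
We will prove \Cref{thm:lin:lb} with Yao's principle. Fix a $\chi_{\mathrm{LS}}$-consistent randomized algorithm with robustness $\rho_{\mathrm{LS}}$. After normalizing $P=+1$, we choose a target distribution $\mathcal{D}$ on $[-\alpha_2,-1]\cup[1,\alpha_1]$ with three parts: a point mass on $T=+1$ (the prediction), a density proportional to $|T|^{-2}$ on each side, and atoms at the truncation endpoints $+\alpha_1$ and $-\alpha_2$. We scale the two sides by different factors $\lambda_\pm$.

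The key inequality is obtained by averaging two quantities. For a weight $w\ge 0$, we bound
\[
w\,\cost(+1)+\mathbb{E}_{T}\!\left[\cost(T)/|T|\right]\le w\,\chi_{\mathrm{LS}}+\rho_{\mathrm{LS}}
\]
for every randomized strategy, where the expectation over $T$ is taken under the normalized $|T|^{-2}$-type measure. By Yao's principle, it then suffices to lower-bound the left-hand side over deterministic strategies. This mirrors the online-bidding lower bound of \cite{shin25}.

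We next compute the left-hand side for a deterministic strategy $y$ with excursion endpoints $\cdots<x_{-1}<1\le x_0<\cdots$ on the positive side and $z_k$ on the negative side. Take an excursion to endpoint $e$ whose failure costs $2|e|$. That failure cost is paid against every target farther than $|e|$ in that direction, as well as every target in the opposite direction that is reached later. Weighted by the density $\propto |T|^{-2}$ and divided by $|T|$, this contributes roughly $2|e|\cdot\lambda/|e|$ up to truncation. This makes the cost of each excursion scale-free, so the objective reduces to a function of the \emph{number} of excursions within the window and of the ratios of successive endpoints.

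Minimizing over deterministic strategies should reduce to optimizing a finite-dimensional expression. The natural candidate structure is that the strategy first makes $O(1)$ excursions on the negative side, then goes to $x_0\ge 1$ (paying $2x_0$ extra on the consistency term if it overshoots). Letting $\alpha_1,\alpha_2\to\infty$ with a fixed ratio, and choosing $\lambda_\pm$ and $w$ as functions of a parameter $t\in(0,1]$, should yield the rational expressions in the statement. Here $t$ plays the role of the ratio governing the first negative excursion relative to $1$, i.e.\ where $-\alpha_2$ or the first turn is placed. We would first guess $w$ so that the optimal deterministic response is indifferent between the natural alternatives, namely whether or not it makes one negative excursion before $+1$. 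Equalizing these alternatives should produce the polynomial $-t^3+3t+4$. The asymptotic statement then follows by substituting $t\to 0$: consistency $1+2t(t+2)^2/(4+3t-t^3)=1+2t+\Theta(t^2)$ and robustness $1+(t^2+t+1)/t\cdot(1+O(t))$, and expanding gives $2/(\chi_{\mathrm{LS}}-1)+3/2+\Theta(\chi_{\mathrm{LS}}-1)$.

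The main obstacle is the deterministic minimization. Because of the alternating directions, the cost an excursion imposes on the opposite side depends on the interleaving, so the clean telescoping available for bidding is lost. We would handle this by a local-exchange argument: first show that, under the chosen weights, each additional full round of excursions inside the window costs at least a fixed constant, which forces the optimum to use geometrically spaced, boundary-determined endpoints. Then only a few candidate configurations near $\pm1$ remain, and these can be compared explicitly, with the truncation atoms absorbing the tail terms.
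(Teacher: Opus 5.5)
There is a genuine gap, and it lies in your choice of hard distribution. You put the targets on $[-\alpha_2,-1]\cup[1,\alpha_1]$ with $\alpha_1,\alpha_2\to\infty$, i.e.\ at scales at or above the prediction. No such distribution can force robustness to blow up as $\conslin\to1^+$. Concretely, the deterministic strategy with endpoints $+1,-2,+4,-8,\dots$ has $\cost(+1)=1$ and $\cost(T)\le 9|T|$ for every $|T|\ge1$. Hence $\inf_y\bigl(w\,\cost_y(+1)+\mathbb{E}_T[\cost_y(T)/|T|]\bigr)\le w+9$, and Yao's principle can give at best $\roblin\ge 9-w(\conslin-1)\le 9$. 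The theorem, however, asserts $\roblin\approx 1/t\approx 2/(\conslin-1)$. The $2/(\conslin-1)$ behaviour comes from targets much closer to the origin than the prediction: a nearly consistent algorithm must head almost directly to $+1$, so a target at $-\varepsilon$ costs about $2+\varepsilon$.

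Your scale-freeness heuristic is also off. With density $|T|^{-2}$ and division by $|T|$, an excursion to $e$ contributes $\int_{|e|}^{\infty}2|e|\,T^{-3}\,\dd T=1/|e|$, not a constant. If you repaired this to a log-uniform window normalized by $\ln\alpha$, the finitely many excursions affected by the prediction would be washed out, and you would recover only the classical bound $\rho_{\mathrm{LS}}^*$. Your remark that $t$ locates $-\alpha_2$ is likewise incompatible with $\alpha_2\to\infty$. Finally, the deterministic minimization is only sketched (``should reduce'', ``should yield''), and the polynomial $-t^3+3t+4$ is never actually derived.

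The paper works in the opposite regime. It uses a density proportional to $|T|$ on $[-\alpha_2,0)\cup(0,\alpha_1]$, with the positive side weighted by $c=(2t+1)/(t(t+2))$. Here $\alpha_1=t^2(t+2)/(t^2+t+1)$ and $\alpha_2=t(t+2)/(t^2+t+1)$ are both at most $1$, with $\alpha_2\approx 2t$. The multiplier is $\lambda=(1-t^2)/(t^2(t+2))$. Because the density cancels the $1/|T|$, the robustness term becomes an unweighted integral of the cost, namely $1+\frac{2}{\alpha_1}\sum_j \mathbf{u}^j_{\dir(j)}\bigl(c\max\{\alpha_1-\mathbf{u}^j_+,0\}+\alpha_2-\mathbf{u}^j_-\bigr)$. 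This is concave in each interior endpoint, while $\cost_y(+1)$ is linear in the endpoints.

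That concavity is what replaces your unspecified local-exchange step: every endpoint can be pushed to $\{+1,+\alpha_1,-\alpha_2\}$. This leaves three candidate strategies: $+1\to-\alpha_2$, $-\alpha_2\to+1$, and $+\alpha_1\to-\alpha_2\to+1$. The parameters are tuned so that all three give the value $1+\lambda+2/t$, which yields the stated bound, including the polynomial $-t^3+3t+4$.

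In your asymptotic check, writing $(t^2+t+1)/t\cdot(1+O(t))$ is too coarse, because the $O(t)$ correction multiplies $1/t$ and contributes a constant. You need the first-order coefficient $-3t/4$ of $4/(4+3t-t^3)$ to arrive at $3/2$.
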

\begin{proof}
    Given a deterministic algorithm, we encode its excursions by a sign-alternating sequence $\{y_i\}_{i=1}^\infty$.
    Recall a randomized algorithm is a probability distribution over deterministic algorithms.
    As in \Cref{app:lin:prelim}, we assume that the subsequence of the same parity has a strictly increasing property in magnitude.
    
    Having these properties of excursions in mind, we capture the \emph{state} of the algorithm at a moment by a pair $\mathbf{u}\in \mathbb{R}_{\geq 0}^{\{+,-\}}$ of non-negative numbers,
    where each entry $\mathbf{u}_{\pm}$ indicates the maximum distance traveled along the half-line of corresponding sign so far.

    The state is updated every excursion starting from
    $\mathbf{u}^{0} := (0,0)$. 
    For each $j \in \bbZ_{\ge 1}$, denote the state after $j$-th excursion as $\mathbf{u}^{j}$. If the excursion point is $+y_{j} > 0$, $\mathbf{u}^{j}=(y_{j}, \mathbf{u}_{-}^{j-1})$. Otherwise, $\mathbf{u}^{j} = (\mathbf{u}_{+}^{j}, -y_{j})$. Exceptionally, when the algorithm terminates at the target $T$ at $m$-th excursion, the state stops updating with $\mathbf{u}^m_{\sgn(T)} = T$.

    We apply Yao's lemma on robustness with consistency terms:
    \begin{align}
    &\sup_{\mu\in \mathcal{M}}\inf_{y\in \mathcal{D}}\left(\mathbb{E}_{T\sim \mu}\left[\frac{\cost_{{y}}(T)}{|T|}\right]+\lambda \cdot {\cost_{{y}}(+1)}\right)\label{eq:lin_objective}\\
    &=\sup_{\mu\in \mathcal{M}}\inf_{Y\in\mathcal{A}}\left(\mathbb{E}_{T\sim \mu}\left[\frac{\cost_{Y}(T)}{|T|}\right]+\lambda \cdot {\cost_{{Y
    }}(+1)}\right)\notag\\
    &\leq\inf_{Y\in\mathcal{A}}\sup_{\mu\in \mathcal{M}}\left(\mathbb{E}_{T\sim \mu}\left[\frac{\cost_{Y}(T)}{|T|}\right]+\lambda \cdot {\cost_{{Y}}(+1)}\right)\notag\\
    &\leq \roblin+\lambda\cdot \conslin,\label{eq:lin_robcons}
    \end{align}
    where $\lambda>0$ is any given multiplier,
    $\mathcal{M}$ is a set of all target distributions,
    $\mathcal{D}$ is a set of all deterministic algorithms
    while $\mathcal{A}$ is a set of all randomized algorithms.
    By dividing the whole inequality by $1+\lambda$, note that the inequality can be fully understood as a variant of Yao's lemma, whose distribution set is conditioned to have an atom at $+1$ with size of at least $\frac{\lambda}{1+\lambda}$.
    As a lower bound of \eqref{eq:lin_objective},
    we use a distribution $\mu$ constructed as follows:
    \begin{equation}\label{eq:hiding_strategy}
    \frac{d\mu}{dm}(T)=\frac{\mathbf{1}_{[-\alpha_2 ,0)}+c\cdot \mathbf{1}_{(0,+\alpha_1 ]}}{\frac{\alpha_2^2}{2}+\frac{c\alpha_1^2}{2}}\cdot |T|,
    \end{equation}
    where
    \begin{equation}\label{eq:parameters}
    c=\frac{2t+1}{t(t+2)},\quad\alpha_1=\frac{t^2(t+2)}{t^2+t+1},\quad\alpha_2=\frac{t(t+2)}{t^2+t+1}\end{equation}
    for $t\in(0,1]$.
    Note that $\alpha_1\in[0,1]$ and $\alpha_2>0$ if $t\in(0,1]$, since $t^2(t+2),t(t+2),t^2+t+1>0$ and
    \[t^2+t+1-t^2(t+2)=(t+1)(1-t^2)\geq 0.\]
    Also note that
    \begin{align*}
        \frac{c\alpha_1^2}{2}+\frac{\alpha_2^2}{2}
        &=\frac{t^2(t+2)^2\left(\frac{2t+1}{t(t+2)}t^2+1\right)}{2(t^2+t+1)^2}\\
        &=\frac{t^2(t+2)\left({(2t+1)}t+(t+2)\right)}{2(t^2+t+1)^2}\\
        &=\frac{t^2(t+2)}{t^2+t+1}=\alpha_1 .
    \end{align*}
    We also use the following multiplier:
    \begin{equation}
        \label{eq:multiplier}
        \lambda=\frac{1-t^2}{t^2(t+2)}.
    \end{equation}
    \eqref{eq:parameters} and \eqref{eq:multiplier} are selected appropriately to provide the best lower bound curve conditioned on the distribution $\mu$ having the form of \eqref{eq:hiding_strategy}
    
    Let any deterministic algorithm $y$ be given. Since the support of the distribution of a target and the prediction point is contained in $[-\alpha_2,+1]$,
    we can also assume that the final state of $y$ is $\mathbf{f}:=(1,\alpha_2)\in\mathbb{R}_{\geq 0}^{\{+,-\}}$.
    Regarding $y$ as a single state sequence $\{\mathbf{u}^j\}_{j=0}^m$,
    which defines a non-decreasing curve from $\mathbf{u}^0=(0,0)$ to $\mathbf{u}^m=\mathbf{f}$ whose directions should be parallel to either of two axes, 
    the cost of reaching any points $T\in[-\alpha_2 ,+1]$ can be written as follows:
    \begin{equation}\cost_{y}(T)=2\sum_{j=1}^m \mathbf{u}_{\dir(j)}^j\cdot\mathbf{1}_{\mathbf{u}_{\sgn(T)}^j<|T|}+|T|,\end{equation}
    where $\dir(j)\in \{+,-\}$ is the direction taken from $j$-th excursion. Therefore, the expected cost following the target distribution of $\mu$ is as follows.
    \begin{align}
        &\mathbb{E}_{T\sim \mu}\left[\frac{\cost_{y}(T)}{|T|}\right] \notag
        \\
        &=\frac{1}{\alpha_1 }\left(\int_0^{\alpha_1}\frac{\cost_{y}(T)}{|T|}\cdot c|T|\,\dd T+ \int_0^{\alpha_2}\frac{\cost_{y}(T)}{|T|}\cdot |T|\,\dd T\right) \notag\\
        &=1+\frac{2}{\alpha_1 }\left(c\int_0^{\alpha_1 }\sum_{j=1}^m \mathbf{u}_{\dir(j)}^j\cdot\mathbf{1}_{\mathbf{u}_+^j<T}\,\dd T+\int_0^{\alpha_2 }\sum_{j=1}^m \mathbf{u}_{\dir(j)}^j\cdot\mathbf{1}_{\mathbf{u}_-^j<T}\,\dd T\right) \notag\\
        &=1+\frac{2}{\alpha_1 }\sum_{j=1}^m\mathbf{u}_{\dir(j)}^j\cdot\left(c\int_{\min\{\mathbf{u}_+^j,\alpha_1 \}}^{\alpha_1 }\dd T+\int_{\mathbf{u}_-^j}^{\alpha_2 }\dd T\right) \notag\\
        &=1+\frac{2}{\alpha_1 }\sum_{j=1}^m\mathbf{u}_{\dir(j)}^j\cdot\left(c\cdot\max\{\alpha_1 -\mathbf{u}_+^j,0\}+(\alpha_2 -\mathbf{u}_-^j)\right).\label{eq:roblin_det}
    \end{align}
    On the other hand, the cost of reaching $+1$ is as follows.
    \begin{equation}
    \label{eq:conslin_det}
        \cost_{y}(+1)=1+2\sum_{j=1}^m \mathbf{u}_{\dir(j)}^j\cdot\mathbf{1}_{\mathbf{u}_+^j<1}.
    \end{equation}
    
    Next, we derive the conditions for $y$ that minimize the inner part of \eqref{eq:lin_objective}. Suppose $x := y_{j}$ for some $j \in \bbZ_{\ge 1}$. Note that $x \in [-\alpha_2, 0) \cup (0, +1]$. We show that $x$ is degenerated to specific points to minimize the cost. For this purpose, assume that $x \in (-\alpha_2, 0) \cup (0, \alpha_1) \cup (\alpha_1, 1)$.
    
    If $x\in(-\alpha_2,0)$, the consistency term \eqref{eq:conslin_det} is linear in $|x|$, while the robustness term \eqref{eq:roblin_det} is negative quadratic in $|x|$. Therefore, the total cost is concave in $|x|$. Hence, changing either $x \gets -\mathbf{u}_{-}^{j-2}$ or $x \gets -\mathbf{u}_{-}^{j+2}$ minimizes the cost, which eventually degenerates the step visiting $x$ in either cases. If $x\in(+\alpha_1,+1)$, then changing every $x \gets +\alpha_1$ only reduces both the robustness and consistency. Otherwise, $x\in(0,+\alpha_1 )$. Analogously the concavity of the cost with respect to $|x|$ implies that either $x \gets +\mathbf{u}_{+}^{j-2}$ or $x \gets +\mathbf{u}_{+}^{j+2}$ reduces the total cost, which eventually degenerates the step visiting $x$ in either cases.
    
    Removing degeneracies, the excursion points of the optimal deterministic algorithm must belong to $\{+1, +\alpha_1, -\alpha_2\}$. This limits the candidate sequences to 
    \begin{itemize}
        \item $0 \to +1 \to -\alpha_2$,
        \item $0 \to -\alpha_2 \to +1$,
        \item $0 \to +\alpha_1 \to -\alpha_2 \to +1$.
    \end{itemize}
    
    Combining \eqref{eq:lin_objective}, \eqref{eq:lin_robcons}, \eqref{eq:roblin_det}, and \eqref{eq:conslin_det} with these candidates,
    \begin{equation}\label{eq:lin_maximizer}
        \roblin+\lambda\cdot\conslin\geq 1+\lambda+2\alpha_2\min\left\{\frac{1}{\alpha_1},c+\lambda,1+\left(\frac{\alpha_1}{\alpha_2}+1\right)\lambda\right\}.
    \end{equation}
    Plugging \eqref{eq:parameters} and \eqref{eq:multiplier} to \eqref{eq:lin_maximizer} yields
    \begin{equation*}
        \roblin+\lambda\cdot\conslin\geq
        1+\lambda+2\min\left\{\frac{1}{t},\frac{1}{t},\frac{1}{t}\right\}=1+\lambda+\frac{2}{t},
    \end{equation*}
    thus
    \begin{equation*}
        t^2(t+2)\frac{\roblin-1}{2}+(1-t^2)\frac{\conslin-1}{2}\geq t(t+2).
    \end{equation*}
    Therefore, if $\conslin=1+2\frac{t(t+2)^2}{-t^3+3t+4}$, then
    \begin{equation*}
        t^2(t+2)\frac{\roblin-1}{2}+(1-t^2)\cdot\frac{t(t+2)^2}{-t^3+3t+4}\geq t(t+2),
    \end{equation*}
    thus
    \begin{equation*}
        \frac{\roblin-1}{2}\geq\frac{1}{t}\cdot \left(1-(1-t^2)\cdot\frac{t+2}{-t^3+3t+4}\right)
        =\frac{1}{t}\cdot\frac{2(t^2+t+1)}{-t^3+3t+4},
    \end{equation*}
    which implies
    \begin{equation*}
        \roblin\geq 1+4\frac{(t^2+t+1)}{t(-t^3+3t+4)}.
    \end{equation*}
    
    Now we verify the claimed asymptotics. We first subtract the lower bound obtained by $\frac{2}{\conslin-1}$, which is
    \begin{align*}
        1+4\frac{(t^2+t+1)}{t(-t^3+3t+4)}-\frac{-t^3+3t+4}{t(t+2)^2}&=1+\frac{-t^6+10t^4+28t^3+27t^2+8t}{t(-t^3+3t+4)(t+2)^2}\\
        &=1+\frac{-t^5+10t^3+28t^2+27t+8}{(-t^3+3t+4)(t+2)^2}.
    \end{align*}
    Since $\conslin\to 1^+$ as $t\to 0^+$, we have
    \begin{align*}
        \roblin-\frac{2}{\conslin-1}&\geq 1+\lim_{t\to 0^+}\frac{-t^5+10t^3+28t^2+27t+8}{(-t^3+3t+4)(t+2)^2}+\Theta(\conslin-1)\\
        &=\frac{3}{2}+\Theta(\conslin-1),
    \end{align*}
    which concludes the proof.
\end{proof}

%% file: arxiv_include/E_str2alg.tex
\section{Sufficiency of strategies}
\label{app:str2alg}
In this appendix, we discuss that it is sufficient to study strategies in analyzing the robustness-consistency trade-off of learning-augmented algorithms.
We may focus on online bidding below; linear search can be addressed in an analogous way.

We first argue that any bidding strategy can be converted into an algorithm for the setting where the target value $T$ is guaranteed to be at least some constant $c>0$, with the same performance guarantees. We can still assume that $P = 1$ by normalizing the universe $[c, \infty)$ by $P \geq c$ from the beginning.

Note that any reasonable deterministic strategy $x = \{x_n\}_{n \in \bbZ}$ must satisfy
\begin{equation} \label{eq:str2alg:detprop}
    \lim_{n \to -\infty} x_n = 0;
\end{equation}
otherwise, the strategy would be non-robust.
Therefore, we can always find the minimum index $n_c(x)$ at which the sequence has a value at least $c$, i.e., $n_c(x) := \min \{n \in \bbZ : x_n \geq c\}$.

Consider the bidding algorithm represented by the suffix of $x$ from index $n_c(x)$, i.e., the algorithm sequentially submits bids $x_{n_c(x)}, x_{n_c(x) + 1}, x_{n_c(x) + 2}, \ldots$ in this order until bidding at least $T$.
It is easy to see that the algorithm always incurs less cost than the strategy by the cut-off prefix sum $\sum_{n < n_c(x)} x_n$, implying that the robustness and consistency follow from the strategy.

For the randomized setting, note again that any reasonable randomized strategy satisfies \eqref{eq:str2alg:detprop} almost surely to guarantee a bounded expected robustness.
Hence, by sampling a deterministic strategy from the randomized strategy and executing the above process with this sampled strategy would yield a randomized bidding algorithm with the same robustness and consistency guarantees.

It remains to justify that the lower bound on the robustness-consistency trade-off for bidding strategies also applies to bidding algorithms.
We replicate the proof of \Cref{lem:bidding_g_convert}, but subject to a $\rho$-robust, $\chi$-consistent algorithm $\cA$. For each $P$, denote the aggregate measure of (normalized) bids of $\cA$ equipped with prediction $P$ as $\mu_{P}$. Without loss of generality, we discard the bids below $1/P$. From the robustness of $\cA$, the local bounds are reproduced in $\mu_P$ as well.
\begin{align}
\frac{1}{2} \le \mu_{P}([T, (2\rho - 1)T)) \le \rho(2\rho - 1)T, \quad T \ge \frac{1}{P}. \label{bid:algo:local_finiteness}
\end{align}
One can define the inverse `profile' $G_{P} : \bbR \to [0, \infty)$ in the same way. $G_P$ is still non-decreasing, left-continuous, and satisfies the offset and consistency condition. What lacks to be a bidding profile is that $G_P$ may be zero, and the robustness condition is restricted to its support. Denote $l_P := \sup\{x : G_P(x) = 0\}$, then $\rho$-robustness of the algorithm implies
\[
\int_{l_P}^{x+1} G_P(t)\dd t \le \rho G_P(x),\quad x >l_P.
\]
Now we take $G_P$ to the limit of $P \to \infty$. Note that $l_P$ is also unbounded. From \eqref{bid:algo:local_finiteness},
\[
-l_{P} = m((l_P, 0)) = \mu_{P}([1/P, 1)) \ge \frac{1}{2}\lfloor \log_{(2\rho - 1)}P\rfloor \to \infty,
\]
as $P \to \infty$. Therefore, define
\[
G_{\infty}(x) := \liminf_{P \to \infty} G_{P}(x).
\]
Therefore, for any $N > 0$, $l_{P} < -N$ for sufficiently large $P$. Note that $G_{\infty}$ is still non-decreasing, but not necessarily left-continuous.

From Fatou's lemma, for any infinitesimal $\varepsilon > 0$ and $x > -N + \varepsilon$,
\begin{align*}
\int_{-N}^{x+1-\varepsilon} G_{\infty}(t)\dd t &\le \liminf_{P \to \infty} \int_{-N}^{x+1-\varepsilon} G_{P}(t)\dd t \\&\le \liminf_{P \to \infty} \int_{l_P}^{x+1-\varepsilon} G_P(t) \dd t \\
&\le \liminf_{P \to \infty} \rho G_P(x-\varepsilon) = \rho G_{\infty}(x-\varepsilon).
\end{align*}
Letting $N \to \infty$ and $\varepsilon \downarrow 0$, we obtain \[
\int_{-\infty}^{x+1} G_{\infty}(t)\dd t \le \rho G_{\infty}(x-),
\]
for any $x \in \bbR$. Therefore, the left-continuous representative of $G_{\infty}$ preserves the $\rho$-robustness condition. The $\chi$-consistency condition follows analogously. Note that, from \Cref{rmk:bid:positivity}, $G_{\infty}$ is globally positive.  One subtle issue is the offset condition; $G_{\infty}(x)$ might be $1$ for some $x < 0$, but the robustness condition implies that $\tau := \sup\{x : G_{\infty}(x) < 1\}$ is nonpositive and finite. Putting \[
G(x) := G_{\infty}(\tau + x)\]
satisfies the offset condition additionally, securing all the other conditions.

Therefore, $G$ is a $(\rho, \chi)$-bidding profile, yielding a $\rho$-robust, $\chi$-consistent strategy by \Cref{cor:G_bidding}.